\renewcommand\paragraph[1]{\noindent{\bf #1} \hspace{0.1in}} 
\newcommand\Loadedframemethod{default}
\let\origthelstnumber\thelstnumber
\newcommand*\Suppressnumber{%
  \lst@AddToHook{OnNewLine}{%
    \let\thelstnumber\relax%
     \advance\c@lstnumber-\@ne\relax%
    }%
}
\newcommand*\Reactivatenumber{%
  \lst@AddToHook{OnNewLine}{%
   \let\thelstnumber\origthelstnumber%
   \advance\c@lstnumber\@ne\relax}%
}
\lstdefinestyle{nonumbers}
               {numbers=none}
\newcommand{\knotical}{\textsc{Knotical}}
\newcommand{\ocaml}{OCaml}
\newcommand{\symkat}{\textsc{Symkat}}
\newcommand{\symdiff}{\textsc{SymDIFF}}
\newcommand{\interproc}{\textsc{Interproc}}
\newcommand{\proj}[2]{\mathsf{proj}_{#1}(#2)}
\newcommand{\states}{\mathcal{S}}
\newcommand{\powset}[1]{\mathcal{P}(#1)}
\newcommand{\true}{\mathsf{true}}
\newcommand{\false}{\mathsf{false}}
\newcommand{\preformula}[3]{
	\ifthenelse{\equal{#1}{}}
	{
		\ifthenelse{\equal{#2}{}}
		{\varphi_{\textsf{pre}}^{#3}}
		{\varphi_{\textsf{pre},#2}^{#3}}
	}
	{
		\ifthenelse{\equal{#2}{}}
		{#1_{\textsf{pre}}^{#3}}
		{#1_{\textsf{pre},#2}^{#3}}
	}
}
\newcommand{\tlformula}[3]{
	\ifthenelse{\equal{#1}{}}
	{
		\ifthenelse{\equal{#2}{}}
		{\varphi_{\alpha}^{#3}}
		{\varphi_{\alpha,#2}^{#3}}
	}
	{
		\ifthenelse{\equal{#2}{}}
		{#1_{\alpha}^{#3}}
		{#1_{\alpha,#2}^{#3}}
	}
}
\newcommand{\postformula}[3]{
	\ifthenelse{\equal{#1}{}}
	{
		\ifthenelse{\equal{#2}{}}
		{\varphi_{\textsf{post}}^{#3}}
		{\varphi_{\textsf{post},#2}^{#3}}
	}
	{
		\ifthenelse{\equal{#2}{}}
		{#1_{\textsf{post}}^{#3}}
		{#1_{\textsf{post},#2}^{#3}}
	}
}
\newcommand{\progs}{\mathsf{Prog}}
\newcommand{\configs}{\mathsf{Config}}
\newcommand{\fault}{\mathsf{fault}}
\newcommand{\config}[2]{\langle{#1},{#2}\rangle}
\newcommand{\hypotheses}{\mathcal{A}}
\newcommand{\hypothesesB}{\mathcal{B}}
\newcommand{\hypothesesD}{\mathcal{D}}
\newcommand{\hypothesesclass}{\mathfrak{A}}
\newcommand{\refines}[2]{{#1}\preceq{#2}}
\newcommand{\absrefines}[4]{{#1}\preceq^{#3}_{#4}{#2}}
\newcommand{\abskrefines}[3]{{#1}\preceq^{#3}{#2}}
\newcommand{\abskhyporefines}[4]{{#1}\preceq^{#3}_{#4}{#2}}
\newcommand{\absstatekrefines}[5]{
	\ifthenelse{\equal{#5}{}}
	{
		{#1}\,\,{}_{\textsf{st}}\!\!\sqsubseteq^{#3}_{#4}{#2}
	}
	{
		{#1}\,\,{}^{#5}_{\textsf{st}}\!\!\sqsubseteq^{#3}_{#4}{#2}
	}
}
\newcommand{\interfaceFunc}{\textsf{RefRelation}}
\newcommand{\interfaceT}{\mathbb{T}}
\newcommand{\interfaceRel}{\interfaceT}
\newcommand{\interfaceOpStar}[1]{#1^{\star}}
\newcommand{\interfaceOpComp}{\odot}
\newcommand{\interfaceOpDisj}{\oplus}
\newcommand{\interfaceOpTrans}{\otimes}
\newcommand{\kat}{\mathcal{K}}
\newcommand{\bkat}{\mathcal{B}}
\newcommand{\kstar}[1]{{#1}^*}
\newcommand{\kneg}[1]{\bar#1}
\newcommand{\actions}{\textsf{P}}
\newcommand{\tests}{\textsf{B}}
\newcommand{\boolgr}{\textsf{BExp}}
\newcommand{\katgr}{\textsf{KExp}}
\newcommand{\grdef}{::=}
\newcommand{\katdiff}[3]{#1 \Delta_{#3} #2}
\newcommand{\katminus}[3]{#1 \setminus_{#3} #2}
\newcommand{\kM}[2]{\textsf{#1}_{\bf \textsf{#2}}}
\newcommand{\event}[1]{\texttt{\ttfamily\fontseries{b}\selectfont#1}}
\newcommand{\FALSE}{\textsf{false}}
\newcommand{\TRUE}{\textsf{true}}
\newcommand{\statetobool}[1]{\alpha(#1)}
\newcommand{\translate}[2]{\atranslate(#1,#2)}
\newcommand{\abstractionRefinement}{\sqsubseteq}
\newcommand{\abstractionCombinedRefinement}{\sqcup}
\newcommand\defeq{\stackrel{\triangle}{=}}
\newcommand{\asearch}{\textsc{Synth}}
\newcommand{\atranslate}{\textsc{Translate}}
\newcommand{\akatdiff}{\textsc{KATdiff}}
\newcommand{\asolvediff}{\textsc{SolveDiff}}
\newcommand{\arestrict}{\textsc{Restrict}}
\newcommand{\acexscore}{\textsc{Distance}}
\newcommand{\acexs}{\textsf{cexs}}
\newenvironment{theoremApp}[1]{{\noindent\bf Theorem \ref{#1} (restated).} \it}{}
\newenvironment{corollaryApp}[1]{{\noindent\bf Corollary \ref{#1} (restated).} \it}{}
\newtheorem{remark}{Remark}[section]
\newenvironment{itemize*}%
  {\begin{itemize}[leftmargin=10pt]%
    \setlength{\itemsep}{0.0in}%
    \setlength{\topsep}{0.0in}%
    \setlength{\parskip}{0.0in}}%
  {\end{itemize}}
\newenvironment{enumerate*}%
  {\begin{enumerate}%
    \setlength{\itemsep}{0.0in}%
    \setlength{\topsep}{0.0in}%
    \setlength{\parskip}{0.0in}}%
  {\end{enumerate}}
\definecolor{light-gray}{gray}{0.85}
\newcommand{\timos}[1]{\noindent{\color{magenta} Timos: #1}}
\newcommand\red[1]{{\color{red} #1}}
\newcommand\Eremoved[1]{{\color{red}\sout{#1}}}
\newcommand\Tadded[1]{{\color{olive} #1}}
\newcommand\Tremoved[1]{{\color{orange}\sout{#1}}}
\newcommand\Cremoved[1]{{\color{magenta}\sout{#1}}}
\newcommand\ignore[1]{}
\renewcommand\Eremoved[1]{}
\renewcommand\Tremoved[1]{}
\renewcommand\Cremoved[1]{}
\renewcommand\timos[1]{}
\renewcommand\todo[1]{}
\newcommand\ie{{\it i.e.}}
\newcommand\eg{{\it e.g.}}
\newcommand\FULL[1]{\red{(omitted content)}}
\newcommand\omittedproofref[1]{The proof can be found in Appendix #1 in the supplemental materials.}
\newcommand\omittedproofreflc[1]{the proof can be found in Appendix #1 in the supplemental materials.}
\renewcommand\omittedproofref[1]{(See Apx.~#1)}
\renewcommand\omittedproofreflc[1]{the proof is in Apx.~#1.}
\newcommand\numbenchmarks{37}
\begin{document}
\title[Trace Refinement Relations]{Specification and Inference of \\ Trace Refinement Relations}

\thanks{This work was supported in part by the National Science Foundation
(NSF) award \#1618542 and the Office of Naval Research (ONR) award \#N000141712787.}

\author{Timos Antonopoulos}
\affiliation{
  \institution{Yale University}            
}
\email{timos.antonopoulos@yale.edu}          

\author{Eric Koskinen}
\affiliation{
  \institution{Stevens Institute of Technology}
}
\email{eric.koskinen@stevens.edu}

\author{Ton-Chanh Le}
\affiliation{
  \institution{Stevens Institute of Technology}
}
\email{tchanle@stevens.edu}

\begin{abstract}
Modern software is constantly changing. Researchers and practitioners are
increasingly aware that verification tools can be impactful if they embrace
change through analyses that are compositional and span program versions.
Reasoning about similarities and differences between programs goes back to
Benton~\cite{benton}, who introduced state-based \emph{refinement relations},
which were extended by Yang~\cite{DBLP:journals/tcs/Yang07} and
others~\cite{GLP2017,UTS17}. However, to our knowledge, refinement relations
have not been explored for \emph{traces}: existing techniques, including
bisimulation, cannot capture similarities/differences between how two programs
behave over time.

We present a novel theory that allows one to perform compositional reasoning
about the similarities/differences between how fragments of two different
programs behave over time through the use of what we call \emph{trace-refinement
relations}. We take a reactive view of programs and found Kleene Algebra with
Tests (KAT)~\cite{kozen} to be a natural choice to describe traces since it
permits algebraic reasoning and has built-in composition. Our theory involves a
two-step semantic abstraction from programs to KAT, and then our trace
refinement relations correlate behaviors by (i) categorizing program behaviors
into \emph{trace classes} through KAT intersection and (ii) correlating atomic
events/conditions across programs with KAT hypotheses. We next describe a
synthesis algorithm that iteratively constructs trace-refinement relations
between two programs by exploring sub-partitions of their traces, iteratively
abstracting them as KAT expressions, discovering relationships through a custom
edit-distance algorithm, and applying strategies (i) and (ii) above. We have
implemented this algorithm as {\knotical}, the first tool capable of
synthesizing trace-refinement relations. It built from the ground up in
{\ocaml}, using {\interproc}~\cite{interproctool} and
{\symkat}~\cite{symkattool}. We have demonstrated that useful relations can be
efficiently generated across a suite of \numbenchmarks\ benchmarks that include
changing fragments of array programs, systems code, and web servers.

\end{abstract}

%
%
\begin{CCSXML}
<ccs2012>
<concept>
<concept_id>10003752.10003790.10003793</concept_id>
<concept_desc>Theory of computation~Modal and temporal logics</concept_desc>
<concept_significance>500</concept_significance>
</concept>
<concept>
<concept_id>10003752.10003790.10011192</concept_id>
<concept_desc>Theory of computation~Verification by model checking</concept_desc>
<concept_significance>500</concept_significance>
</concept>
<concept>
<concept_id>10003752.10010124.10010138.10010142</concept_id>
<concept_desc>Theory of computation~Program verification</concept_desc>
<concept_significance>500</concept_significance>
</concept>
<concept>
<concept_id>10003752.10003753.10010622</concept_id>
<concept_desc>Theory of computation~Abstract machines</concept_desc>
<concept_significance>300</concept_significance>
</concept>
</ccs2012>
\end{CCSXML}

\ccsdesc[500]{Theory of computation~Modal and temporal logics}
\ccsdesc[500]{Theory of computation~Verification by model checking}
\ccsdesc[500]{Theory of computation~Program verification}
\ccsdesc[300]{Theory of computation~Abstract machines}

%
%


\maketitle

\renewcommand{\shortauthors}{Antonopoulos et al.}



\section{Introduction}
\label{sec:intro}

Modern software changes at a rapid pace. Software engineering
  practices, \eg{}~ Agile, advocate a view that software is an evolutionary
  process, a series of source code edits that lead, slowly but surely, toward an
  improved system. Meanwhile, as these software systems grow, fragments of
  code are reused in increasingly many different contexts. To make matters
  worse, these contexts themselves may be changing, and code written under some
  assumptions today may be used under different ones tomorrow. With so many
  moving parts and adoption of formal methods an uphill battle, now more than
  ever, researchers and practitioners have found  compositionality and
  reasoning across versions~\cite{continuousreasoning,vmv}.
  to be indispensable.

Changes can be exploited for good purposes: they offer a sort of informal specification,
where programmers often view their new code in terms of how it has
deviated from the existing code (\ie\ a commit or patch), including
the removal of bugs, addition of new features, performance improvements, etc.
With compositional theories and tools, one can reuse previous analysis results for
unchanged code, and combine them with new analyses of only the changing code fragment.


It is therefore a natural question to ask: how does a
given program $C_1$ compare to $C_2$, a modified version of $C_1$?
If one is merely interested in knowing whether they are strictly
equivalent (or whether $C_2$ is contained within $C_1$),
this is a classical notion of \emph{concrete program refinement}~\cite{morgan1994}
and includes compiler correctness and
translation validation~\cite{transvalid}.
Intuitively, $C_1$ concretely refines $C_2$ provided that, when executed from the same initial state, they both reach the same final state.
Researchers have developed
  algorithms and tools (\eg~\cite{LahiriHKR2012,LahiriMSH2013,WDLE2017}) to check whether, say, two versions of a function return the same results.
Similarly, bisimulation provides \emph{equivalence} between how
  programs behave over time, perhaps accounting for different implementations.


Concrete refinement and bisimulation are not typically focused on how the programs differ, but simply whether or not they are equivalent.
In his canonical 2004 work~\cite{benton}, Benton weakened classical
refinement, allowing one to define equivalence relations over the state
space, so that the two programs reach the same output equivalence
relation when executed from states in a particular input
equivalence relation. Such equivalences allow one to
describe what differences over the states one does or does not care about, for
example, focusing on important variables or ignoring scratch data.
This strategy is compositional because one can
correlate the output relation of one code region with the input relation of
the next. Benton's work was
later extended by Yang~\cite{DBLP:journals/tcs/Yang07}
and others~\cite{GLP2017,UTS17}.

Benton's work focuses on \emph{state} 
relations. To the best of
our knowledge, no prior work has explored the question 
\emph{trace}-oriented relations to express
similarities/differences between the execution behaviors of $C_1$ and
$C_2$ over time. 
Examples include whether two programs send/receive messages in the
same order, follow the same allocation/release orders, or have certain
I/O patterns.
Encoding these kinds of properties in state relations with, \eg, history variables seems taxing, yet
 a trace approach would be appealing because it would be more granular and hopefully lead to more flexibility.

\paragraph{Toward trace refinement relations.}
We take a reactive
view of programs, treating their execution in terms of \emph{events},
which can be suitably defined in terms of statements, function calls,
I/O, etc.~as needed. 
We considered a few options for characterizing traces.
One choice would be a
temporal logic such as LTL or CTL, perhaps
using the LTL ``chop'' operator~\cite{DBLP:conf/stoc/BarringerKP84}
to support composition.
%
We found Kleene Algebra with Tests (KAT)~\cite{kozen} to be
a natural choice for a few reasons.
Briefly, KAT is an amalgamation of Kleene Algebra which (like regular expressions) has constructors for union +, concatenation $\cdot$, and star-iteration *, and boolean algebra which has boolean predicates and operations. KAT expressions consist of a combination of event symbols (herein denoted in uppercase: \textsf{A}, \textsf{B}, \textsf{C}) and boolean ``test'' symbols (herein denoted in lowercase: \textsf{a}, \textsf{b}, \textsf{c}).
One can write KAT expressions that mix event symbols with boolean test symbols, \eg, $(\textsf{a}\!\cdot\!\textsf{A})^*\!\cdot\!\textsf{C}\!\cdot\!\textsf{b}$.
A KAT can be used as a helpful abstraction of simple abstract syntax trees. For example, the KAT expression $(\textsf{b}\!\cdot\!\textsf{C}\;+\;\overline{\textsf{b}}\!\cdot\!\textsf{D})^*$ models a program that is a multi-path loop, branching on \textsf{b}.
%
In this way, programs can be abstracted into KAT expressions using
a syntactic translation~\cite{kozen,Kozen06} or, more generally using semantic translations, as we discuss later in this paper.
%
Also, KAT is appealing because it supports algebraic reasoning, has a natural composition
operator, allows hypothesis introduction,
and there is existing tool support such as {\symkat}~\cite{symkat}.
For KAT representations of programs, we can define
an analog of concrete program refinement, called
\emph{concrete KAT refinement} (see Def.~\ref{def:concretekrefine}), but
this is not enough to capture programs that have differences.

\paragraph{Contributions.} 
We present a novel theory of \emph{trace-refinement relations}, which weaken both the notion of concrete KAT refinement as well as state refinement relations~\cite{benton}.
Intuitively, the idea is to reason piece-wise and relate
classes of program $C_2$ behaviors (traces) to classes of $C_1$ behaviors. We identify
a class of $C_2$ traces by applying a trace \emph{restriction} and, for that class, we identify
an appropriate separate restriction that can be applied to $C_1$. We also 
provide abstractions over individual events so that we can identify
which atomic events in $C_2$ correspond to which atomic
events in $C_1$ as well as which atomic events are unimportant.
We treat a concrete program $C$ in terms of its traces by abstracting it---via an intermediate abstract program---to a KAT expression, denoted $k_C$.
  A \emph{trace-refinement relation} $\interfaceRel$ then characterizes the relationship between $C_2$ and $C_1$ in terms of their respective trace abstractions $k_{2}$ and $k_{1}$. Specifically, each element of $\interfaceRel$ is a tuple $(r_2,r_1,\hypotheses)$ embodying a trace class relationship:  a restriction $r_2$ for $C_2$, a restriction $r_1$ for $C_1$, and atomic event/condition abstractions as a set of KAT hypotheses $\hypotheses$.
The overall refinement condition is that, for each such
triple, KAT inclusion holds over the programs restricted and abstracted
trace-wise. Technically, we write 
$\abskrefines{C_2}{C_1}{\interfaceRel}{}{}$
to mean:
\[\begin{array}{c}
    \forall (r_2,r_1,\hypotheses) \in \interfaceRel.
    \;\; k_{2} \cap r_2 \leq_{\hypotheses} k_{1} \cap r_1 
     \textrm{ and }     \bigcup_{\proj{1}{\interfaceRel}} \supseteq k_{2}.
\end{array}\]
Here $\leq_\hypotheses$ denotes KAT inclusion (up to $\hypotheses$),
  restriction is achieved through intersection, and
the additional condition requires that $\interfaceRel$ accounts for all behaviors of $C_2$.
The latter requirement is not always enforced, as partial solutions are also useful.
Our treatment based on restrictions, if applied to state-based relations,
would be akin to taking the pre-relation to be
disjunctive and considering each disjunct one at a time.
Furthermore, notice there is no particular requirement on the relationship
between $r_2$ and $r_1$ within a tuple, affording flexibility in how to
relate the corresponding classes.
Finally, we have shown that our trace-refinement relations are 
composable (Thm.~\ref{thm:refinement-all}), permitting an analysis of
one pair of program fragments to be reused in many contexts and when the program is further changed.

The second part of our work introduces a novel algorithm that is able
to synthesize a trace-refinement relation $\interfaceRel$, given input programs
$C_1$ and $C_2$, such that $\absrefines{C_2}{C_1}{\interfaceRel}{}{}$. Overall, our search algorithm constructs $(r_2,r_1,\hypotheses)$
triples, by looking for restrictions that can be placed by $r_2,r_1$, on $C_2,C_1$ and relaxing their behavior
with new hypotheses $\hypotheses$.
Each stage of our algorithm is a candidate tuple of trace classes---that may need further restriction---and we proceed as follows.
First, we iteratively synthesize a two-step semantic abstraction $\alpha$ from a program
$C_1$ (and $C_2$) to a KAT expression $k_1$ (and $k_2$).
This semantic translation goes beyond
the syntactic $C$-to-$k$ translation of Kozen~\cite{kozen}, expressing
restrictions (via \texttt{assume}) and accounting for paths becoming infeasible.
Second, we check whether $k_2$ is included in $k_1$ using KAT reasoning, returning a counterexample string if not.
  Third, we use a custom edit-distance algorithm on such counterexamples to find relationships between cross-program trace classes, with a scoring scheme to correlate program behaviors.
Fourth, we employ case-analysis on branching in $C_1$ and $C_2$, in circumstances where the branching in one program prevents immediate inclusion in the other. This leads us to introduce more restrictions which---unlike~\cite{benton,DBLP:journals/tcs/Yang07,GLP2017,UTS17}---are not required to be of the initial state, but rather may appear
at program locations anywhere in the program.
When we are unable to continue refinement through case-analysis,
we introduce KAT hypotheses that either
treat unimportant events (\eg\ logging) as \textsf{skip}, or else equate
similar events.
Finally, we construct increasingly restricted trace classes used in subsequent iterations of the algorithm.


The goal of our work is to build foundations for compositional
  trace-based reasoning. To this end, we generate trace-refinement relations $\interfaceRel$ to capture the multitude of conditions and ways in which
  one code region $C_1$ can refine another region $C_2$, in order to produce results
  that are reusable.
  Additionally, synthesized trace refinement relations can also be used
  by experts. Like a syntactic diffing tool, the relations can guide them to
  understand how code has changed.


We have implemented our algorithm in a new tool called {\knotical}, that operates on
an input pair of C-like programs and
synthesizes trace refinement relations. {\knotical} is 
built from the ground up, in {\ocaml} using
{\interproc}~\cite{interproctool} for abstract interpretation,
{\symkat} to generate KAT counter examples~\cite{symkattool, symkat},
and our own edit distance algorithm.

We have evaluated our tool on a collection of \numbenchmarks\ benchmark
examples that we have built 
(Apx.~\ref{apx:fullresults}). Almost all examples
necessitate trace refinement relations that cannot be expressed using concrete
refinement or other prior techniques. The examples range from those
designed to exercise the various aspects of our approach (restriction,
hypotheses, edit distance, etc.), to broader examples including user
I/O, array access patterns, and reactive web servers
(\eg\ thttpd~\cite{thttpd} and merecat~\cite{merecat}).
Our evaluation
demonstrates that interesting and precise trace refinement relations
can be discovered.

\paragraph{Summary.} We make the following contributions:
\begin{itemize*}
\item A theory of \emph{trace}-refinement relations, going beyond prior \emph{state} refinement relations. (Sec.~\ref{sec:refinement})
\item A proof of composition. (Thm.~\ref{thm:refinement-all})
\item A novel synthesis algorithm that iteratively constructs trace-refinement relations. (Sec.~\ref{sec:algorithm})
  \item A proof of soundness. (Thm.~\ref{thm:sound})
\item The first tool for trace refinement relations. 
  (Sec.~\ref{sec:algorithm}-\ref{sec:eval})
\item A customized edit-distance algorithm for scoring and
  finding alignments between programs. (Sec.~\ref{sec:edit-distance})
\item A collection of benchmarks and experimental validation, demonstrating viability. 
  (Sec.~\ref{sec:eval})
\end{itemize*}

\paragraph{Related work.} To our knowledge, we are the
first to generalize Benton-style refinement~\cite{benton,DBLP:journals/tcs/Yang07} to
trace relationships.  Bouajjani \emph{et al.~}\cite{BEL2017}
have focused on concurrent loop-free programs. Their notion of
  refinement is not quite based on ``traces'' in the sense that we describe
  herein, but rather on graphs over the reads-from relation and
  program order.
More distantly related are bisimulations, hyper temporal
logics~\cite{ClarksonFKMRS14}
and 
self-composition~\cite{barthe2004secure,terauchi2005secure}.
(See Sec.~\ref{sec:relwork})

\paragraph{Limitations.}
We developed a theory for trace refinement
relations and, while KAT has worked well, it has also meant that we
were restricted to terminating programs. We leave possibly
non-terminating programs to future work. Our implementation was
also limited in the number of symbols due to 
{\symkat}~\cite{symkattool}'s use of \texttt{char} to represent symbols.

\vspace{-5pt}
\newcommand\evrecv{$\event{recv}$}
\newcommand\evlog{$\event{log}$}
\newcommand\evsend{$\event{send}$}
\newcommand\evcheck{$\event{check}$}
\newcommand\evwarn{$\event{warn}$}
\newcommand\evRep{$\event{Rep}$}

\begin{figure}
  \begin{tabular}{|l|l|}
    \hline
        {\bf Program $C_1$ } & {\bf Program } $C_2$ \\
        \hline
    \begin{minipage}{1.5in}
		\footnotesize
		\begin{program}[style=tt,number=true]
		  wh\tab ile(x $>$ 0) \{
		     m = \evrecv();
		     if (l) \evlog(m); \label{ln:c1log}
		     if\tab (m $>$ 0)  \{ \label{ln:c1m}
		        n = construct\evRep ly();
		        \evsend(n);
		        if (l) \evlog(n); \untab
		      \}
		      x--; \untab
		   \} 
		\end{program}
	\end{minipage}
	&
	\begin{minipage}{1.6in}
		\footnotesize
		\begin{program}[style=tt,number=true]
		  wh\tab ile(x $>$ 0) \{
		     m = \evrecv();
		     if\tab\  (m $>$ 0) \{
		        auth = \evcheck(m);  \label{ln:auth}
		        if\tab (auth $>$ 0) \{
		          n = construct\evRep ly();
		          \evsend(n); \untab
		        \} \untab
		     \} else \{ \evlog(m); \}
		     x--;  \untab
		  \}
		\end{program}
	\end{minipage}\\
	\hline
  \end{tabular}
  \caption{\label{fig:example} (Left) A simple reactive program $C_1$ that receives messages and sends replies. (Right) A modified version of $C_2$ with changes including the addition of authentication.}
\end{figure}


\newcommand\kkttK{\textsf{K}_{\event{check}}} 
\newcommand\kkttS{\textsf{S}_{\event{send}}} 
\newcommand\kkttO{\textsf{O}_\event{log}} 
\newcommand\kkttD{\textsf{D}_\event{nondet()}}
\newcommand\kkttR{\textsf{R}_\event{[-inf; inf]}}
\newcommand\kkttX{\textsf{X}_\event{x--}}
\newcommand\kkttG{\textsf{G}_\event{nondet()}}
\newcommand\kkttB{\textsf{B}_\event{C1(x1, c1)}}
\newcommand\kkttL{\textsf{L}_\event{log}} 
\newcommand\kkttN{\textsf{N}_\event{nondet()}}
\newcommand\kkttC{\textsf{C}_\event{Rep}}
\newcommand\kkttE{\textsf{E}_\event{recv}}
\newcommand\kkttT{\textsf{T}_\event{nondet()}}
\newcommand\kkttAny{\textsf{Any}}
\newcommand\kkttc{\textsf{c}_{\texttt{m}\!>\!0}} 
\newcommand\kkttd{\textsf{d}_{\texttt{auth}\!>\!0}}
\newcommand\kkttb{\textsf{b}_{\texttt{l}=\TRUE}} 
\newcommand\kktta{\textsf{a}_{\texttt{x}\!>\!0}}

\newcommand\intermatrix[7]{%
  \left[\begin{array}{cc} #1 \\ #4 \end{array}\right]%
  \begin{array}{cc} #2 \\ #5 \end{array}%
  \left[\begin{array}{cc} #3 \\ #6 \end{array}\right],%
  #7}

\section{Overview}\label{sec:overview}

Consider the two programs in Fig.~\ref{fig:example},
inspired by the Merecat project~\cite{merecat} which
enhanced the thttpd web server~\cite{thttpd} to support SSL connections.
We are interested in knowing how the the new program compares to the previous,
both of which involve typical web server behavior: alternately receives a request and sends a response.
This is illustrated by the two program fragments $C_1$
and $C_2$. The programs involve some differences, perhaps arising
from changes/edits that were made to $C_1$. 
There are still similarities: both programs involve a loop that iterates over
\texttt{x}, \evrecv{}ing messages and possibly \evsend{}ing responses.
On the other hand, $C_2$ only performs a \evlog{} when it \evrecv{}s an
\texttt{m} such that $\texttt{m}\leq 0$, and it additionally performs an \texttt{auth}orization \evcheck{}
on \texttt{m}.
In addition, $C_1$ only performs \evlog{}s when the flag \texttt{l} is enabled.



\subsection{Relating the programs' behavior over time}
\label{subsec:overview-partA}

%
We take a reactive view of programs, considering not only the
programs' local stack/heap state, but also the programs' I/O
side-effects. For simplicity in this paper we will work with 
stack variables and events shown as function calls (denoted \evrecv,
\evlog, etc.) but our work generalizes to heap structures.

We would like to express \emph{similarities} in how the
programs behave over time, such as alternation between \evsend{}
and \evrecv{}. We would like a
  theory to also tolerate the \emph{differences} between how the programs behave over time,
such as the \evrecv{}/\evsend{} behavior in $C_1$ versus the
\evrecv{}/\evcheck{}/\evsend{} behavior in $C_2$.
%
Intuitively, the theory we develop will need some way of
expressing \emph{restrictions} that can be placed on one program
(\eg\ \texttt{auth} is always greater than 0 in $C_2$) so that its traces
are included in the other (\eg\ \evlog{}-free traces of $C_1$), as
well as to provide abstractions that relate an event in one program
(\eg\ the \evsend{} event in $C_1$) to an analogous event in
the other (\eg\ \evsend{} in $C_2$).

\newcommand\ktx{\texttt{x}\red{fix}}
\newcommand\ktR{\texttt{R}\red{fix}}
\newcommand\ktl{\texttt{l}\red{fix}}
\newcommand\ktL{\texttt{L}\red{fix}}
\newcommand\ktm{\texttt{m}\red{fix}}
\newcommand\ktC{\texttt{C}\red{fix}}
\newcommand\ktS{\texttt{S}\red{fix}}
\newcommand\ktD{\texttt{D}\red{fix}}
Expressing properties of the way a program behaves over time motivates the need
for a suitable trace-oriented relational logic (as opposed to state relations~\cite{benton,DBLP:journals/tcs/Yang07,GLP2017,UTS17}). As discussed in Sec.~\ref{sec:intro}, we \ignore{In our work, we first considered temporal logics such as LTL and CTL. We}found that Kleene Algebra with Tests (KAT) was a natural choice~\cite{kozen}
for a few reasons, including that
the KAT constructors naturally abstract program entities. 
Our theory involves an abstraction from concrete programs $C_1,C_2$, via abstract programs to a representation as KAT expressions $k_1,k_2$, respectively (Sec.~\ref{subsec:ctokat}). 

The terms of a KAT expression are \emph{event} symbols or boolean
\emph{test} symbols. We introduce (uppercase) event symbols for program
statements such as ``$\kkttE$'' for \evrecv\ and (lowercase) boolean test
symbols for the integer expressions above such as ``$\kktta$'' for \texttt{x>0}.
(For ease of reading, we use subscripts to indicate which program expressions
correspond to the symbol.) Thus, the behaviors of programs $C_1$ and $C_2$ in
Fig.~\ref{fig:example} can be represented, respectively, as:
\newcommand\cdott{\!\cdot\!}
\begin{displaymath}
	\begin{array}{ll}
		k_1 \defeq& (\kktta(\kkttE(\kkttb\cdott\kkttO\;+\;\overline{\kkttb}\cdott1)\cdott(\kkttc\cdott\kkttC\cdott\kkttS \\
              & (\kkttb\cdott\kkttL\;+\;\overline{\kkttb}\cdott1)\;+\;\overline{\kkttc}\cdott1)\kkttX))^*\cdott\overline{\kktta}\\
		k_2 \defeq& (\kktta(\kkttE(\kkttc\cdott\kkttK\cdott(\kkttd\cdott\kkttC\cdott\kkttS\;\\
		&+\;\overline{\kkttd}\cdott1)\;+\; \overline{\kkttc}\cdott\kkttO)\cdott\kkttX)^*\cdott\overline{\kktta}
	\end{array}
\end{displaymath}
where ``1'' is the identity symbol in KAT, akin to \texttt{skip} in programs.
Note that composition $\cdot$ binds tighter than union $+$, and we use
overline (\eg\  $\overline{\kkttc}$) to indicate negation.
The above KAT symbols represent program statements, but we use KAT symbols more
generally as semantic entities.
%



\paragraph{Trace-refinement relations.} With KAT representations of programs, it is straight-forward to
  define \emph{concrete} KAT refinement (Def.~\ref{def:concretekrefine}
  for programs that are equivalent. However, not all behaviors of $C_1$ are
contained within $C_2$ (such as some logging in $C_1$) and vice-versa
(authorization failures in $C_2$) and so KAT refinement does not hold for Fig.~\ref{fig:example}. Nonetheless, we may still be interested in \emph{which} behaviors of $C_1$ are in $C_2$ and how one might correlate events in $C_1$ with $C_2$. We may want to describe how both programs have a substantially similar \evrecv/\evsend\ relationship. Imagine that we could somehow focus on the executions of $C_2$ in which \texttt{auth} was always greater than 0, somehow focus on the executions of $C_1$ that had no \evlog\ events (when \texttt{l} was always $\FALSE$), and finally on the executions of both programs where they \evrecv\ valid messages and thus $\texttt{m}>0$.  In that case, the programs would have the following \emph{more restricted} behaviors, represented as the following restricted KAT expressions:
\begin{align}
(\kktta(\kkttE(\kkttc\cdott\kkttC\cdott\kkttS)\kkttX))^*\overline{\kktta} &\leq k_1 & \label{eqn:sketchC1}\\
(\kktta(\kkttE(\kkttc\cdott\kkttK\cdott\kkttC\cdott\kkttS)\kkttX))^*\overline{\kktta} &\leq k_2 & \label{eqn:sketchC2}
\end{align}
The above equations are just \emph{classes} of trace behaviors of
$C_1$ and $C_2$, respectively, with $\leq$ denoting KAT inclusion.
If we could now further somehow \emph{ignore} the $\kkttK$ event, the above KAT expressions are equivalent. (In this case they are syntactically equivalent, but they could also be semantically equivalent.) Finding this correlation takes care of some behaviors of $C_1$.  Doing this for all behaviors of $C_1$ leads us to our trace-oriented notion of refinement relations.

We formalize this kind of reasoning into a weak (as
opposed to concrete) and compositional notion of refinement based on
what we call \emph{trace-refinement relations}. 
We consider one class of
traces of $C_2$ at a time like we did above in Eqns.~\ref{eqn:sketchC1} and~\ref{eqn:sketchC2}. We
translate programs into KAT expressions and then reason abstractly
about traces of $C_2$ by considering its corresponding KAT expression
$k_2$ and focus on particular behaviors by \emph{restricting}
behaviors---also described as another KAT expression $r_2$---with
intersection: $k_2 \cap r_2$. For this restricted behavior of $k_2$, it
is then often helpful to restrict $k_1$ (which corresponds to $C_1$)
with a perhaps rather unrelated $r_1$. Then we can ask whether
equivalence holds between $k_2\cap r_2$ and $k_1 \cap r_1$. Returning to the
running example, we can consider the class of traces of $C_2$ in which
\texttt{auth} is always greater than 0 by letting
\begin{equation}\label{eqn:lauth}
r_2= (\kktta(\kkttAny\cdott\boxed{\kkttc}\cdott\boxed{\kkttd}))^*\cdott\overline{\kktta}
\end{equation}
where $\kkttAny$ is shorthand for the disjunction of all event symbols in the KAT at hand. This restriction allows behaviors of the program where after any event, both $\texttt{m}>0$ and $\texttt{auth}>0$.
We can use
this restriction to focus on $k_2\cap r_2$.
Similarly we can restrict $C_1$ to the classes of traces that do not
involve logging by letting
\begin{equation}
r_1=(\kktta(\kkttAny\cdott\boxed{\overline{\kkttb}}))^*\cdot\overline{\kktta}, 
\end{equation}
requiring that $\overline{\kkttb}$ holds on every iteration of the loop.
With these restrictions in place, we get Eqns~\ref{eqn:sketchC1} and~\ref{eqn:sketchC2} above.

In some cases, we can witness classes of traces in $C_2$ that are in $C_1$
simply with a pair of restrictions.  However,
restrictions are not the only way that we relate $k_2$ to
$k_1$. Looking at Eqns~\ref{eqn:sketchC1} and~\ref{eqn:sketchC2}, there is still the discrepancy that
the $\kkttK$ event occurs in  $C_2$ but not $C_1$. Since we are already focused on a case where \texttt{auth} is always greater than 0, the $\kkttK$ event is not so important. We can ignore such unimportant events by introducing additional \emph{hypotheses} into the KAT. In this case, we can introduce the hypothesis ``$\kkttK=1$,'' and we finally have the KAT relationship $(k_2\cap r_2) \equiv_{\{\kkttK=1\}} (k_1\cap r_1)$.
Our choice of working with KAT enables us to exploit algebraic reasoning and so we can introduce hypotheses for other purposes too. It is often convenient to let syntactically identical statements between $C_2$ and $C_1$ use the same KAT symbol. In other cases, we may prefer not to, but we can introduce KAT hypotheses to instead selectively relate statements.

\paragraph{Putting it all together.}
As discussed so far, we have only considered one class of $C_2$ traces
and there are of course many others. Ultimately, we will collect a set
$\interfaceRel=\{(r_1,r_2,\hypotheses),\allowbreak(r'_1,r'_2,\hypotheses'),\ldots\}$, each triple considering
different cases. Overall, we use the notation
$\absrefines{k_2}{k_1}{\interfaceRel}{}{}$ to mean that $(k_2\cap r_2)
\leq_{\hypotheses} (k_1\cap r_1)$ holds of each triple and that if we union over
the first projection of $\interfaceRel$, we have taken care of every
possible behavior of $k_2$.
%
Notice that for ``weak'' completeness
we could always add a triple
$(1,1,\hypotheses^\top)$ where $\hypotheses^\top$ maps every single symbol to 1 (\texttt{skip}). The goal is instead to generate \emph{useful and precise} trace-refinement relations. To this end, our algorithm and implementation favor searching for complex restrictions and resort to hypotheses only when necessary.


%
%
To the best of our knowledge, one cannot describe these kinds of trace-based
refinement relations in prior works such as Benton~\cite{benton} and
  bisimulation (at least not without extraordinary effort to represent trace behavior with complicated ghost variables). Nonetheless, there are many intuitive trace-based properties that one would like to express along the lines of \emph{the new program alternately receives and sends messages like the original},  \emph{the new program additionally performs an authorization check after each receive}, and so on.
%


\emph{Remark: Path Sensitivity.} KAT  has facilities to express many well-known tricks for
increasing path sensitivity such as loop unrolling, trace
partitioning~\cite{MR2005}, control-flow refinement~\cite{GJK2009}.
For lack of space, we omit examples such as  alignments  between different loop iterations.   

\subsection{Composition, Contexts, Spanning Versions}

  So far we have discussed reasoning about a change from 
  $C_1$ to $C_2$, while the context remains fixed.
  But what about the context of $C_1$?
  A single fragment $C_1$ can be used in many different contexts within a large program. Thus, when $C_1$ is changed to $C_2$, there is benefit to performing a single analysis that considers all possible contexts, rather than considering how $C_1$ and $C_2$ relate in each context~\cite{continuousreasoning}.
  This approach also allows us to cope with the fact that the context \emph{itself} may change.
  Consider, for example, programs:
  \[\begin{array}{ccc}
      P_1 = \textsf{A}\!\cdot\!\textsf{b}\!\cdot\! C_1 \!\cdot\!
              (\textsf{g}\!\cdot\!\textsf{G}+\textsf{!g}\!\cdot\! 1)&\textrm{and}&
      P_2 =  \textsf{A}\!\cdot\!\textsf{b}\!\cdot\! C_2 \!\cdot\!
              (\textsf{g}\!\cdot\!\textsf{G}+\textsf{!g}\!\cdot\! 1)\\
      \end{array}\]
    In these programs, we need to know how $C_1$ and $C_2$ relate
    in a context of boolean symbol \textsf{b} and how they
    may impact boolean symbol \textsf{g}. If this were the only
    context we were concerned about, we could focus on search for refinement
    relations to those that assume \textsf{b}. However, what if the
    program
    is then changed to
$      P_3 =  \textsf{A}\!\cdot\!\textsf{m}\!\cdot\! C_2 \!\cdot\!
              (\textsf{g}\!\cdot\!\textsf{G}+\textsf{!g}\!\cdot\! 1)$,
    where \textsf{b} is no longer in the context, but \textsf{m}
    is? If we are context agnostic in our analysis in our $P_1$-vs-$P_2$ analysis
    then we can reuse refinement relations for reasoning
    about $P_2$-vs-$P_3$.

    Returning to
    Fig.~\ref{fig:example}, fragments $C_1$
    and $C_2$ may be used in different contexts. Perhaps in one context it is important that
    \emph{all failed connections are logged} and we want to ensure a
    change from $C_1$ to $C_2$ preserves this property. In that case 
    we need a refinement relation that does not ignore
    \evlog\ events, and assume that the context of $C_1$ ensures $\textsf{l}=\textsf{true}$.
    Formally, we would have the tuple
    $$
    ((\kkttb \cdot\overline{\kkttc} \cdot \kkttAny)^*,
    (\overline{\kkttc} \cdot \kkttAny)^*, \hypotheses_{log})
    \in \interfaceRel{}
    $$
    This  restricts to traces of $C_1$ where
    logging is enabled and all connections fail and restricts
    traces of $C_2$ to those where all connections fail. Moreover, we
    require a set of hypotheses $\hypotheses_{log}$ which does not imply that $\kkttO=1$.

    In a different context, other relations would be useful. As noted earlier,
    this example comes from a change that added SSL support to thttpd~\cite{thttpd}.
    Therefore,
    we may wish to have a refinement relation, specifying that \emph{as long as
      all messages are authenticated in $C_2$, then it behaves the same as $C_1$}.
    Our theory allows one to express this relationship written, formally as:
    \[
         (\kkttAny^*, (\kkttd \cdot \kkttAny)^*, \{
         \kkttO=1, \kkttL=1, \kkttK=1\}) \in \interfaceRel{}
    \]
    Here, $C_1$ is unrestricted, $C_2$ is focused only on executions that are authorized,
    and we use a set of hypotheses that ignores all {\evlog} events and ignores the \evcheck\ event in $k_2$.


It is not hard to see that our formalism can capture other more complicated contexts, such
as an outer loop. More broadly, encompasing all of KAT, we have proved that our trace-refinement relations
are \emph{compositional} (Thm.~\ref{thm:refinement-all}), allowing us to reason about the overall trace-refinement, by considering pairs of program segments at a time. 

\subsection{Automation}\label{subsec:overview-automation}




\paragraph{From $C$ to $k$ and back.} Before we present
our main algorithm,
we need to translate back and forth between a program $C$ and its
corresponding KAT expression $k$. The former lets us learn fine-grained details
about the behavior of the program, while the latter lets us perform
coarse-grained cross-program comparisons. To get $k$ from $C$, we
exploit program semantics to obtain precise
KAT expressions, \eg\ excluding infeasible paths.
%
Technically, we work with an (iteratively refined)
two-step abstraction function $\alpha$ that takes
%
concrete states, via abstract states, to symbols in the KAT boolean subalgebra,
using a procedure called $\atranslate : (C,\alpha) \rightarrow \kat$ (see Sec.~\ref{subsec:ctokat}).
%
Our abstraction is not a one-way process:
our refinement search (discussed below) involves discovering various
classes of traces of $k_1$, each expressed as a restriction $k_1\cap r_1$.
For example, we may consider traces of $C_2$ in which \texttt{auth} in
Line~\ref{ln:auth} is always positive. Our algorithm
discovers the restriction $r_2$ in Eqn.~\ref{eqn:lauth},
%
and uses a subprocedure
  $\arestrict(C_1,r_1,C_2,r_2,\hypotheses,\alpha)$
  to instrument these restrictions via
a source code transformation.
We now need to translate this $r_2$ back into the program $C_2$ so
that we can explore more fine-grained behaviors of $C_2$ with help from
abstract interpretation.
This is denoted
$\arestrict(C_1,r_1,C_2,r_2,\hypotheses,\alpha)$
and our implementation represents these restrictions via
a source code transformation, using a form of a product program in which we instrument \texttt{assume} statements.
As our algorithm
continues to search for more fine-grained classes of traces where
refinement holds, it may iteratively instrument more assumptions and
continue to refine the abstraction $\alpha$ (maintaining a
monotonicity constraint).

\paragraph{Overall algorithm:} 

\noindent
\scalebox{0.62}{
  \includegraphics[width=5.5in]{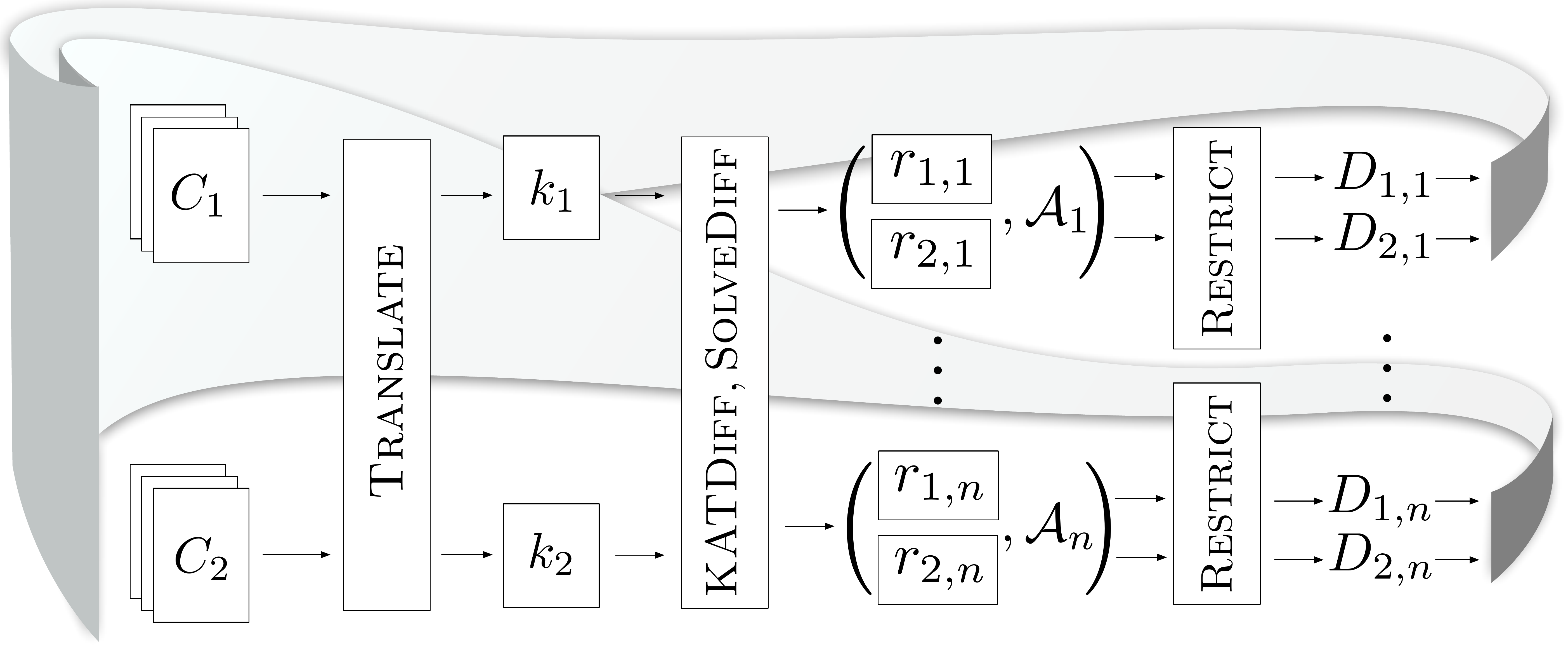}
}
The above is a depiction of our overall algorithm (Sec.~\ref{sec:algorithm}) that synthesizes trace-refinement relations by attempting to discover increasingly granular trace classes of $C_1$ that are included in trace classes of $C_2$.
Each such partial solution is a
triple $(r_1,r_2,\hypotheses)$ such that ${k_1\cap r_1}\leq_{\hypotheses}{k_2\cap r_2}$ (Def.~\ref{def:kat-refinement}), where $k_1$ corresponds to $C_1$ and $k_2$ to $C_2$.



At the high level, each iteration of the algorithm is a recursive call
({\asearch}), where we are exploring a
region of the solution space where $C_1$ has possibly been restricted,
$C_2$ has possibly been restricted and a collection of KAT hypotheses
$\hypotheses$ are in use. Moreover, we have a current abstraction
$\alpha$ from programs to $\kat$.
Each iteration of our algorithm proceeds by calculating
the aforementioned KAT abstractions $k_1$ and $k_2$ of the (possibly already restricted) programs ({\atranslate}).
Next, we consider whether $k_1$ is included in or is equivalent to $k_2$, under the
current set $\hypotheses$ of hypotheses (\akatdiff).
To check this refinement, we build on the recent work of
Pous~\cite{symkat}, using his tool {\symkat}~\cite{symkattool}.  If
this refinement holds, then the algorithm returns this triple
$(k_1,k_2,\hypotheses)$ as a solution that may be
assembled with others into a complete solution by previous calls to
{\asearch}.
Alternatively, if the inclusion/equivalence does not hold, we employ a
sub-procedure ({\asolvediff}) to decide, based on the counterexamples
and the overall KAT expressions, whether to (i) introduce restrictions $(r_{1,i},r_{2,i})$ and/or (ii) introduce hypotheses $\hypotheses_i$. In Secs.~\ref{sec:algorithm} and~\ref{sec:edit-distance} we discuss how the sub-procedure employs a custom edit distance algorithm for this purpose. 
Finally, the restrictions $r_{1,i}$ and $r_{2,i}$ are instrumented back into the programs ({\arestrict}), to produce new programs $D_{1,i}$ and $D_{2,i}$ that are considered recursively.
We have proved that our algorithm for generating trace-refinement relations is sound (Thm.~\ref{thm:sound}). Weak completeness is less interesting because any partial solution can easily be made a complete solution through aggressive use of hypotheses. We note that we do not expect from our algorithm to be able to generate every possible solution. Hence, we are more interested in generating precise and useful trace-refinement relations.

\newcommand\inCCauth[1]{\textsf{asm}(#1)@C_2\!\!:\!\!\ell_{\ref{ln:auth}}}
\newcommand\inClog[1]{\textsf{asm}(#1)@C_1\!\!:\!\!\ell_{\ref{ln:c1log}}}
\newcommand\inCm[1]{\textsf{asm}(#1)@C_1\!\!:\!\!\ell_{\ref{ln:c1m}}}
\newcommand\evlogC{\evlog_{C_1}}
\newcommand\evlogCC{\evlog_{C_2}}

\begin{figure}
  \input{solution}
  \caption{\label{fig:solution} Output of {\knotical}: A trace-refinement relation $\interfaceRel$ such that $\absrefines{C_2}{C_1}{\interfaceRel}{}{}$ for the example in Fig.~\ref{fig:example}.}
\end{figure}

\subsection{The {\knotical} Tool}

We have developed a prototype tool {\knotical} that implements our
algorithms and is the first tool capable of synthesizing trace-refinement relations. Our tool is built from the
ground up written in {\ocaml} and uses {\interproc}\cite{interproctool}
for abstract interpretation and {\symkat}~\cite{symkat} for
symbolically checking KAT expression equalities and inclusion.

Fig.~\ref{fig:solution} illustrates one of the 75 solutions found by {\knotical} when run on the example in Fig.~\ref{fig:example}. The synthesized trace refinement relation $\interfaceRel$ has five tuples. This output illustrates the restrictions using the notation ``$\inCCauth{\kkttd}$'' meaning, for example, that we instrument an \texttt{assume(auth>0)} on line~\ref{ln:auth} of $C_2$. Notice that {\knotical} has considered various case splits, based on these three boolean conditions. It begins with the conditions in the left-hand side ($C_2$) and needs to discover at least one solution in $C_1$ for each case. When \texttt{auth>0}, {\knotical} introduces hypotheses to ignore \evlog\ events in either program and the \evcheck\ event in $C_2$. Otherwise the \evlog\ event does not occur in $C_2$ so it needn't be ignored.

%


\paragraph{Edit-distance for refinement.}
During the algorithm, when considering whether
the current $k_1$ refines $k_2$, {\symkat} may find that it doesn't
and return a counterexample of a string $w_1$ that is in $k_1$ but not
$k_2$ (and $w_2$, vice-versa). Returning to the running example, such
a pair might be
$w_1=\kktta\cdott\kkttE\cdott\overline{\kkttb}\cdott\overline{\kkttc}\cdott\kkttX\cdott\overline{\kktta}$
and
$w_2=\kktta\cdott\kkttE\cdott\overline{\kkttb}\cdott\overline{\kkttc}\cdott\kkttO\cdott\kkttX\cdott\overline{\kktta}$. These
counterexamples give us information as to how $k_1$ and $k_2$
diverge.
%
Our algorithm departs from a traditional counterexample-guided
approach and instead is able to
consider not only the entirety of counterexample strings $w_1$
and $w_2$, but also the  KAT expressions
$k_1$ and $k_2$, in order to find a better correlation between the
two.  
It is easy for a human reader to see that the relationship between
$k_1$ and $k_2$ fits better, when the * expression in $k_1$ is correlated with the * expression in $k_2$.
%
%
To this end,
we developed a custom edit-distance algorithm~\cite{treeedit} (see Sec.~\ref{sec:edit-distance}). 

\paragraph{Evaluation.}
We created a series of \numbenchmarks\ benchmarks for most of
which, trace-based refinement relations cannot be expressed
in prior formalisms (Sec.~\ref{sec:eval}). 
On most benchmarks, our tool was able to generate a non-trivial trace-refinement relation in seconds or fractions of a second.


\vspace{-5pt}
\vspace{-7pt}
\section{Preliminaries}\label{sec:preliminaries}

\paragraph{Strings, Sets, Composition, Programs}
A string $s$ over an alphabet $\Sigma$ is a sequence $s_1\cdot s_2\cdots s_n$ of symbols $s_i\in\Sigma$, for $i\in[1,n]$.
Given sets $S_1,\ldots,S_n$, a set $S\subseteq S_1\times S_2\times\ldots S_n$, and an element $s=(s_1,\ldots,s_n)\in S$ we denote with $\proj{i}{s}$ the projection of $s$ to its $i$-th element $s_i$ in $S_i$. We abuse notation, denoting as $\proj{i}{S}$ the set $\{s_i\in S_i\mid s_i\in\proj{i}{s}, s\in S\}$.
%

We assume a set $\progs$ of (essentially imperative) \emph{programs} operating on a set $\states$ of \emph{states}.  We assume a distinguished ``error state'' $\fault \in \states$. A \emph{configuration} is a pair $\config{C}{\sigma}$, where $C$ is a program and $\sigma$ a state; we write $\configs$ for the set of all configurations. We assume a binary relation $\mathord{\leadsto} \subseteq \configs \times \states$ capturing the ``big step'', nondeterministic operational semantics of our programs; $\config{C}{\sigma} \leadsto \rho$ means that executing program $C$ in initial state $\sigma$ can result in the final state $\rho$.

\paragraph{Kleene Algebra with Tests.}
We use KAT~\cite{kozen} to represent classes of traces
within a program. A \emph{Kleene Algebra with Tests} $\kat$ is a two-sorted
structure $(\Sigma,\bkat,+,\cdot,\kstar{},\kneg{\;},0,1)$, where
$(\Sigma,+,\cdot,\kstar{},0,1)$ is a Kleene algebra,
$(\bkat,+,\cdot,\kneg{\;},0,1)$ is a Boolean algebra, and $(\bkat,+,\cdot,0,1)$
is a sub-algebra of $(\Sigma,+,\cdot,0,1)$. We distinguish between two sets of
symbols: set $\actions$ for primitive actions, and set $\tests$ for primitive tests. The grammar of boolean test expressions is:
$\boolgr\grdef b\in\tests\mid b_1\cdot b_2\mid b_1+b_2\mid \kneg{b}\mid 0\mid 1$
and we define the grammar $\katgr$ of KAT expressions as:
$$\katgr\grdef p\in\actions\mid b\in\boolgr\mid k_1\cdot k_2\mid k_1+k_2\mid \kstar{k}\mid 0\mid 1$$
The free Kleene algebra with tests over $\actions\cup\tests$, is obtained by
quotienting $\boolgr$ with the axioms of Boolean algebras, and $\katgr$ with the
axioms of Kleene Algebra. For $e,f\in \kat$, we write $e\leq f$ if $e+f= f$, and
all Kleene Algebras with Tests $\kat$ we consider here are $*$-continuous, where
any elements $a,b,c$ in $\kat$, satisfy the axiom $a\cdot b^*\cdot
c=\sum_{n\in\mathbb{N}}a\cdot b^n\cdot c$~(\cite{Kozen90}). By convention we use
lower case letters for test symbols and upper case letters for actions. We may
also abuse notation, writing program conditions and statements rather than
boolean symbols and action symbols (in which case we implicity create symbols
for each). For Fig.~\ref{fig:example} booleans include
$\tests = \{\kktta, \kkttb\}$,
actions include 
$\actions=\{ \kkttO, \kkttE \}$,
and
$
k=...(\kkttb\!\cdot\!\kkttO + \overline{\kkttb}\!\cdot\!1)... \in \kat
$.
\begin{definition}[Intersection]
	Given a KAT $\kat$ and two of its elements $k_1$ and $k_2$ we define $k_1\cap k_2$ to be equal to $l_1+\ldots +l_n+\ldots$, where $\{l_i\}_{i\in\mathbb{N}}$ is the set of all elements $l_i$ in $\kat$ such that $l_i\leq k_1$ and $l_i\leq k_2$.\footnote{Notice that for any two KAT expressions $k_1+\ldots+k_n$ and $l_1+\ldots+l_m$ over a KAT $\kat$, for $n,m\in\mathbb{N}$, there is a finite number of elements $h_1+\ldots+h_r$, for $r\in\mathbb{N}$ such that $(k_1+\ldots+k_n)\cap(l_1+\ldots+l_m)\equiv h_1+\ldots+h_r$.
Since we never start with a KAT expression as an infinite disjunction in what follows, any time we talk about the intersection of two KAT expressions as a disjunction of KAT elements, we will refer to such a finite disjunction.
}
\end{definition}


For KAT expressions $k_1,k_2$ and $l$, and a set of hypotheses $\hypotheses$, we write $l\in \katminus{k_1}{k_2}{\hypotheses}$ if $l\leq_{\hypotheses} k_1$ and $l\not\leq_{\hypotheses} k_2$. Similarly, we write $l\in\katdiff{k_1}{k_2}{\hypotheses}$ if $l\in \katminus{k_1}{k_2}{\hypotheses}$ or $l\in \katminus{k_2}{k_1}{\hypotheses}$. Finally, for two KATs $\kat_1$ and $\kat_2$, we denote with $\kat_1\cup\kat_2$ the smallest KAT that contains both $\kat_1$ and $\kat_2$. Finally, when we refer to strings we mean KAT strings, which are KAT expressions where only the concatenation operation is used.


\paragraph{Program Refinement.}
Program refinement is a classical concept~\cite{morgan1994} and can be formulated in different ways, depending on the context. 
Often, the usual notion of refinement is too concrete because it does not consider the context in which $C_1$ and $C_2$ are used. Benton~\cite{benton} introduced a weaker notion of refinement, parameterized by an input relation between the states of the two programs as well as an output relation. We call this an \emph{interface}, which is an equivalence relation on the set of states $\states$ and defined as follows:
\begin{definition}
\label{def:absrefines}
For interfaces $I,O$ and programs $C,C'$,  we say \emph{$C'$ refines $C$ w.r.t.\ $(I,O)$}, written $\absrefines{C'}{C}{I}{O}$, if the following two conditions are met, for all states $\sigma,\sigma'$ such that $I(\sigma,\sigma')$:
\begin{enumerate}
\item if $\config{C'}{\sigma'}\leadsto \fault$, then $\config{C}{\sigma}\leadsto \fault$;
\item if $\config{C'}{\sigma'} \leadsto \rho'$, then either there exists $\rho$ such that $\config{C}{\sigma} \leadsto \rho$ and $O(\rho,\rho')$, or else $\config{C}{\sigma}\leadsto \fault$.
\end{enumerate}
We say that \emph{$C'$ (concretely) refines $C$}, written $\refines{C'}{C}$, when $\absrefines{C'}{C}{\textsf{id}}{\textsf{id}}$ where $\textsf{id}$ is the \emph{identity relation}.\footnote{Benton used the notation $\vdash C \stackrel{}{\thicksim} C' : I \Rightarrow O$ whereas we use notation by James Brotherston (personal communication)}
\end{definition}
\noindent
Yang~\cite{DBLP:journals/tcs/Yang07} extended Benton's work to express relational heap properties using a variant of Separation Logic~\cite{seplogic}.


\vspace{-5pt}
\section{KAT Representations and Refinements}
\label{sec:refinement}
 In this section we discuss a two-step semantic abstraction (Sec.~\ref{subsec:ctokat}), trace refinement and trace-refinement relations (Sec.~\ref{subsec:kat-refinements}), and composition results (Sec.~\ref{subsec:composition}).


\subsection{Abstracting programs into KAT expressions}\label{subsec:translation}
\label{subsec:ctokat}

We describe how to abstract a while-style program $C$ to
  a KAT expression $k$ over a KAT $\kat$.
We parameterize such a translation by an abstraction $\alpha$ used for both abstracting concrete states of the program to abstract states, as well as the latter to elements of the boolean subalgebra of $\kat$. More concretely, given a program $C$ over a set of states $\states$, we define $\alpha$ to be a tuple $(\kat, A_S, \alpha_S,\alpha_B)$, where $\kat$ is a KAT, $A_S$ is a set of abstract states, $\alpha_S$ is a mapping from $\states$ to $A_S$ corresponding to the program abstraction given by the abstract interpretation, and $\alpha_B$ is a mapping from $A_S$ to $\bkat$, the boolean subalgebra of $\kat$. Additionally, we require that for any $b\in\bkat$, there is a set of states $\{a_1,\ldots,a_n\}\in A_S$ such that $b\equiv\alpha_B(a_1)+\ldots+\alpha_B(a_n)$. When $\kat$ and $A_S$ are clear from the context, we write $\alpha=\alpha_B\circ\alpha_S$.

With such an abstraction $\alpha=(\kat, A_S, \alpha_S, \alpha_B)$ as a parameter, we say a translation from $C$ to a KAT expression $k\in\kat$ is \emph{valid} (resp. \emph{strongly valid}), if for any states $\sigma,\rho\in\states$, $\config{C}{\sigma}\leadsto\rho$ only if (resp. if and only if) $\alpha_B(\alpha_S(\sigma))\cdot k\cdot \alpha_B(\alpha_S(\rho))\not\equiv 0$. We assume a procedure $\translate{C}{\alpha}$ that returns $k\in\kat$ and the translation from $C$ to $k$ is valid (Sec.~\ref{sec:algorithm} for an implementation). Finally, we will (Sec.~\ref{sec:algorithm}) iteratively construct abstractions and thus need the following notion of refinement over abstractions:

\begin{definition}[Refining abstractions]
	For two abstractions $\alpha=(\kat,A_S,\alpha_S,\alpha_B)$ and $\alpha'=(\kat',A_S',\alpha_S',\alpha_B')$ over the same set of concrete states $\states$, we say that $\alpha'$ \emph{refines} $\alpha$, and write it as $\alpha'\abstractionRefinement\alpha$, if $\kat$ is a subalgebra of $\kat'$ and for any state $\sigma\in\states$, $\alpha_B'(\alpha_S'(\sigma))\leq \alpha_B(\alpha_S(\sigma))$.
\end{definition}


\noindent
Let $\alpha_1=(\kat_1,A^1_S,\alpha^1_S,\alpha^1_B)$ and $\alpha_2=(\kat_2,A^2_S,\alpha^2_S,\alpha^2_B)$ be two abstractions, both refining an abstraction $\alpha$ with Boolean algebra $\bkat$.
By $\alpha^1_S\times\alpha^2_S$ we denote the function from $\states$ to $A^1_S\times A^2_S$, that maps a state $\sigma\in\states$ to $(\alpha^1_S(\sigma),\alpha^2_S(\sigma))$. Further, we define $\alpha^1_B\cdot\alpha^2_B$ to be the function from $A^1_S\times A^2_S$ to $\bkat$ that maps a tuple $(a_1,a_2)\in A^1_S\times A^2_S$ to $\alpha^1_S(a_1)\cdot \alpha^2_S(a_2)$ in $\bkat$. The \emph{combined abstraction} of $\alpha_1$ and $\alpha_2$, written $\alpha_1\abstractionCombinedRefinement\alpha_2$, is defined to be the abstraction $(\kat_1\cup\kat_2,A^1_S\times A^2_S,\alpha^1_S\times\alpha^2_S,\alpha^1_B\cdot\alpha^2_B)$.

\subsection{KAT refinements}
\label{subsec:kat-refinements}

With abstractions from programs to KAT expressions in hand, we now first define \emph{concrete} KAT refinement, and then our notion of trace-refinement \emph{relations} (Def.~\ref{def:kat-refinement}). 

\begin{definition}[Concrete KAT refinement]\label{def:concretekrefine}
	Let $k_1$ and $k_2$ be two KAT expressions over $\kat$. We say that $k_1$ \emph{concretely refines} $k_2$, and denote it by $\refines{k_1}{k_2}$, if for any $b,d\in\bkat$:
	\begin{enumerate}
		\item $b\cdot k_1\equiv 0$ implies $b\cdot k_2\equiv 0$,
		\item $b\cdot k_1\cdot d\not\equiv 0$ implies $b\cdot k_2\cdot d\not\equiv 0$, or $b\cdot k_2\equiv 0$.
	\end{enumerate}
\end{definition}



The following  relates concrete trace refinement, via abstraction, back to concrete program refinement \omittedproofref{\ref{apx:omitted}}.

\begin{theorem}\label{thm:concrete-and-kat-concrete-refinement}
	Let $C_1$ and $C_2$ be two programs, and let $k_1$ and $k_2$ be the two KAT expressions obtained from a strongly valid translation of the two programs respectively, under some abstraction $\alpha$. Then it holds that $\refines{C_1}{C_2}$ if and only if $\refines{k_1}{k_2}$.
\end{theorem}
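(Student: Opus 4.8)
The plan is to prove both implications by unfolding the two definitions and using the strongly valid translation as a dictionary between the operational relation $\leadsto$ and (in)equalities in the KAT. Strong validity provides, for the common abstraction $\alpha=\alpha_B\circ\alpha_S$, the equivalence $\config{C_i}{\sigma}\leadsto\rho \iff \alpha(\sigma)\cdot k_i\cdot\alpha(\rho)\not\equiv 0$ for each $i\in\{1,2\}$, together with the reading that an error run $\config{C_i}{\sigma}\leadsto\fault$ is exactly the case in which no terminating behavior survives, i.e.\ $\alpha(\sigma)\cdot k_i\equiv 0$ (error = \texttt{assume false} = $0$). With this dictionary, condition (1) of concrete program refinement (Def.~\ref{def:absrefines} with $I=O=\textsf{id}$) is meant to line up with condition (1) of concrete KAT refinement (Def.~\ref{def:concretekrefine}), and condition (2) of program refinement---whose escape clause is ``or $\config{C_2}{\sigma}\leadsto\fault$''---with the escape clause ``or $b\cdot k_2\equiv 0$'' of KAT condition (2).

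For the direction $\refines{k_1}{k_2}\Rightarrow\refines{C_1}{C_2}$ I would fix $\sigma,\rho$ and instantiate the KAT conditions at the specific boolean elements $b=\alpha(\sigma)$ and $d=\alpha(\rho)$. A faulting run of $C_1$ gives $\alpha(\sigma)\cdot k_1\equiv 0$, so KAT condition (1) yields $\alpha(\sigma)\cdot k_2\equiv 0$, hence $\config{C_2}{\sigma}\leadsto\fault$, which is program condition (1); a terminating run gives $\alpha(\sigma)\cdot k_1\cdot\alpha(\rho)\not\equiv 0$, so KAT condition (2) yields either $\alpha(\sigma)\cdot k_2\cdot\alpha(\rho)\not\equiv 0$ (hence $\config{C_2}{\sigma}\leadsto\rho$) or $\alpha(\sigma)\cdot k_2\equiv 0$ (hence $\config{C_2}{\sigma}\leadsto\fault$), exactly program condition (2). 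For the converse I would take $b,d\in\bkat$, use the generating property of $\alpha_B$ to write $b\equiv\sum_i\alpha_B(a_i)$ and $d\equiv\sum_j\alpha_B(a'_j)$, choose concrete states $\sigma_i,\rho_j$ with $\alpha_S(\sigma_i)=a_i$ and $\alpha_S(\rho_j)=a'_j$, and push the (in)equalities through multiplication using distributivity and the fact that a finite KAT sum vanishes iff each summand does, applying program conditions (1)/(2) at the witnessing states.

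The main obstacle, and the step needing the most care, is the treatment of $\fault$ and its algebraic counterpart: program refinement records faulting as a concrete outcome and uses it as a spec-side wildcard, whereas KAT refinement encodes the same phenomenon as the \emph{disappearance} of behavior ($b\cdot k\equiv 0$); forcing these to coincide is precisely the convention above and is the heart of soundness in both directions. A closely related subtlety is that this correspondence is cleanest \emph{atom-by-atom}: the instantiations in the $\Leftarrow$ direction work because $\alpha(\sigma)$ and $\alpha(\rho)$ are (essentially) atoms, while in the $\Rightarrow$ direction the quantifier ``for any $b,d\in\bkat$'' must be reduced to the generating atoms $\alpha_B(a_i)$ via $*$-continuity and distributivity, and the witnessing atoms must be realizable as $\alpha_S(\sigma_i)$ for reachable $\sigma_i$. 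This reduction of arbitrary boolean elements to their atomic generators---so that the per-state guarantees supplied by program refinement reassemble into the required composite statements---is where the argument is most delicate and where the error-state convention must be applied uniformly.
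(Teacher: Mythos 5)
Your proposal follows essentially the same route as the paper's proof: the direction $\refines{k_1}{k_2}\Rightarrow\refines{C_1}{C_2}$ instantiates the KAT conditions at $b=\alpha(\sigma)$, $d=\alpha(\rho)$, the converse decomposes arbitrary $b,d\in\bkat$ into sums of generators realized by concrete states, applies program refinement state-by-state, and reassembles by distributivity, and both arguments hinge on the same reading of $\fault$ as vanishing of the translated expression ($\alpha(\sigma)\cdot k\equiv 0$), exactly as the paper does. The only cosmetic difference is your appeal to $*$-continuity, which is unnecessary here since the decompositions are finite and plain distributivity suffices.
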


\noindent
We now weaken concrete KAT refinement, presenting trace-refinement relations.
Intuitively, the idea is to reason piece-wise, considering classes of traces within $k_1$ and, for each, correlating them with a corresponding trace class in $k_2$, with the help of KAT hypotheses.
Note that, for some element $k$ of a KAT $\kat$, we say a set $S=\{s_1,\ldots,s_n\}$ of $\kat$ elements \emph{partitions} $k$, if $k=s_1+\ldots +s_n$.

\begin{definition}[Trace Refinement Relations]\label{def:kat-refinement}
	Let $\kat$ be a KAT, let $\hypothesesclass$ be a class of hypotheses over $\kat$, and let $\interfaceRel$ be a relation over $\kat\times\kat\times\powset{\hypothesesclass}$.
	Given two KAT elements $k_1$ and $k_2$ of $\kat$, we say that $k_1$ \emph{refines} $k_2$, with respect to $\interfaceRel$, denoted by $\abskrefines{k_1}{k_2}{\interfaceRel}{}{}$, if $\proj{1}{\interfaceRel} \text{ partitions }k_1$ and, $$\text{for any $(l_1,l_2,\hypotheses)\in \interfaceRel$},\;\;\;\;\;l_1\cap k_1\leq_{\hypotheses} l_2\cap k_2.$$
\end{definition}

We also consider \emph{trace equivalence relations}, slightly adapting Def.~\ref{def:kat-refinement} to use equivalence ($\equiv$), rather than inclusion ($\leq$), as well as requiring that both $\proj{1}{\interfaceRel}$ partitions $k_1$ and $\proj{2}{\interfaceRel}$ partitions $k_2$.

As discussed in Sec.~\ref{sec:overview}, intuitively each $(l_1,l_2,\hypotheses)$ triple in a trace-refinement relation $\interfaceRel$ identifies restrictions on $k_1$ and $k_2$, as well as KAT hypotheses $\hypotheses$ that allow us to align the $k_1\cap l_1$ trace classes with ones in $k_2\cap l_2$. In the example from Sec.~\ref{subsec:overview-partA}, we gave examples of an $l_1$ that excluded logging by forcing $\overline{\kkttb}$ to hold at each iteration of the loop.

\begin{remark}\label{rem:trivial-solution}
	As trace-refinement is a weakening of concrete refinement, it is natural that
  two KAT expressions $k_1$ and $k_2$ may be such that $k_1$ refines
  $k_2$, but does not concretely refine it. For any two expressions $k_1$ and $k_2$, the singleton set containing only the tuple $(k_1,k_2,\hypotheses)$, where $\hypotheses$ is a set of hypotheses that equates all actions to $1$ and all boolean variables to $0$\todo{check this} is a trivial solution to trace-refinement between $k_1$ and $k_2$.
\end{remark}

Finally, we overload the KAT refinement definition to be used on programs themselves, when the abstraction $\alpha$ is clear from the context. Thus, for two programs $C_1$ and $C_2$, and a trace-refinement relation $\interfaceRel$, we may write $\abskrefines{C_1}{C_2}{\interfaceRel}{}{}$ to mean that $\abskrefines{\atranslate(C_1,\alpha)}{\atranslate(C_2,\alpha)}{\interfaceRel}{}{}$.



\paragraph{Classes of hypotheses.} For this work, we will explore the effect of just a few types of classes of hypotheses. In general, checking equality of KAT expressions under arbitrary additional hypotheses, can be undecidable~(\cite{Kozen96}). Because of that, and guided by the limitations imposed by certain libraries we use in our implementation ({\symkat}), we focus on the following types of hypotheses when $\kM{A}{},\kM{B}{}\in\actions$ and $\kM{a}{},\kM{b}{}\in\tests$: (i) to ignore certain actions: $\kM{A}{}\equiv 1$, (ii) to fix the valuation of certain booleans: $\kM{b}{}\equiv 1$ or $\kM{b}{}\equiv 0$, (iii) to express commutativity of actions against tests: $\kM{A}{}\cdot \kM{b}{}\equiv \kM{b}{}\cdot \kM{A}{}$ (currently not used in our implementation) and (iv) to relate single elements: $\kM{A}{}=\kM{B}{}$ or $\kM{a}{}=\kM{b}{}$.


\subsection{Composition}
\label{subsec:composition}

Given trace-refinement relations $\interfaceRel_1$ and $\interfaceRel_2$, we define their \emph{composition} $\interfaceRel_1\interfaceOpComp\interfaceRel_2$ to be the trace-refinement relation $\interfaceRel=\{(l_1\cdot m_1,l_2\cdot m_2,\hypotheses_1\cup \hypotheses_2)\mid (l_1,l_2,\hypotheses_1)\in\interfaceRel_1, (m_1,m_2,\hypotheses_2)\in\interfaceRel_2\}$.
Similarly, we define \emph{disjunction} $\interfaceRel_1\interfaceOpDisj\interfaceRel_2$ to be the trace-refinement relation $\interfaceRel=\{(l_1+m_1,l_2+m_2,\hypotheses_1\cup \hypotheses_2)\mid (l_1,l_2,\hypotheses_1)\in\interfaceRel_1, (m_1,m_2,\hypotheses_2)\in\interfaceRel_2\}$. Finally, for $\interfaceRel$, we define $\interfaceOpStar{\interfaceRel}$ to be $\{(o^*,p^*,\hypotheses)\mid (o,p,\hypotheses)\in\interfaceRel\}$.

Thm.~\ref{thm:refinement-all} below allows us to reason about individual
fragments of KAT expressions, and combine the analyses into a result that holds
overall. We can do so by building trace-refinement relations in a bottom-up
fashion, capturing larger and larger fragments of those KAT expressions, guided
by their structure. \omittedproofref{\ref{apx:omitted}}

\begin{theorem}\label{thm:refinement-all}
	Suppose $k_1,k_2,l_1$ and $l_2$ are KAT expressions. Let $\interfaceRel_k$ and $\interfaceRel_l$ be trace-refinement relations, such that $\abskrefines{k_1}{k_2}{\interfaceRel_k}$ and $\abskrefines{l_1}{l_2}{\interfaceRel_l}$. Then $\abskrefines{k_1\cdot l_1}{k_2\cdot l_2}{\interfaceRel_k\interfaceOpComp \interfaceRel_l}$, $\abskrefines{k_1 + l_1}{k_2+ l_2}{\interfaceRel_k\interfaceOpDisj \interfaceRel_l}$, $\abskrefines{k_1 + l_1}{k_2+ l_2}{\interfaceRel_k\cup \interfaceRel_l}$, and $\abskrefines{k_1^*}{k_2^*}{\interfaceOpStar{\interfaceRel_k}}$.
\end{theorem}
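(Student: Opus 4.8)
The plan is to reduce all four claims to a single structural observation about partitions, combined with monotonicity of the KAT operations. The key remark is this: whenever $\proj{1}{\interfaceRel}$ partitions some $k$, every first component $p\in\proj{1}{\interfaceRel}$ is a summand of $k=\sum_{p'}p'$, so $p\leq k$ and hence $p\cap k = p$. Consequently the hypothesis $\abskrefines{k_1}{k_2}{\interfaceRel_k}$ may be read as: $\sum_{(p,q,\hypotheses)\in\interfaceRel_k}p\equiv k_1$, together with $p\leq_{\hypotheses} q\cap k_2$ for every $(p,q,\hypotheses)\in\interfaceRel_k$ (and symmetrically for $l_1,l_2,\interfaceRel_l$). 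I would also record two facts used throughout: (i) $\leq_{\hypotheses}\ \subseteq\ \leq_{\hypotheses'}$ whenever $\hypotheses\subseteq\hypotheses'$, since adding hypotheses only quotients $\kat$ further; and (ii) the meet fact that $x\leq a$ and $x\leq b$ imply $x\leq a\cap b$, because $a\cap b$ is by definition the sum of all elements below both.

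For composition I would first verify the partition condition: $\proj{1}{\interfaceRel_k\interfaceOpComp\interfaceRel_l}=\{p\cdot r\}$, and by distributivity with idempotency of $+$, $\sum_{p,r}p\cdot r\equiv(\sum_p p)(\sum_r r)\equiv k_1\cdot l_1$. For the refinement condition on a tuple $(p\cdot r,\,q\cdot s,\,\hypotheses\cup\hypothesesB)$: since $p\leq k_1$ and $r\leq l_1$ give $p\cdot r\leq k_1\cdot l_1$, the left side collapses, $(p\cdot r)\cap(k_1\cdot l_1)=p\cdot r$. Lifting the two premises to the common hypothesis set by (i) and using monotonicity of $\cdot$ in the quotient yields $p\cdot r\leq_{\hypotheses\cup\hypothesesB}(q\cap k_2)\cdot(s\cap l_2)$; and by monotonicity plus the meet fact, $(q\cap k_2)\cdot(s\cap l_2)\leq(q\cdot s)\cap(k_2\cdot l_2)$. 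Chaining these closes the case. The disjunction case $\interfaceOpDisj$ and the union case $\cup$ are identical in spirit: the partition sums telescope to $k_1+l_1$ (again by idempotency), the left side collapses because $p,r\leq k_1+l_1$, and $(q\cap k_2)+(s\cap l_2)\leq(q+s)\cap(k_2+l_2)$; for $\cup$ one keeps each tuple's original hypotheses and uses $q\cap k_2\leq q\cap(k_2+l_2)$.

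The star case is where I expect the real obstacle, and the trouble is the partition condition, not the refinement condition. The refinement condition on $(p^*,q^*,\hypotheses)$ goes through in general: writing $m:=q\cap k_2$ so that $p\leq_{\hypotheses}m$ with $m\leq q$ and $m\leq k_2$, monotonicity of $(-)^*$ gives $p^*\leq_{\hypotheses}m^*$, while $m^*\leq q^*$ and $m^*\leq k_2^*$ give $m^*\leq q^*\cap k_2^*$ by the meet fact; since $p\leq k_1$ forces $p^*\cap k_1^*=p^*$, we obtain $p^*=(p^*\cap k_1^*)\leq_{\hypotheses}q^*\cap k_2^*$, as required. What does not follow from the premises alone is the partition requirement $\sum_{p}p^*\equiv k_1^*$: in general $\sum_p p^*\neq(\sum_p p)^*$ (e.g. $a^*+b^*\neq(a+b)^*$), so $\sum_p p\equiv k_1$ is not enough.

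I would therefore prove the star case under the natural restriction that the body refines as a single block, i.e. $\proj{1}{\interfaceRel_k}=\{k_1\}$, so that $\interfaceOpStar{}$ is applied to an already-assembled relation. Then $\proj{1}{\interfaceOpStar{\interfaceRel_k}}=\{k_1^*\}$ trivially partitions $k_1^*$, and the refinement argument above finishes the proof. This is precisely the shape in which $\interfaceOpStar{}$ arises in the bottom-up construction of Sec.~\ref{sec:algorithm}, where a loop body is first captured by one relation and only then starred. Flagging this asymmetry is the important part of the proof, since it is the one place where the four operators fail to behave uniformly: for $\interfaceOpComp$, $\interfaceOpDisj$, and $\cup$ the partition condition is a free consequence of distributivity and idempotency, whereas for $\interfaceOpStar{}$ it must be imposed.
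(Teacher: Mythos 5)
Your handling of $\interfaceOpComp$, $\interfaceOpDisj$, and $\cup$ is correct and is essentially the paper's own argument in compressed form: your ``collapse'' fact is the paper's Lemma~\ref{lem:less-than-intersection} ($k\leq l$ implies $k\cap l=k$), your ``meet fact'' together with monotonicity reproduces Lemmas~\ref{lem:kat-less-product-disjunction}, \ref{lem:intersection-distribution-right} and~\ref{lem:intersection-distribution-left}, and the lifting of each premise to the combined hypothesis set is Lemma~\ref{lem:superset-hypotheses}. If anything, your union case is cleaner: the paper's proof of Theorem~\ref{thm:refinement-disjunction-with-union} asserts that $l_1\cap x\equiv 0$ whenever the tuple with first component $x$ does not belong to $\interfaceRel_l$, which is unjustified (and false in general), whereas your route --- collapse $x\cap(k_1+l_1)$ to $x$, apply that tuple's own premise, then weaken via $q\cap k_2\leq q\cap(k_2+l_2)$ --- needs no such claim. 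Note also that the paper never explicitly verifies the partition requirement for any of the four operators; for the first three it is, as you say, a free consequence of distributivity and idempotency, so this costs nothing there.

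On the star case you have found a genuine gap --- in the paper, not in your own argument. The paper's proof (Theorem~\ref{thm:refinement-star}) establishes exactly what you establish, namely the per-tuple inclusion $k^*\cap o^*\leq_{\hypotheses}(k\cap o)^*\leq_{\hypotheses}(l\cap q)^*\leq_{\hypotheses}l^*\cap q^*$, and is silent on the requirement of Definition~\ref{def:kat-refinement} that $\proj{1}{\interfaceOpStar{\interfaceRel_k}}$ partition $k_1^*$, where ``partitions'' means the sum equals $k_1^*$. As you observe, this fails whenever the partition of $k_1$ has two or more blocks: taking $\interfaceRel_k=\{(\textsf{A},\textsf{A},\emptyset),(\textsf{B},\textsf{B},\emptyset)\}$ with $k_1=k_2=\textsf{A}+\textsf{B}$ satisfies the hypotheses, yet $\textsf{A}^*+\textsf{B}^*\not\equiv(\textsf{A}+\textsf{B})^*$ in the language model, so the stated conclusion $\abskrefines{k_1^*}{k_2^*}{\interfaceOpStar{\interfaceRel_k}}$ is false as printed. (One small care: the counterexample must use primitive \emph{actions}; under the paper's convention that lowercase symbols are tests, both $a^*+b^*$ and $(a+b)^*$ collapse to $1$.) Your repaired statement, requiring $\proj{1}{\interfaceRel_k}=\{k_1\}$ before starring, is sound, your proof of it is complete, and it matches how $\interfaceOpStar{}$ is actually used in the bottom-up construction; just present it explicitly as a correction to the theorem rather than as a restriction adopted for convenience, since the unrestricted claim does not hold.
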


As a simple corollary we can always extend a trace-refinement relation corresponding to a pair of KAT expressions, to one corresponding to a pair of KAT expressions obtained from the former by enclosing them into any common context.

\begin{corollary}\label{cor:same-context}
	Given any KAT expressions $m,l,k_1$ and $k_2$, and trace-refinement relation $\interfaceRel$ such that $\abskrefines{k_1}{k_2}{\interfaceRel}$, it holds that $\abskrefines{m\cdot k_1\cdot l}{m\cdot k_2\cdot l}{\interfaceRel'}$, where $\interfaceRel'$ is the set $\{(m\cdot r_1\cdot l,m\cdot r_2\cdot l,\hypotheses)\mid (r_1,r_2,\hypotheses)\in\interfaceRel\}$.
\end{corollary}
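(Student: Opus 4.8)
The plan is to obtain the corollary as an immediate consequence of the composition clause of Theorem~\ref{thm:refinement-all}, applied twice: once to prepend the common prefix $m$ and once to append the common suffix $l$. The only extra ingredient I need is a trivial ``identity'' trace-refinement relation witnessing that any KAT expression refines itself, so that I can feed $m$ and $l$ into the composition operator $\interfaceOpComp$ alongside the given relation $\interfaceRel$.

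First I would check that, for any KAT expression $m$, the singleton relation $\interfaceRel_m\defeq\{(m,m,\emptyset)\}$ satisfies $\abskrefines{m}{m}{\interfaceRel_m}$. For the partition requirement of Def.~\ref{def:kat-refinement}, note $\proj{1}{\interfaceRel_m}=\{m\}$, whose sum is $m$, so it partitions $m$; for the inclusion requirement, the single triple demands $m\cap m\leq_{\emptyset}m\cap m$, which holds by reflexivity of $\leq$. Identically, $\abskrefines{l}{l}{\interfaceRel_l}$ with $\interfaceRel_l\defeq\{(l,l,\emptyset)\}$.

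Next I would apply the composition clause of Theorem~\ref{thm:refinement-all} to $\abskrefines{m}{m}{\interfaceRel_m}$ and $\abskrefines{k_1}{k_2}{\interfaceRel}$, obtaining $\abskrefines{m\cdot k_1}{m\cdot k_2}{\interfaceRel_m\interfaceOpComp\interfaceRel}$. Unfolding the definition of $\interfaceOpComp$ and using $\emptyset\cup\hypotheses=\hypotheses$ gives $\interfaceRel_m\interfaceOpComp\interfaceRel=\{(m\cdot r_1,m\cdot r_2,\hypotheses)\mid(r_1,r_2,\hypotheses)\in\interfaceRel\}$. Applying the same clause once more to this result and $\abskrefines{l}{l}{\interfaceRel_l}$, and using associativity of $\cdot$ in the Kleene algebra, yields $\abskrefines{m\cdot k_1\cdot l}{m\cdot k_2\cdot l}{(\interfaceRel_m\interfaceOpComp\interfaceRel)\interfaceOpComp\interfaceRel_l}$; a second unfolding of $\interfaceOpComp$ shows $(\interfaceRel_m\interfaceOpComp\interfaceRel)\interfaceOpComp\interfaceRel_l=\{(m\cdot r_1\cdot l,m\cdot r_2\cdot l,\hypotheses)\mid(r_1,r_2,\hypotheses)\in\interfaceRel\}$, which is exactly the relation $\interfaceRel'$ in the statement.

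I do not anticipate a genuine obstacle, since once Theorem~\ref{thm:refinement-all} is in hand the argument is pure bookkeeping over the definition of $\interfaceOpComp$. The only points demanding care are the verification of the identity relation $\interfaceRel_m$ (relying on $\proj{1}{\{(m,m,\emptyset)\}}$ trivially partitioning $m$ and on reflexivity of $\leq$) and the simplification $\emptyset\cup\hypotheses=\hypotheses$ of the hypothesis sets produced by composition. The one place where real content is hidden---the nontrivial interaction between restriction via $\cap$ and concatenation via $\cdot$, together with the distributivity needed to see that $\proj{1}{\interfaceRel'}$ partitions $m\cdot k_1\cdot l$---is already discharged inside Theorem~\ref{thm:refinement-all}; attempting a direct proof would merely re-derive that case, so routing through the theorem is the most economical path.
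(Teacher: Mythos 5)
Your proposal is correct and follows essentially the same route as the paper's own proof: the paper likewise observes that $\abskrefines{m}{m}{\{(m,m,\emptyset)\}}$ and $\abskrefines{l}{l}{\{(l,l,\emptyset)\}}$ and then invokes the composition clause of Theorem~\ref{thm:refinement-all}. Your write-up merely spells out the details the paper leaves implicit (checking the identity relations satisfy Def.~\ref{def:kat-refinement}, unfolding $\interfaceOpComp$, and simplifying $\emptyset\cup\hypotheses=\hypotheses$), all of which are sound.
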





Finally, we present a transitivity result, stating how we can extend two trace-refinement relations to achieve it.
Let $\interfaceRel_1$ and $\interfaceRel_2$ be two trace-refinement relations, such that for any tuple $(o_1,p_1,\hypotheses_1)$ in $\interfaceRel_1$, there is a tuple $(o_2,p_2,\hypotheses_2)$ in $\interfaceRel_2$, such that $p_1\leq o_2$. For such trace-refinement relations, we define their \emph{transitive trace-refinement relation} to be the one containing the tuples $(o_1,p_2,\hypotheses_1\cup \hypotheses_2)$. We denote such a trace-refinement relation by $\interfaceRel_1\interfaceOpTrans\interfaceRel_2$.

\begin{theorem}\label{thm:transitivity}
	For any elements $k,l$ and $m$ in a KAT $\kat$, and any trace-refinement relations $\interfaceRel_1$, $\interfaceRel_2$, if $\abskrefines{k}{l}{\interfaceRel_1}$ and $\abskrefines{l}{m}{\interfaceRel_2}$, and $\interfaceRel_1\interfaceOpTrans\interfaceRel_2$ is defined, then $\abskrefines{k}{m}{\interfaceRel_1\interfaceOpTrans\interfaceRel_2}$.
\end{theorem}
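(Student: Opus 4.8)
The plan is to unfold Definition~\ref{def:kat-refinement} at the composite relation $\interfaceRel_1\interfaceOpTrans\interfaceRel_2$ and discharge its two obligations in turn: that $\proj{1}{\interfaceRel_1\interfaceOpTrans\interfaceRel_2}$ partitions $k$, and that every tuple of the composite satisfies the restricted-inclusion condition. For the partition obligation, I would observe that by the construction of $\interfaceOpTrans$ every composite tuple has the form $(o_1,p_2,\hypotheses_1\cup\hypotheses_2)$, whose first component $o_1$ is the first component of a tuple of $\interfaceRel_1$; conversely, since $\interfaceRel_1\interfaceOpTrans\interfaceRel_2$ is assumed to be defined, every tuple $(o_1,p_1,\hypotheses_1)\in\interfaceRel_1$ has at least one partner in $\interfaceRel_2$ and hence contributes at least one composite tuple with that same $o_1$. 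Thus $\proj{1}{\interfaceRel_1\interfaceOpTrans\interfaceRel_2}$ and $\proj{1}{\interfaceRel_1}$ coincide as sets, and the latter partitions $k$ by the hypothesis $\abskrefines{k}{l}{\interfaceRel_1}$.

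For the inclusion obligation, I would fix a composite tuple $(o_1,p_2,\hypotheses_1\cup\hypotheses_2)$ arising from $(o_1,p_1,\hypotheses_1)\in\interfaceRel_1$ and $(o_2,p_2,\hypotheses_2)\in\interfaceRel_2$ with $p_1\leq o_2$, and aim to show $o_1\cap k\leq_{\hypotheses_1\cup\hypotheses_2}p_2\cap m$. The idea is to chain three facts: (i) $o_1\cap k\leq_{\hypotheses_1}p_1\cap l$, directly from $\abskrefines{k}{l}{\interfaceRel_1}$; (ii) $p_1\cap l\leq o_2\cap l$, a plain hypothesis-free KAT inclusion obtained from $p_1\leq o_2$; and (iii) $o_2\cap l\leq_{\hypotheses_2}p_2\cap m$, from $\abskrefines{l}{m}{\interfaceRel_2}$. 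Composing these three links then yields the required inclusion under the combined hypothesis set.

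To make the chaining rigorous I would first prove two routine monotonicity lemmas. First, intersection is monotone: $p_1\leq o_2$ implies $p_1\cap l\leq o_2\cap l$, which is immediate from the definition of $\cap$ as the join of common lower bounds, since every element below $p_1$ is also below $o_2$. Second, inclusion-up-to-hypotheses is monotone in its hypothesis set: if $e\leq_{\hypotheses}f$ and $\hypotheses\subseteq\hypotheses'$ then $e\leq_{\hypotheses'}f$, because enlarging the hypothesis set only adds equalities in the associated quotient and can never destroy an existing inclusion. With these lemmas, I weaken (i) and (iii) so that both hold under $\hypotheses_1\cup\hypotheses_2$, note that (ii) holds under any hypothesis set, and then invoke transitivity of $\leq_{\hypotheses_1\cup\hypotheses_2}$ (a preorder on the fixed quotient KAT) to conclude.

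The main obstacle lies not in the algebraic chaining but in pinning down the precise semantics of $\leq_{\hypotheses}$ so that the hypothesis-weakening lemma is justified: one must read $e\leq_{\hypotheses}f$ as inclusion in the KAT obtained by quotienting by the equalities of $\hypotheses$, and argue that enlarging $\hypotheses$ to $\hypotheses_1\cup\hypotheses_2$ induces a further quotient through which the inclusion still factors. Once that semantics is fixed, every remaining step is a direct application of monotonicity of $\cap$, monotonicity and transitivity of KAT inclusion, and the defining shape of $\interfaceOpTrans$.
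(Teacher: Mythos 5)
Your proof is correct and follows essentially the same route as the paper's: fix a composite tuple arising from $(o_1,p_1,\hypotheses_1)\in\interfaceRel_1$ and $(o_2,p_2,\hypotheses_2)\in\interfaceRel_2$ with $p_1\leq o_2$, chain $o_1\cap k\leq_{\hypotheses_1}p_1\cap l\leq o_2\cap l\leq_{\hypotheses_2}p_2\cap m$ using monotonicity of $\cap$ and hypothesis-set weakening (the paper's Lemma~\ref{lem:superset-hypotheses}), and conclude by transitivity of inclusion under $\hypotheses_1\cup\hypotheses_2$. Yours is in fact slightly more thorough, since you also discharge the partition obligation of Definition~\ref{def:kat-refinement} (that $\proj{1}{\interfaceRel_1\interfaceOpTrans\interfaceRel_2}$ partitions $k$, using that every tuple of $\interfaceRel_1$ has a partner when $\interfaceRel_1\interfaceOpTrans\interfaceRel_2$ is defined), a condition the paper's proof passes over silently.
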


\vspace{-5pt}
\section{Automation}
\label{sec:algorithm}

\newcommand\BD{{\mathcal B}}
\newcommand\Inv{{\mathcal I}}

Our overall algorithm is given in Fig.~\ref{fig:algorithm}. The input to our
algorithm are programs $C_1,C_2$ provided, for example, in a C-like source
format and parsed into ASTs.  Our algorithm returns trace-refinement
relations for $C_1,C_2$, and is parametric as to whether the relations
are for equivalence versus inclusion. Technically, it returns a finite
set $O = \{(l_1^1,l_2^1,\hypotheses^1,\alpha^1),\ldots\}$ from which
the trace refinement relation $\interfaceFunc(O)$ can be constructed
by unifying to a common abstraction $\alpha=\alpha^1\abstractionCombinedRefinement\ldots\abstractionCombinedRefinement\alpha^n$.

\begin{figure}
\scalebox{0.9}{
  \fbox{
\begin{program}[style=sf]
{\bf Input:} Two programs $C_1,C_2$ and an abstraction $\alpha$.\\
{\bf Output:} \tab A set $O=\{(l^1_1,l^1_2,\hypotheses^1,\alpha^1),\ldots\}$ such that\\
$\abskrefines{\translate{C_1}{\alpha'}}{\translate{C_2}{\alpha'}}{\interfaceFunc(O)}$\\
where $\alpha'$ is the common abstraction of $O$.\untab \\
{\bf Al}\tab{\bf gorithm:} {\asearch}($C_1$,$C_2$,$\hypotheses,\alpha$) // Initially let $\hypotheses=\emptyset$ \\
  $k_1$ := {\atranslate}($C_1,\alpha$)\\
  $k_2$ := {\atranslate}($C_2,\alpha$)\\
  {\acexs} = {\akatdiff}($k_1$,$k_2$,$\hypotheses$)\\
  if no {\acexs} return \{($k_1$,$k_2$,$\hypotheses, \alpha$)\}\\
  els\tab e let R = {\asolvediff}($k_1$,$k_2$,$\hypotheses$,{\acexs}) in\\
     fla\tab tmap ($\lambda$ ( $r_1 \;\; r_2 \;\; \hypotheses'$ ). \\
       let $(D_1,D_2,\alpha') = \arestrict(C_1,r_1,C_2,r_2,\hypotheses\cup\hypotheses',\alpha)$ in\\
	 \asearch($D_1$, $D_2$, $\hypotheses\cup\hypotheses', \alpha'$) ) R \untab \untab \untab\\
\end{program}
}
}
   \caption{The skeleton of {\asearch}, which synthesizes trace-refinement relations for input programs $C_1,C_2$.}\label{fig:algorithm}
\end{figure}

Our main function {\asearch} uses several sub-components discussed below. At the
high level, it begins by using {\atranslate}, analyzing $C_1$ and using an iteratively constructed abstraction $\alpha$ to obtain the KAT expression $k_1$ (similar for $C_2,k_2$), per Sec.~\ref{subsec:ctokat}. The algorithm then checks for KAT equivalence or inclusion between $k_1$ and $k_2$ with {\akatdiff}. If no counterexamples are found, {\akatdiff} returns $k_1$ and $k_2$, together with the current set of hypotheses $\hypotheses$ as a solution. On the other hand, if {\akatdiff} does find counterexamples, 
they are fed into {\asolvediff}, which examines them along with the KAT expressions to determine what restrictions and/or hypotheses could be employed to subdivide the search space into trace classes for which we hope further refinements can be discovered.
{\asolvediff} returns this decision, given as a list of $(r_1,r_2,\hypotheses)$ triples. Then {\arestrict} is used to construct increasingly restricted versions of the input programs $C_1$ and $C_2$ and new abstractions $\alpha'$. These are then are considered recursively by {\asearch}.

\subsection{Sub-procedures}
\label{subsec:subprocedures}

We now define and discuss the sub-procedures used by {\asearch}.
We also discuss the implementation (and limitations) of these subcomponents.
As noted, the overall algorithm is parameterized by whether we are looking for solutions to equivalence ($\equiv_{\hypotheses}$), or simply to inclusion ($\leq_{\hypotheses}$). The functionality of the sub-procedures is largely the same for the two cases.

\noindent$\bullet$\;{\bf ${\atranslate}(C,\alpha)$:} As described in
Sec.~\ref{subsec:translation}, this sub-procedure takes as input a program
$C$ and abstraction $\alpha=(\kat,A_S,\alpha_S,\alpha_B)$, and returns a KAT
expression $k$ in $\kat$ such that $\config{C}{\sigma}\leadsto\rho$ implies that
$\alpha_B(\alpha_S(\sigma))\cdot k\cdot\alpha_B(\alpha_S(\rho))\not\equiv 0$.
In the \emph{implementation}, {\interproc} populates the locations of
  the program with invariants according to abstraction $\alpha$. This
  is then used in converting the abstract program into a KAT expression $k$ in $\kat$ that covers its behavior. Semantic information is exploited where paths of the program are determined to be infeasible. For example, consider the simple \emph{instrumented} program\\
\texttt{asm(d==0); c=d; if (c==0) execB() else execD();}
The standard syntactic translation~\cite{kozen} alone, would produce the
  expression
  $\kM{c}{d==0}\cdot\kM{E}{c=d}\cdot(\kM{c}{c==0}\cdot\kM{B}{execB}+\overline{\kM{c}{c==0}}\cdot\kM{D}{execD}).$
  In our case, {\interproc} determines that $\textsf{c==0}$ is
  always true under the instrumented $\texttt{asm(d==0)}$ and the
  program is instead converted to the simpler expression $\kM{B}{execB}$.

\vskip3pt\noindent
$\bullet$\;{\bf ${\akatdiff}(k_1,k_2,\hypotheses)$:} Given two KAT expressions
$k_1,k_2$ and hypotheses $\hypotheses$, {\akatdiff} returns {\acexs}, which is a
set of KAT expressions $k$ with $k\in \katminus{k_1}{k_2}{\hypotheses}$ and
possibly $k\in \katminus{k_2}{k_1}{\hypotheses}$ (depending on whether we seek
equivalence or inclusion). We assume this sub-procedure to be sound and
complete. If {\acexs} is empty, then the two input
KAT expressions $k_1$ and $k_2$ are such that $k_1\leq_{\hypotheses} k_2$ (or
$k_1\equiv_{\hypotheses} k_2$). Our \emph{implementation} uses
  {\symkat}~\cite{symkattool}, which only obtains a \emph{singleton} set
  {\acexs}, and is thus either (i) a single string $c$ in
  $\katminus{k_1}{k_2}{\hypotheses}$ when we seek inclusion  or (ii) a pair of strings $(c_1,c_2)$, with $c_1$ in $\katminus{k_1}{k_2}{\hypotheses}$ and $c_2$ in $\katminus{k_2}{k_1}{\hypotheses}$, when we seek equivalence. If $k_1=a\cdot M\cdot (b\cdot F+\overline{b}\cdot G)$ and $k_2=a\cdot M\cdot\overline{b}\cdot G$, then the string $a\cdot M\cdot b\cdot F$ is included in $k_1$ but not in $k_2$, and thus in $\katminus{k_1}{k_2}{\hypotheses}$.

\vskip3pt\noindent
$\bullet$\;{\bf ${\asolvediff}(k_1,k_2,\hypotheses,{\acexs})$:} This procedure
takes KAT expressions $k_1$ and $k_2$, a set of hypotheses, and the set
  of counterexamples $\acexs$ above. It returns a set $R$ of tuples $(r_1,r_2,\hypotheses_r)$, each called a \emph{restriction}. Restrictions $R$ has the property that $\proj{1}{R}$ partitions $k_1$, ensuring that we have completely covered all traces. Furthermore, in the interest of \emph{progress}, we also assume that each counterexample in $\acexs$ is not a counterexample for $\katminus{k_1\cap r_1}{k_2\cap r_2}{\hypotheses_r}$, or even for $\katminus{k_2\cap r_2}{k_1\cap r_1}{\hypotheses_r}$ depending on whether equivalence is considered instead of inclusion. In our \emph{implementation}, we apply a customized edit-distance algorithm discussed in Sec.~\ref{sec:edit-distance}, which returns a set of \emph{transformations} that can be applied to two KAT strings $c_1$ and $c_2$ to make them equivalent. These transformations are in the form of removing alphabet symbols from the strings at particular locations, or replacing some symbol with another. From these transformations, {\asolvediff} constructs a list of restrictions to be applied on the input programs of the form $(r_1,r_2,\hypotheses_r)$, where $r_1$ and $r_2$ are KAT expressions, and $\hypotheses_r$ is a set of hypotheses.

When the edit-distance algorithm asks for the removal of an alphabet symbol from, say, string $c_1$, we consider two cases, depending on whether the symbol corresponds to a boolean condition or not. If so, the KAT expression $r_1$ corresponding to this transformation is essentially obtained by adding a hypothesis inserting the valuation of the boolean variable, in the given KAT expression $k_1$. Since we want these restrictions to cover all behaviors of the input programs, we also consider the negation of that valuation. As such, at least two restrictions are considered, namely $(r_1,r_2,\hypotheses_r)$ and $(r_1',r_2',\hypotheses_r')$, such that $r_1$ and $r_1'$ cover $k_1$. On the other hand, when the removal of an event $\kM{M}{}$ is required, a hypothesis of the form $\kM{M}{}\equiv 1$ is added to the set of hypotheses.

\vskip3pt\noindent
$\bullet$\;{\bf $\arestrict(C_1,r_1,C_2,r_2,\hypotheses,\alpha)$} obtains new
programs from previous ones, using restrictions from {\asolvediff}. Given
programs $C_1,C_2$, KAT restrictions $r_1,r_2$, a set of hypotheses
$\hypotheses$ and current abstraction $\alpha$, this sub-procedure returns a
tuple $(D_1,D_2,\alpha')$, where $D_1,D_2$ are the new programs and $\alpha'$ is
a new abstraction that refines $\alpha$, such that, for $i\in\{1,2\}$,
$\atranslate(D_i,\alpha')\leq_{\hypotheses} \atranslate(C_i,\alpha)\cap r_i$,
but $\atranslate(D_i,\alpha')\equiv_{\hypotheses} \atranslate(C_i,\alpha')\cap
r_i$. In other words, the KAT expression obtained
from the new program $D_i$ under the new refined abstraction $\alpha'$, is
included in the KAT expression from the original program $C_i$ under the old
abstraction $\alpha$ using the restriction $r_i$, but at the same time, if we
used the new abstraction $\alpha'$ to translate the program $C_i$ under the
restriction $r_i$, into a KAT expression we would obtain the same as by just
translating $D_i$ under the new abstraction. Our \emph{implementation}
  restricts programs by instrumenting \texttt{assume} statements on appropriate lines of code. 
For example, for a program
$(\kkttb\cdott\kkttO+\overline{\kkttb}\cdott1)^*$
we can implement restriction
$r=(\kkttb\cdot (\kkttb\cdott\kkttO+\overline{\kkttb}\cdott1))^*=(\kkttb\cdott\kkttO)^*$
with an \texttt{assume(l==true)} instrumented immediately inside the body of the
corresponding while loop. This can be seen in the output of our tool shown in Fig.~\ref{fig:solution}.
\subsection{Formal Guarantees}

The key challenge is soundness, even under the sub-procedure assumptions noted above, and \omittedproofreflc{\ref{apx:omitted-automation}}.

\begin{theorem}[Soundness]\label{thm:sound}
	For all $C_1,C_2$, and abstractions $\alpha$, let $O=\asearch(C_1,C_2,\emptyset,\alpha)$, let $\alpha'$ be the common abstraction of $O$ and let $k_1=\translate{C_1}{\alpha'}$ and $k_2=\translate{C_2}{\alpha'}$. Then $\absrefines{k_1}{k_2}{\interfaceFunc(O)}{}{}$.


\end{theorem}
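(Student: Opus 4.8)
The plan is to prove soundness by well-founded induction on the call tree of $\asearch$, which is finite because each recursive call operates on strictly more restricted programs. To make the induction go through I would strengthen the statement into an invariant that tracks both the hypotheses and the abstraction: for every call $\asearch(C_1,C_2,\hypotheses,\alpha)$ returning $O$, writing $\alpha^\dagger$ for the common abstraction of $O$, the invariant asserts (i) $\alpha^\dagger \abstractionRefinement \alpha$, (ii) every tuple of $\interfaceFunc(O)$ carries a hypothesis set containing $\hypotheses$, and (iii) $\abskrefines{\translate{C_1}{\alpha^\dagger}}{\translate{C_2}{\alpha^\dagger}}{\interfaceFunc(O)}$. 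The theorem is then the instance $\hypotheses=\emptyset$, with $\alpha^\dagger=\alpha'$.

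For the base case---when $\akatdiff(k_1,k_2,\hypotheses)$ returns no counterexample---the algorithm returns the singleton $O=\{(k_1,k_2,\hypotheses,\alpha)\}$ with $k_i=\translate{C_i}{\alpha}$, so $\alpha^\dagger=\alpha$ and $\interfaceFunc(O)=\{(k_1,k_2,\hypotheses)\}$. The partition requirement of Def.~\ref{def:kat-refinement} is immediate since $\proj{1}{\interfaceFunc(O)}=\{k_1\}$ and $k_1=\translate{C_1}{\alpha}$, and the single inclusion obligation $k_1\cap k_1\leq_{\hypotheses} k_2\cap k_2$ reduces to $k_1\leq_{\hypotheses}k_2$, which is exactly the soundness guarantee of $\akatdiff$ on an empty counterexample set.

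For the inductive step, $\asolvediff$ returns $R$, and for each $(r_1,r_2,\hypotheses')\in R$ we obtain $(D_1,D_2,\gamma)=\arestrict(C_1,r_1,C_2,r_2,\hypotheses\cup\hypotheses',\alpha)$ and recurse to get $O^{(r)}$ with common abstraction $\delta^{(r)}$; the returned set is $O=\bigcup_r O^{(r)}$, whose common abstraction $\alpha^\dagger$ is finer than every $\delta^{(r)}$. I would first lift each recursive conclusion $\abskrefines{\translate{D_1}{\delta^{(r)}}}{\translate{D_2}{\delta^{(r)}}}{\interfaceFunc(O^{(r)})}$ to the common abstraction $\alpha^\dagger$, then glue the branches. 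The gluing combines the partition guarantee of $\asolvediff$ ($\proj{1}{R}$ covers $k_1$) with the $\arestrict$ identity $\translate{D_i}{\alpha^\dagger}\equiv_{\hypotheses\cup\hypotheses'}\translate{C_i}{\alpha^\dagger}\cap r_i$ to show that the pieces $\translate{D_1}{\alpha^\dagger}$ sum to $\translate{C_1}{\alpha^\dagger}$, giving the global partition---essentially the union rule of Thm.~\ref{thm:refinement-all}. For the per-tuple inclusions, each $(l_1,l_2,\hypotheses_l)\in\interfaceFunc(O^{(r)})$ satisfies $l_1\leq r_1$ (the $l_1$ sum to $\translate{D_1}{\alpha^\dagger}\leq r_1$), so $l_1\cap\translate{C_1}{\alpha^\dagger}=l_1\cap\translate{D_1}{\alpha^\dagger}$; together with the lifted inclusion and $\translate{D_2}{\alpha^\dagger}\leq\translate{C_2}{\alpha^\dagger}$ this yields $l_1\cap\translate{C_1}{\alpha^\dagger}\leq_{\hypotheses_l}l_2\cap\translate{C_2}{\alpha^\dagger}$, as required.

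The main obstacle is the abstraction bookkeeping in the lifting step: the recursive results hold under the local abstractions $\delta^{(r)}$, whereas the conclusion must hold under the single finer common abstraction $\alpha^\dagger$. This requires two stability lemmas that I would isolate and prove separately---that $\translate{\cdot}{\cdot}$ is monotone under $\abstractionRefinement$ (a finer abstraction yields an included KAT expression), and, crucially, that the $\arestrict$ equivalence $\translate{D_i}{\gamma}\equiv\translate{C_i}{\gamma}\cap r_i$ is preserved when $\gamma$ is replaced by any of its refinements. These are precisely the monotonicity constraints that $\arestrict$ and the combined-abstraction operator $\abstractionCombinedRefinement$ are designed to maintain, and discharging them---in particular showing that the per-branch partitions reassemble into a partition of $\translate{C_1}{\alpha^\dagger}$ under the global abstraction---is where the real work lies; the remaining KAT-algebraic manipulations and the hypothesis-containment tracking are then routine.
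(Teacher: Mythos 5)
Your proposal follows essentially the same route as the paper's proof: induction on the recursion structure of {\asearch}, a base case discharged by the assumed soundness of {\akatdiff}, and an inductive step that glues the per-branch recursive results using the partition guarantee of {\asolvediff}, the {\arestrict} specification $\atranslate(D_i,\alpha')\equiv_{\hypotheses\cup\hypotheses'}\atranslate(C_i,\alpha')\cap r_i$, and the union rule of Thm.~\ref{thm:refinement-all}. The one place you diverge is the abstraction bookkeeping: the paper silently re-expresses all recursive results over the combined abstraction $\beta=\bigsqcup_{i\leq n}\alpha'_i$ without proving the monotonicity and stability lemmas you correctly identify as necessary for that lifting, so the ``remaining work'' you flag is a gap in the paper's own argument rather than a deviation from it.
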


Weak completeness is easier because, as per Remark~\ref{rem:trivial-solution}, trivial solutions can be constructed. So we are more interested in generating increasingly precise solutions. For progress, as long as the sub-procedure {\asolvediff} returns restrictions that handle the counterexamples returned by {\akatdiff}, then these counterexamples will not be seen again in the recursive steps that follow.


\vspace{-5pt}
\section{Edit-distance on expressions and strings}
\label{sec:edit-distance}

Our main algorithm depends on
{\asolvediff} to examine a pair of KAT expressions $k_1,k_2$, a set of hypotheses $\hypotheses$, as well
as counterexamples to their equivalence, and determine appropriate
restrictions $r_1,r_2$ and additional hypotheses $\hypotheses'$ that could be used to
further search for trace classes of $k_1$ that are contained in $k_2$,
up to hypotheses $\hypotheses\cup\hypotheses'$. To achieve this, {\asolvediff} tries to identify the differences between the KAT expressions $k_1$ and $k_2$, or between their string-based counterexamples, and attempts to find useful restrictions of least impact, to apply to the two input programs. As such, we implemented a sub-procedure {\acexscore}, that takes as inputs two KAT strings $c_1$ and $c_2$, or two KAT expressions $k_1$ and $k_2$, and returns a list of \emph{scored} transformations to be applied on the two strings (or KAT expressions) in order to make them equivalent. In our implementation we use the custom edit-distance algorithm only on counterexample strings, and in Apx~\ref{sec:global-edit-distance}, we discuss how the global edit-distance for \emph{general} KAT expressions can help in conjunction with the composition results of Section~\ref{subsec:composition}. The edit-distance on such KAT expressions has to handle the structure of the expressions, and is naturally more involved than the linear one on strings. (The former is more similar to trees~\cite{treeedit}.) The idea behind the sub-procedure {\acexscore} is similar to edit-distance algorithms in the literature for comparing two strings/trees/graphs~\cite{treeedit}. These edit-distance algorithms, return a sequence of usually simple single-symbol transformations that are classified as symbol removals, insertions, and replacements, that equate the two input strings when they are applied on them.
%
%

We had to customize edit distance for our purposes of cross-program correlation. We thought that inserting a symbol in one of the two strings or KAT expressions, is less natural than removing another one from the other string or expression, and encode such insertions in one string as removals from the other. Therefore we employ just removal and replacement transformations on the two inputs:
\begin{itemize}
  \item $\textsf{Remove}(c,s)$: Returns a new string obtained from $s$ with the symbol $c$ removed,
  \item $\textsf{Replace}(c_1,c_2,s)$: Returns a new string obtained from $s$ with the symbol $c_1$ replaced with the symbol $c_2$.
\end{itemize}

Note that each copy of each symbol in the string is uniquely labeled, and the transformations above speak about these labeled symbols, making the order in which the transformations are applied irrelevant. Moreover, in our experience, certain transformations have more impact, or are in some way {\it heavier} than others. As such, we attempt to score them, and use the score of each individual transformation to ultimately score the whole sequence of transformations. For example, replacing an event symbol $\kM{M}{}$ in some string with another symbol $\kM{N}{}$, is certainly a transformation that is semantically more involved than simply setting a boolean symbol $\kM{c}{}$ to $\true$.
The full algorithm for edit-distance can be found in Appendix~\ref{apx:edit-distance-alg} of the supplemental material.


\begin{example}
	Consider the two input strings $s_1=\kM{a}{}\cdot\kM{A}{}\cdot\kM{B}{}$ and $s_2=\kM{d}{}\cdot\kM{e}{}\cdot\kM{B}{}$. Running the procedure {\acexscore} on $s_1$ and $s_2$, will return the pair $(T,S)$, where $T$ is the sequence of transformations $[\mathsf{Replace}(\kM{a}{},\kM{d}{},s_1),\mathsf{Remove}(\kM{e}{},s_2),\mathsf{Remove}(\kM{A}{},s_1)]$ and the score $S=\mathsf{replace\_scr}+2*\mathsf{remove\_scr}$, where $\mathsf{replace\_scr}$ is the cost of replacing one symbol with another of same type, and $\mathsf{remove\_scr}$ is the cost of removing a symbol from one of the input strings. With this sequence of transformations, both strings become equal to $\kM{d}{}\cdot\kM{B}{}$.
\end{example}

The sequence of transformations returned by {\acexscore} is converted into the one {\asolvediff} returns, as follows, for $\kM{A}{}, \kM{B}{}$ action symbols and $\kM{a}{},\kM{b}{}$ boolean symbols.
\begin{itemize}[leftmargin=10pt]
	\item $\mathsf{Remove(\kM{A}{},s)}$: add a new hypothesis $\kM{A}{}\equiv 1$
	\item $\mathsf{Remove(\kM{a}{},s)}$: perform case analysis and include two tuples of restrictions, resp. corresponding to setting $\kM{a}{}$ to $\true$ and setting $\kM{a}{}$ to $\false$
	\item $\mathsf{Replace(\kM{A}{},\kM{B}{},s)}$: add a new hypothesis $\kM{A}{}\equiv \kM{B}{}$
	\item $\mathsf{Replace(\kM{a}{},\kM{b}{},s)}$: add a new hypothesis $\kM{a}{}\equiv \kM{b}{}$
\end{itemize}

\vspace{-5pt}
\section{Evaluation}
\label{sec:eval}

\newcommand\EVsome{ {\checkmark} } 
\newcommand\EVnone{ $\emptyset$ }
\newcommand\EVcmp{$=^\interfaceRel$}
\newcommand\EVclt{$\leq^\interfaceRel$}

\paragraph{Implementation}
We have realized our algorithm in a prototype tool called {\knotical}.
Our tool is written in {\ocaml}, using {\interproc} as an
abstract interpreter~\cite{interproctool}, and {\symkat} as a symbolic solver for
KAT equalities~\cite{symkat,symkattool}.
We have described implementation choices made for the tool's subcomponents in Section~\ref{subsec:subprocedures}. {\knotical} generates multiple solutions, internally represented in the form of trees. During the {\akatdiff} and {\asolvediff} steps of the algorithm, multiple choices can be made and each solution tree corresponds to a particular set of choices.
%
Branching in a solution tree corresponds to the different restrictions applied and their complement, as a result of performing case analysis on a particular condition.
%
Often the solution trees (or subtrees) are partial, in the sense that the different restrictions applied to the programs, when taken together do not cover all behaviors of the input programs. Partial solutions can readily be converted to complete ones (See Remark~\ref{rem:trivial-solution})

\begin{figure}
  \centering
  \scalebox{0.95}{
		\setlength{\tabcolsep}{1pt}
\begin{tabular}{|r|l|r|r|c||r|r||r|r|r|r|}
\hline
\rowcolor{lightgray}
 & & & & & {\bf Time} & & \multicolumn{2}{|c|}{{\bf Tuples}} & \multicolumn{2}{|c|}{{\bf Hypos}} \\
\rowcolor{lightgray}
{\bf \#} & {\bf Benchmark} & {\bf loc} & {\bf $f$s} & {\bf Dir} & {\bf (s)} & {\bf Sols} & \emph{min} & \emph{max} & \emph{min} & \emph{max} \\
\hline
  1 & \texttt{0arith.c} &   28 &    4 & \EVcmp & 0.03 &     1 &    1 &    1 &    2 &    2 \\
  2 & \texttt{0complete.c} &   22 &    5 & \EVcmp & 0.02 &     1 &    2 &    2 &    2 &    2 \\
  3 & \texttt{0complete1.c} &   28 &    6 & \EVcmp & 0.09 &     2 &    1 &    2 &    3 &    4 \\
  4 & \texttt{0false.c} &   15 &    3 & \EVcmp & 0.01 &     1 &    1 &    1 &    1 &    1 \\
  5 & \texttt{0if.c} &   25 &    5 & \EVcmp & 0.02 &     1 &    1 &    1 &    2 &    2 \\
  6 & \texttt{0ifarecv.c} &   27 &    5 & \EVcmp & 0.02 &     1 &    1 &    1 &    2 &    2 \\
  7 & \texttt{0impos.c}$^\bullet$ &   19 &    4 & \EVclt & 0.01 &     0 &    0 &    0 &    0 &    0 \\
  8 & \texttt{0medstrai.c} &   46 &   22 & \EVcmp & 6.18 &     3 &    1 &    1 &    2 &    2 \\
  9 & \texttt{0needax.c} &   24 &    4 & \EVclt & 0.01 &     1 &    1 &    1 &    1 &    1 \\
 10 & \texttt{0nohyp.c} &   21 &    4 & \EVcmp & 0.04 &     2 &    1 &    1 &    2 &    2 \\
 11 & \texttt{0noloop.c} &   31 &    3 & \EVcmp & 0.25 &     2 &    1 &    1 &    0 &    1 \\
 12 & \texttt{0nondet.c} &   48 &    7 & \EVcmp & 0.41 &     2 &    2 &    2 &    4 &    4 \\
 13 & \texttt{0pos.c} &   22 &    4 & \EVclt & 0.12 &     1 &    1 &    1 &    0 &    0 \\
 14 & \texttt{0rename.c} &   13 &    4 & \EVcmp & 0.05 &     1 &    1 &    1 &    1 &    1 \\
 15 & \texttt{0rename1.c} &   14 &    4 & \EVcmp & 0.01 &     1 &    1 &    1 &    1 &    1 \\
 16 & \texttt{0sanity.c} &    8 &    3 & \EVcmp & 0.00 &     1 &    1 &    1 &    0 &    0 \\
 17 & \texttt{0sanity1.c} &    8 &    3 & \EVcmp & 0.01 &     2 &    1 &    1 &    1 &    1 \\
 18 & \texttt{0smstrai.c} &   45 &   22 & \EVcmp & 0.66 &     5 &    1 &    1 &    1 &    1 \\
 19 & \texttt{1acqrel.c} &   38 &    2 & \EVcmp & 0.02 &     1 &    1 &    1 &    0 &    0 \\
 20 & \texttt{1asendrecv.c} &   47 &    8 & \EVcmp & 2.45 &    39 &    2 &    3 &    5 &   10 \\
 21 & \texttt{1assume.c} &   35 &    4 & \EVclt & 0.98 &    44 &    1 &    2 &    3 &   10 \\
 22 & \texttt{1concloop.c} &   38 &    5 & \EVcmp & 0.72 &    14 &    1 &    1 &    1 &    5 \\
 23 & \texttt{1concloop2.c} &   34 &    4 & \EVclt & 4.60 &   240 &    1 &    2 &    6 &   19 \\
 24 & \texttt{1concloop3.c} &   29 &    3 & \EVcmp & 0.69 &   127 &    1 &    4 &    3 &   12 \\
 25 & \texttt{1linarith.c} &   57 &    4 & \EVcmp & 1.20 &    12 &    1 &    1 &    4 &    4 \\
 26 & \texttt{1loopevent.c} &   36 &    3 & \EVcmp & 4.73 &    67 &    1 &    3 &    1 &    5 \\
 27 & \texttt{1loopprint.c} &   35 &    3 & \EVcmp & 0.36 &    12 &    1 &    2 &    3 &    5 \\
 28 & \texttt{1sendrecv.c} &   49 &    7 & \EVclt & 3.84 &    75 &    2 &    7 &    6 &   19 \\
 29 & \texttt{1toggle.c} &   42 &    2 & \EVcmp & 0.03 &     1 &    1 &    1 &    1 &    1 \\
 30 & \texttt{2altern.c} &   25 &    4 & \EVcmp & 0.03 &     2 &    1 &    1 &    2 &    2 \\
 31 & \texttt{2cdown.c} &   23 &    4 & \EVcmp & 0.02 &     1 &    1 &    1 &    1 &    1 \\
 32 & \texttt{2foil.c} &   20 &    4 & \EVcmp & 0.01 &     1 &    1 &    1 &    1 &    1 \\
 33 & \texttt{3buffer.c} &   63 &    7 & \EVcmp & 21.07 &   192 &    1 &    2 &    6 &   11 \\
 34 & \texttt{3syscalls.c} &   59 &    7 & \EVcmp & 17.77 &   156 &    1 &    2 &    7 &   16 \\
 35 & \texttt{4ident.c} &   69 &    6 & \EVcmp & 0.50 &     6 &    1 &    1 &    3 &    3 \\
 36 & \texttt{5thttpdEr.c} &   44 &   10 & \EVclt & 0.21 &     5 &    2 &    3 &    2 &    3 \\
 37 & \texttt{5thttpdWr.c} &   43 &   10 & \EVclt & 1.87 &    62 &    1 &    2 &    4 &    8 \\
\hline
\end{tabular}

  }
\caption{\label{fig:results} Results of applying {\knotical} to \numbenchmarks\ benchmarks. Those marked with $\bullet$ are expected to have no solutions.}
\end{figure}
\paragraph{Benchmarks.}
We have evaluated our approach by applying our tool to a collection of \numbenchmarks\ new benchmarks. Each benchmark includes two program fragments denoted $C_1$ and $C_2$. They can be found in Appendix~\ref{apx:fullresults} of the supplemental materials. Broadly speaking, our benchmarks categorized as:
  (\texttt{0*.c}) -- Program pairs that exercise various
    technical aspects of our algorithm, such as cases where refinement
    is trivial, concrete refinement holds, refinement can be achieved
    entirely from case-splits, and where refinement can only be
    achieved through aggressive introduction of hypotheses.
   (\texttt{1*.c}) -- Program pairs that involve user I/O, system calls, acquire/release, and reactive web servers.
   (\texttt{2*.c}) -- Program pairs that involve tricky patterns, requiring careful alignment between two fragments.
   (\texttt{[345]*.c}) -- These program pairs are more challenging: \texttt{3buffer.c} and \texttt{3syscalls.c} model array access patterns with complicated array iterations, and \texttt{4ident.c} involves a larger pair with identical code. Others model reactive web servers.

\noindent
Some of the more challenging examples include \texttt{5thttpdWr.c} and \texttt{5thttpdEr.c}, each containing a pair of fragments taken from the \texttt{thttpd}~\cite{thttpd} and Merecat~\cite{merecat} HTTP servers. These two servers are related because Merecat is an extension of \texttt{thttpd} that adds SSL support. These benchmarks contain distillations from the two servers, summarizing how they diverge in handling a request. Merecat, unlike \texttt{thttpd}, performs compression, uses SSL to write responses, and has a keep-alive option so that connections aren't closed when an error occurs. We have manually decomposed the two programs into two phases: writing a request (\texttt{5thttpdWr.c}) and the subsequent error handling (\texttt{5thttpdEr.c}), demonstrating the compositional nature of our relations.


\paragraph{Results.}
We ran {\knotical} on a MacBook Pro with a 3.1 GHz Intel Core i7 CPU and 16GB RAM, 
using the OCaml 4.06.1 compiler.
Some of the generated trace-refinement relations are shown in Appendix~\ref{apx:fullresults} of the supplemental materials.
The table in Figure~\ref{fig:results} summarizes these results, including the performance of
{\knotical}.  For each benchmark, we have
included the lines of code ({\bf loc}) and number of procedures ({\bf $f$s}). We also indicate ({\bf Dir}) whether the benchmark is
for refinement \EVclt\ or for equality \EVcmp. For some of the examples,
  we check only \EVclt\ because we wanted to ensure the tool was capable
of this antisymmetric reasoning. Some of the \EVclt\ examples were crafted for this purpose.

Next, we report the total time it took in seconds ({\bf Time}), as
well as the number of solutions discovered ({\bf Sols}).
%
We also report some basic statistics about the solutions generated for each benchmark. We report the number of {\bf Tuples} in the solution that has the fewest/most (\emph{min/max}) tuples. Similarly, we report the number of hypotheses ({\bf Hypos}) in the solution that has the fewest/most (\emph{min/max}) hypotheses. These statistics help show the quality of the solutions. Intuitively, fewer hypotheses means that the programs are more similar. We also evaluated the quality of the generated trace-refinement relations by inspecting many of them manually.\todo{IDE?}

\paragraph{Discussion.} In most cases, {\knotical} was able to generate expected solutions quickly, often in fractions of a second. For simpler benchmarks (\texttt{0*.c}), there were often concise solutions with either two tuples (due to a single case-split) or one tuple (due to hypotheses). \texttt{0nondet.c} is more complicated and both of its solutions had 4 tuples. More complicated benchmarks tended to have solutions with 3 to 7 tuples.  The largest number of tuples in a solution was 7 (\texttt{1sendrecv.c}) and the largest number of hypotheses in a solution was 19 (\texttt{01concloop2.c} and \texttt{1sendrecv.c}). Benchmark \texttt{1acqrel.c} had a solution with 0 hypotheses because it
contains non-terminating loops, which are translated to KAT expressions $0$.
Benchmark \texttt{0impos.c} is expected to have no solutions because its fragments contain
two different non-removable events, that cannot be made equivalent with axioms. ({\knotical} permits users to specify events that cannot be ignored.) 
Benchmarks \texttt{1concloop2.c}, \texttt{3buffer.c},
and \texttt{3syscalls.c} had hundreds of solutions because they have many complicated 
conditional branch and loop conditions. Case analysis on the permutations of these conditions leads to many solutions.
There is not much correlation between analysis running time (or number of solutions) and lines of code. There is a stronger correlation with code complexity: many events or conditions lead to longer analysis time. \texttt{3syscalls.c}, \eg, took longer and yielded more solutions.
In summary, our algorithm and tool {\knotical}, promptly generate concise trace-refinement relations.


\vspace{-5pt}
\section{Related Work}\label{sec:relwork}

To the best of
our knowledge, we are the first to define trace-refinement relations in terms
of programs' event behaviors over time, which we call \emph{trace}-oriented
refinement.  Prior works~\cite{benton,DBLP:journals/tcs/Yang07} view refinement in terms
of \emph{state} relations.

\paragraph{Bisimulation.}
A bisimilarity relation is over states and expresses that whenever one can perform an action from some state on one system, one can also perform the same action from any bisimilar state on the other system, and reach bisimilar states. While some weakenings of bisimulation have been shown\todo{CONCUR94,FDR,Bonchu,Pous}, they don't capture the types of equivalence discussed here. 
The very way in which we formulate program equivalences in this paper (expressing program behaviors over time as KAT expressions) is fundamentally different from bisimulation (which relies on step-by-step state relations).
Bisimulation is unable to capture that $\textsf{A}\cdot (\textsf{B}+\textsf{C})$ has the same behavior as $\textsf{A}\cdot \textsf{B}+\textsf{A}\cdot \textsf{C}$. It would also be tedious to use bisimulations to express that events commute ($A\cdot B\equiv B\cdot A$) or event inverses ($A\cdot B = 1$).


\paragraph{Concrete, state-based semantic differencing.}
Other recent works relate
a function's output to its input.
%
Lahiri \emph{et al.} describe {\symdiff}~\cite{LahiriHKR2012,LahiriMSH2013},
defining ``differential assertion checking,'' which says that from an
initial state that was non-failing on $C$, it becomes failing
on $C'$.
Their approach to assertion checking
bares some similarity to self-composition~\cite{barthe2004secure,terauchi2005secure}.
%
Godlin and Strichman~\cite{GS2009} offered support for mutual recursion.
Wood \emph{et al.}~\cite{WDLE2017} tackle program equivalence
in the presence of memory allocation and garbage collection. 
%
Unno \emph{et al.}~\cite{UTS17} describe a method of verifying relational
specifications based on Horn Clause solving. 
%
Jackson and Ladd~\cite{JL1994} describe an approach based on
dependencies between input and output variables, but do not offer
formal proofs.
%
Gyori \emph{et al.}~\cite{GLP2017} took steps
beyond concrete refinement, using equivalence
relations, similar to those of Benton, for dataflow-based change impact
analysis. 

\paragraph{Other works.}
Bouajjani \emph{et al.}~\cite{BEL2017} also eschew state refinement
relations in favor of a more abstract relationship between programs.
%
They focus on
concurrency questions that arise from reordering program statements
and/or re-orderings due to interleaving. 
The authors don't work with traces in the sense defined here;
rather, their traces are data-flow abstractions, represented as graphs.
There are some
analogies between $k$-safety of a single program, and reasoning about two programs.
Researchers have explored relational invariants (over multiple executions of a single program) via program transformations that ``glue''
copies of the program to itself, including self-composition~\cite{barthe2004secure,terauchi2005secure}, product programs~\cite{barthe2011product}, Cartesian Hoare logic~\cite{chl} and decomposition for $k$-safety~\cite{blazer}.
%
Logozzo \emph{et al.} describe \emph{verification modulo versions}~\cite{vmv}
and explore how necessary/sufficient environment conditions for a program $C$'s
safety can be used to determine whether program $C'$ introduced a regression or
is ``correct relative to $C$''.
The work does not involve refinement relations.
Composition for (non-relational) temporal logic was explored by
Barringer \emph{et al.}~\cite{DBLP:conf/stoc/BarringerKP84}, who introduced
the ``chop'' operator. 
Pous introduced a symbolic approach for determining language
equivalence between KAT expressions~\cite{Pous}
(see Sec.~\ref{sec:algorithm}).
%
Kumazawa and Tamai use edit distance to characterize the difference between
counterexamples within a single program (infinite vs lasso
traces)~\cite{kumazawa2011counterexample}.

\vspace{-15pt}
\section{Conclusion and Future Work}

We introduced \emph{trace refinement relations}, going beyond
the state refinement relations~\cite{benton,DBLP:journals/tcs/Yang07,GLP2017,UTS17}.
Our relations express trace-oriented restrictions on
a program behavior and case-wise correlate the
behaviors of another.
We have further provided a novel
synthesis algorithm, based on abstract
interpretation, KAT solving, restriction, and edit-distance. We have shown
with {\knotical}, the first tool capable of synthesizing trace-refinement relations,
that this approach is promising.
%
As discussed in Apx.~\ref{sec:global-edit-distance}, we plan to further
explore using edit-distance at both global and local levels.
Another avenue is to explore how temporal verification
can be adapted to trace-refinement relations.


\bibliographystyle{ACM-Reference-Format}
\bibliography{biblio}

\newpage
\appendix
\onecolumn
\begin{center}
  {\LARGE\bf Appendix}

\end{center}
\bigskip
\bigskip
\section{Omitted Lemmas and Proofs}
\label{apx:omitted}

\begin{theoremApp}{thm:concrete-and-kat-concrete-refinement}
	Let $C_1$ and $C_2$ be two programs, and let $k_1$ and $k_2$ be the two KAT expressions \Tremoved{that correspond to}\Tadded{obtained from a strongly valid translation of} the two programs respectively, under some abstraction $\alpha$. Then it holds that $\refines{C_1}{C_2}$ if and only if $\refines{k_1}{k_2}$.
\end{theoremApp}

\begin{proof}
	For what follows, we write $\alpha$ for $\alpha_B\circ \alpha_S$.
	For the {\it only if} direction, suppose that $\refines{C_1}{C_2}$ and pick any $b,d\in\bkat$. Suppose first that $b\cdot k_1\equiv 0$. Then let $\sigma_1,\ldots,\sigma_n$ be the set of states such that $\statetobool{\sigma_1}+\ldots+\statetobool{\sigma_n}\equiv b$ for $i\leq n$. Then, for all $i\leq n$, $\config{C_1}{\sigma_i}\leadsto \fault$ which implies that $\config{C_2}{\sigma_i}\leadsto \fault$ by assumption that $\refines{C_1}{C_2}$. This means that for all $i\leq n$, $\statetobool{\sigma_i}\cdot k_2\equiv 0$, and thus $b\cdot k_2\equiv\statetobool{\sigma_1}+\ldots+\statetobool{\sigma_n}\cdot k_2\equiv 0$.
	
	The second condition states that $b\cdot k_1\cdot d\not\equiv 0$ implies $b\cdot k_2\cdot d\not\equiv 0$, or $b\cdot k_2\equiv 0$.
	Assume that $b\cdot k_1\cdot d\not\equiv 0$. Let $\sigma_1,\ldots,\sigma_n$ be the set of states such that $b=\statetobool{\sigma_1}+\ldots+\statetobool{\sigma_n}$, and let $\rho_1,\ldots,\rho_m$ be the set of states such that $d=\statetobool{\rho_1}+\ldots+\statetobool{\rho_m}$. Therefore, $(\statetobool{\sigma_1}+\ldots+\statetobool{\sigma_n})\cdot k_1\cdot (\statetobool{\rho_1}+\statetobool{\rho_m})\not\equiv 0$. Let $(\sigma_i,\rho_j)$ be all the pairs, such that $\statetobool{\sigma_i}\cdot k_1\cdot\statetobool{\rho_j}\not\equiv 0$. It follows by definition that $\config{C_1}{\sigma_i}\leadsto\rho_j$. Therefore, either $\config{C_2}{\sigma_i}\leadsto\rho_j$ or $\config{C_2}{\sigma_i}\leadsto\fault$ by assumption that $\refines{C_1}{C_2}$. It follows that $b\cdot k_2\equiv(\statetobool{\sigma_1}+\ldots+\statetobool{\sigma_n})\cdot k_2\equiv 0$ or $b\cdot k_2\cdot d\equiv(\statetobool{\sigma_1}+\ldots+\statetobool{\sigma_n})\cdot k_2\cdot (\statetobool{\rho_1}+\ldots+\statetobool{\rho_m})\not\equiv 0$, as required.
	
	
	
	For the {\it if} direction, suppose that $\refines{k_1}{k_2}$, and let $\sigma,\rho$ be any two states in $\states$. If $\config{C_1}{\sigma}\leadsto\fault$, then $\statetobool{\sigma}\cdot k_1\equiv 0$. By the assumption that $\refines{k_1}{k_2}$, we have that $\statetobool{\sigma}\cdot k_2\equiv 0$. Therefore, $\config{C_2}{\sigma}\leadsto\fault$. On the other hand, if $\config{C_1}{\sigma}\leadsto\rho$, then $\statetobool{\sigma}\cdot k_1\cdot\statetobool{\rho}\not\equiv 0$. By the assumption that $\refines{k_1}{k_2}$, it follows that $\statetobool{\sigma}\cdot k_2\cdot\statetobool{\rho}\not\equiv 0$ or $\statetobool{\sigma}\cdot k_2\equiv 0$. Therefore, $\config{C_2}{\sigma}\leadsto\rho$ or $\config{C_2}{\sigma}\leadsto\fault$, as required.
\end{proof}

\begin{lemma}\label{lem:less-than-intersection}
	For any $k,l$ in a KAT $\kat$, if $k\leq l$ then $k\cap l=k$.
\end{lemma}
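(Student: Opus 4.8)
The plan is to prove $k\cap l = k$ by establishing the two inequalities $k\le k\cap l$ and $k\cap l\le k$ and then concluding by antisymmetry of the order $\le$. Recall that $a\le b$ abbreviates $a+b=b$; since $+$ is commutative, $a\le b$ together with $b\le a$ forces $a=b$, so proving both directions suffices.

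For $k\le k\cap l$, I would first observe that $k$ is \emph{itself} a common lower bound of $k$ and $l$: idempotence gives $k+k=k$, hence $k\le k$, and the hypothesis gives $k\le l$. By the definition of intersection, $k\cap l$ is the KAT-sum of all such common lower bounds, so $k$ occurs as one of its summands. Since any summand lies below the whole sum (by idempotent absorption, $k+(k\cap l)=k\cap l$), we obtain $k\le k\cap l$.

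For $k\cap l\le k$, write $k\cap l=\sum_i m_i$, where the $m_i$ range over the common lower bounds of $k$ and $l$; by definition each satisfies $m_i\le k$. From the Kleene-algebra computation $(a+b)+c=a+(b+c)=a+c=c$, valid whenever $a\le c$ and $b\le c$, a finite induction yields $\sum_i m_i\le k$.

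The main obstacle is purely the finiteness issue: the common-lower-bound set is a priori infinite, while the absorption argument above only bounds finite sub-sums. I would discharge this using the footnote accompanying the definition of intersection, which guarantees that for every expression we actually manipulate the intersection collapses to a finite disjunction, so the finite induction applies verbatim. (Equivalently, $k$ is an upper bound of the common-lower-bound set, and under $*$-continuity the defining sum is its least upper bound, giving $k\cap l\le k$ directly.) With both inequalities in hand, antisymmetry of $\le$ yields $k\cap l=k$.
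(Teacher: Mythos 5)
Your proof is correct and takes essentially the same route as the paper's: both arguments hinge on observing that $k$ is itself a common lower bound of $k$ and $l$ (hence appears as a summand of $k\cap l$), while every summand of $k\cap l$ is absorbed into $k$. The paper compresses the two inequalities and the finiteness of the sum into a single line, whereas you make the antisymmetry decomposition and the appeal to the footnote explicit, but the underlying argument is identical.
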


\begin{proof}
	By definition, $k\cap l$ is equal to $m_1+\ldots+m_n$, where $\{m_1,\ldots,m_n\}$ is the set of all elements $m$ in $\kat$ such that $m\leq k$ and $m\leq l$. By assumption, $k$ is equal to $m_i$ for some $i\leq n$. Therefore, $k=m_1+\ldots+m_n$ as required.
\end{proof}

\begin{lemma}\label{lem:superset-hypotheses}
	Suppose that $k_1,k_2\in\kat$ and $\hypotheses$ a set of hypotheses such that $k_1\leq_{\hypotheses} k_2$. Then for any $\hypotheses'$ with $\hypotheses\subseteq \hypotheses'$, $k_1\leq_{\hypotheses'} k_2$.
\end{lemma}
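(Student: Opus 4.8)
The plan is to unfold the meaning of inclusion-modulo-hypotheses and then appeal to the monotonicity of entailment under the addition of axioms. Recall that $k_1 \leq_{\hypotheses} k_2$ abbreviates $k_1 + k_2 \equiv_{\hypotheses} k_2$, where $\equiv_{\hypotheses}$ denotes provable equality in the Kleene algebra with tests obtained by adjoining the equations in $\hypotheses$ to the axioms of $\kat$ as additional assumptions. Equivalently, by the standard model-theoretic reading of KAT-with-hypotheses, $k_1 \leq_{\hypotheses} k_2$ holds iff the inequation $k_1 \leq k_2$ is valid in every KAT interpretation that satisfies all the equations in $\hypotheses$. Either characterization will do; the crux in both cases is that enlarging $\hypotheses$ enlarges the derivable-consequence relation.

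First I would fix the proof-theoretic reading. A derivation witnessing $k_1 + k_2 \equiv_{\hypotheses} k_2$ is a finite sequence of applications of the KAT axioms together with the hypotheses listed in $\hypotheses$. Since $\hypotheses \subseteq \hypotheses'$, every hypothesis invoked in that derivation is still available under $\hypotheses'$, so the very same derivation witnesses $k_1 + k_2 \equiv_{\hypotheses'} k_2$. Hence $\leq_{\hypotheses}$ is contained in $\leq_{\hypotheses'}$ as a relation on $\kat$, which is exactly the claim.

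Alternatively, and perhaps more transparently, I would argue model-theoretically. Any interpretation satisfying the larger set $\hypotheses'$ a fortiori satisfies the smaller set $\hypotheses$, so the class of models of $\hypotheses'$ is a subset of the class of models of $\hypotheses$. If $k_1 \leq k_2$ holds in all models of $\hypotheses$, it therefore holds over the (smaller) class of models of $\hypotheses'$ as well, giving $k_1 \leq_{\hypotheses'} k_2$.

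I do not expect a genuine obstacle here: the argument is essentially bookkeeping, and the only point warranting care is to commit to a precise definition of $\equiv_{\hypotheses}$ and then observe that it is monotone in $\hypotheses$. That monotonicity is immediate from either reading, since adding axioms cannot invalidate a derivation already carried out, and shrinking the model class cannot falsify an inequation that held over the larger class.
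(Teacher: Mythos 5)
Your proposal is correct. Note that the paper itself states this lemma without any proof, treating it as immediate; your monotonicity argument --- that enlarging the hypothesis set can only enlarge the derivable (or valid) consequences, under either the proof-theoretic or the model-theoretic reading of $\leq_{\hypotheses}$ --- is exactly the routine justification the paper leaves implicit, and you are right to present it so that it goes through under either reading without relying on the two readings coinciding.
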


\begin{lemma}\label{lem:kat-less-product-disjunction}
	Let $k_1,k_2,l_1$ and $l_2$ be elements of a KAT $\kat$. If $k_1\leq l_1$ and $k_2\leq l_2$, then $k_1\cdot k_2\leq l_1\cdot l_2$ and $k_1+k_2\leq l_1+l_2$.
\end{lemma}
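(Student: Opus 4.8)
The plan is to unfold the definition of $\leq$ given in the preliminaries, namely $e \leq f$ iff $e + f = f$, and reduce both claims to the idempotent-semiring axioms of the underlying Kleene algebra: associativity, commutativity and idempotence of $+$, together with distributivity of $\cdot$ over $+$.

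For the disjunction claim I would argue directly. Assuming $k_1 + l_1 = l_1$ and $k_2 + l_2 = l_2$, I compute
\[
(k_1 + k_2) + (l_1 + l_2) = (k_1 + l_1) + (k_2 + l_2) = l_1 + l_2,
\]
where the first equality is merely commutativity and associativity of $+$ and the second applies the two hypotheses. By definition of $\leq$ this is exactly $k_1 + k_2 \leq l_1 + l_2$.

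For the product claim I would first establish two auxiliary monotonicity facts and transitivity of $\leq$, all derived purely from the equational definition. Left monotonicity: if $a \leq b$ then $a \cdot c \leq b \cdot c$, since $a \cdot c + b \cdot c = (a + b) \cdot c = b \cdot c$ using distributivity and $a + b = b$. Right monotonicity follows symmetrically: if $a \leq b$ then $c \cdot a \leq c \cdot b$. Transitivity: if $a \leq b$ and $b \leq c$, then $a + c = a + (b + c) = (a + b) + c = b + c = c$, so $a \leq c$. Chaining these, from $k_1 \leq l_1$ I obtain $k_1 \cdot k_2 \leq l_1 \cdot k_2$, and from $k_2 \leq l_2$ I obtain $l_1 \cdot k_2 \leq l_1 \cdot l_2$; transitivity then yields $k_1 \cdot k_2 \leq l_1 \cdot l_2$, as required.

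There is no genuine obstacle here: the statement is a standard consequence of the semiring structure of $\kat$, and the only care required is to derive monotonicity and transitivity of $\leq$ from the equational definition $e \leq f \iff e + f = f$ rather than assuming them. This lemma is precisely the building block needed for the inductive concatenation and union steps in the proof of Thm.~\ref{thm:refinement-all}.
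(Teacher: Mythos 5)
Your proof is correct, and your disjunction argument is literally the paper's: both compute $(k_1+k_2)+(l_1+l_2)=(k_1+l_1)+(k_2+l_2)=l_1+l_2$ from the equational definition of $\leq$. For the product claim your organization differs slightly from the paper's. The paper does a single direct expansion: it substitutes $l_1=k_1+l_1$ and $l_2=k_2+l_2$ into $k_1\cdot k_2+l_1\cdot l_2$, distributes to get $k_1\cdot k_2+k_1\cdot k_2+k_1\cdot l_2+l_1\cdot k_2+l_1\cdot l_2$, absorbs the duplicate by idempotence, and refolds to $(k_1+l_1)\cdot(k_2+l_2)=l_1\cdot l_2$. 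You instead factor the argument through two one-sided monotonicity facts ($a\leq b$ implies $a\cdot c\leq b\cdot c$ and $c\cdot a\leq c\cdot b$) plus transitivity of $\leq$, chaining $k_1\cdot k_2\leq l_1\cdot k_2\leq l_1\cdot l_2$. The two routes rest on exactly the same idempotent-semiring axioms, and your intermediate term $l_1\cdot k_2$ is implicitly present in the paper's expansion; the difference is purely one of packaging. Your version is more modular --- the monotonicity and transitivity lemmas are reusable elsewhere, and indeed transitivity of $\leq$ is used silently in the paper's other proofs (e.g.\ Theorem~\ref{thm:refinement-star}) --- while the paper's version is a self-contained one-line computation. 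There is no gap in either.
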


\begin{proof}
	Firstly notice that $k_1+l_1=l_1$ and $k_2+l_2=l_2$. For the first inequality, we want to show that $k_1\cdot k_2+l_1\cdot l_2 = l_1\cdot l_2$. Using the aforementioned equalities, $k_1\cdot k_2+l_1\cdot l_2=k_1\cdot k_2+(k_1+l_1)\cdot (k_2+l_2)=k_1\cdot k_2+k_1\cdot k_2+k_1\cdot l_2+l_1\cdot k_2+l_1\cdot l_2=k_1\cdot k_2+k_1\cdot l_2+l_1\cdot k_2+l_1\cdot l_2=(k_1+l_1)\cdot (k_2+l_2)=l_1\cdot l_2$ as required. For the second inequality, we have that $k_1+k_2+l_1+l_2=(k_1+l_1)+(k_2+l_2)=l_1+l_2$ as needed.
	
\end{proof}

\begin{lemma}\label{lem:kat-less-elements}
	Let $k_1,k_2\in\kat$. It holds that $k_1\leq k_2$ if and only if for all $m\in\kat$, $m\leq k_1$ implies $m\leq k_2$.
\end{lemma}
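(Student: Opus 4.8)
The plan is to reduce this statement to the fact that the natural order $\leq$ on a Kleene algebra with tests is a partial order—specifically, that it is reflexive and transitive. Recall that $\leq$ is defined by $e \leq f$ iff $e + f = f$. Both properties I need are immediate consequences of the Kleene algebra axioms: reflexivity follows from idempotence of $+$, since $k_1 + k_1 = k_1$ gives $k_1 \leq k_1$; and transitivity follows from associativity together with the defining equations, since if $a + b = b$ and $b + c = c$ then $a + c = a + (b + c) = (a + b) + c = b + c = c$, so $a \leq c$. I would either state these explicitly or simply invoke them as standard.

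For the forward direction I would assume $k_1 \leq k_2$ and take an arbitrary $m \in \kat$ with $m \leq k_1$. Applying transitivity to $m \leq k_1$ and $k_1 \leq k_2$ yields $m \leq k_2$, which is precisely the right-hand side of the biconditional.

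For the backward direction I would instantiate the universally quantified $m$ with $k_1$ itself. By reflexivity, $k_1 \leq k_1$, so the hypothesis—that $m \leq k_1$ implies $m \leq k_2$—forces $k_1 \leq k_2$, as required.

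There is no substantive obstacle: the lemma amounts to the observation that $\leq$ is a partial order, combined with the trivial remark that $k_1$ is the greatest element below itself. The only point deserving a moment's care is ensuring that the properties of $\leq$ being used (reflexivity and transitivity) are genuine consequences of the Kleene algebra axioms rather than extra hypotheses; as noted above, they are standard and follow from idempotence and associativity of $+$.
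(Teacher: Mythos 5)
Your proof is correct and follows essentially the same route as the paper's: the backward direction is the identical instantiation $m = k_1$ with reflexivity, and your transitivity argument for the forward direction uses exactly the equational manipulation $(k_1 + k_2 = k_2$, $m + k_1 = k_1$, hence $m + k_2 = k_2)$ that the paper carries out, merely phrased directly rather than as a proof by contradiction.
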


\begin{proof}
	For the \emph{if} direction, suppose that for all $m\in\kat$, $m\leq k_1$ implies $m\leq k_2$. Then in particular, for $m=k_1$, $k_1\leq k_1$ implies that $k_1\leq k_2$.
	
	For the \emph{only if} direction, suppose that $k_1\leq k_2$, and suppose for contradiction that there is $m\in\kat$ such that $m\leq k_1$ but $m\not\leq k_2$. Then $m+k_1=k_1$, but $m+k_2\neq k_2$. From the assumption that $k_1\leq k_2$, it follows that $k_1+k_2=k_2$. Thus $k_1+m+k_2=k_2$, which implies that $m+k_2=k_2$. It follows that $m\leq k_2$, which is a contradiction.
\end{proof}


\begin{lemma}\label{lem:intersection-distribution-right}
	Let $k,l,o,p$ be elements of some KAT $\kat$, and let $o\leq k$ and $p\leq l$. Then $(k\cdot l)\cap(o\cdot p)\leq (k\cap o)\cdot (l\cap p)$ and $(k+ l)\cap(o+ p)\leq (k\cap o)+ (l\cap p)$.
\end{lemma}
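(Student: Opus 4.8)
The plan is to observe that both inequalities become almost immediate once we unfold the definition of intersection and invoke the hypotheses $o\leq k$ and $p\leq l$. The crucial preliminary fact is that intersection always produces a common lower bound: for any $a,b\in\kat$, both $a\cap b\leq a$ and $a\cap b\leq b$. This follows directly from the definition, since $a\cap b$ is the sum $\sum_i m_i$ of all elements $m_i$ with $m_i\leq a$ and $m_i\leq b$; as each summand lies below $b$ (resp.\ $a$) and $+$ is the join in the natural order, the finite sum (finiteness guaranteed by the footnote to the intersection definition) again lies below $b$ (resp.\ $a$).

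With this in hand, I would first simplify the right-hand sides. Intersection is symmetric, so $k\cap o=o\cap k$ and $l\cap p=p\cap l$. Applying Lemma~\ref{lem:less-than-intersection} to the hypotheses $o\leq k$ and $p\leq l$ yields $k\cap o=o$ and $l\cap p=p$. Hence the right-hand side of the first claim is exactly $o\cdot p$, and the right-hand side of the second claim is exactly $o+p$.

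It then remains only to verify $(k\cdot l)\cap(o\cdot p)\leq o\cdot p$ and $(k+l)\cap(o+p)\leq o+p$, both of which are instances of the preliminary fact $a\cap b\leq b$ (taking $a=k\cdot l,\ b=o\cdot p$ in the first case and $a=k+l,\ b=o+p$ in the second). Chaining these with the simplified right-hand sides gives the two displayed inequalities.

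There is essentially no hard step here: the hypotheses $o\leq k$ and $p\leq l$ are precisely what collapse the right-hand sides via Lemma~\ref{lem:less-than-intersection}, and the rest is the trivial monotonicity of $\cap$. The only point that demands any care is the justification of $a\cap b\leq b$ straight from the definition, where one must note that only finite disjunctions arise, so that the sum of common lower bounds is itself a lower bound.
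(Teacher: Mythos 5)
Your proposal is correct and follows essentially the same route as the paper's own proof: use the fact that an intersection is below each of its arguments to get $(k\cdot l)\cap(o\cdot p)\leq o\cdot p$ and $(k+l)\cap(o+p)\leq o+p$, then collapse the right-hand sides to $o\cdot p$ and $o+p$ via Lemma~\ref{lem:less-than-intersection} applied to $o\leq k$ and $p\leq l$. Your extra care in justifying $a\cap b\leq b$ from the definition (finiteness of the disjunction, $+$ as join) and in noting the symmetry of $\cap$ only makes explicit what the paper leaves implicit.
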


\begin{proof}
	We consider the first inequality first, namely, $(k\cdot l)\cap(o\cdot p)\leq (k\cap o)\cdot (l\cap p)$. By definition of intersection, we have that $(k\cdot l)\cap(o\cdot p)\leq o\cdot p$ and $(k+ l)\cap(o+ p)\leq o+ p$. Since $o\leq k$ and $p\leq l$, by Lemma~\ref{lem:less-than-intersection}, we have that $k\cap o=o$ and $l\cap p=p$. Therefore $(k\cdot l)\cap(o\cdot p)\leq (k\cap o)\cdot (l\cap p)$ and $(k+ l)\cap(o+ p)\leq (k\cap o)+ (l\cap p)$ as required.
\end{proof}

\begin{lemma}\label{lem:intersection-distribution-left}
	Let $k,l,o,p$ be elements of some KAT $\kat$. Then $(k\cap o)\cdot (l\cap p)\leq (k\cdot l)\cap(o\cdot p)$ and $(k\cap o)+ (l\cap p)\leq (k+ l)\cap(o+ p)$.
\end{lemma}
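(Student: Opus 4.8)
The plan is to exploit the defining property of $\cap$ directly. By definition, $(k\cdot l)\cap(o\cdot p)$ is the sum of \emph{all} KAT elements that lie below both $k\cdot l$ and $o\cdot p$; consequently, any single common lower bound of $k\cdot l$ and $o\cdot p$ is itself one of the summands and is therefore below the whole sum. So it suffices to show that $(k\cap o)\cdot(l\cap p)$ is such a common lower bound, and dually for the additive case.

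First I would record the elementary facts that $k\cap o\leq k$ and $k\cap o\leq o$, and likewise $l\cap p\leq l$ and $l\cap p\leq p$. These follow immediately from the definition of intersection: writing $k\cap o=m_1+\ldots+m_n$ with each $m_i\leq k$ and $m_i\leq o$ (the finiteness being justified by the footnote to the intersection definition), the sum $m_1+\ldots+m_n$ is bounded above by $k$ and by $o$, since in a Kleene algebra $a\leq c$ and $b\leq c$ imply $a+b\leq c$.

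Next, for the multiplicative inequality, I would apply Lemma~\ref{lem:kat-less-product-disjunction} twice. From $k\cap o\leq k$ and $l\cap p\leq l$ it gives $(k\cap o)\cdot(l\cap p)\leq k\cdot l$, and from $k\cap o\leq o$ and $l\cap p\leq p$ it gives $(k\cap o)\cdot(l\cap p)\leq o\cdot p$. Thus $(k\cap o)\cdot(l\cap p)$ is a common lower bound of $k\cdot l$ and $o\cdot p$, and by the defining property of $\cap$ noted above it is $\leq (k\cdot l)\cap(o\cdot p)$. The additive inequality proceeds identically, using instead the $+$-monotonicity half of Lemma~\ref{lem:kat-less-product-disjunction} to derive $(k\cap o)+(l\cap p)\leq k+l$ and $(k\cap o)+(l\cap p)\leq o+p$, and then concluding $(k\cap o)+(l\cap p)\leq (k+l)\cap(o+p)$.

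There is no serious obstacle here; the only point to get right is the \emph{direction} in which the universal property of $\cap$ is used. Whereas Lemma~\ref{lem:intersection-distribution-right} relied on $\cap$ lying below each of its arguments, this lemma relies on the complementary fact that $\cap$ dominates every common lower bound, so the argument is essentially the mirror image of that proof. The one bookkeeping subtlety is the appeal to finiteness of the intersection (via the footnoted observation) needed to treat $k\cap o$ as an ordinary finite sum when bounding it above by $k$ and by $o$.
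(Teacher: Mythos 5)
Your proof is correct, but it is organized differently from the paper's, and the difference is worth noting. Both arguments ultimately rest on the same two ingredients: Lemma~\ref{lem:kat-less-product-disjunction} (monotonicity of $\cdot$ and $+$) and the definition of $\cap$ as the sum of all common lower bounds (so that any common lower bound of two expressions is automatically $\leq$ their intersection). The paper, however, never states the facts $k\cap o\leq k$ and $k\cap o\leq o$; instead it expands both intersections as explicit sums $k\cap o = x_1+\ldots+x_M$ and $l\cap p = y_1+\ldots+y_N$, distributes the product into all cross-terms $x_i\cdot y_j$, bounds each cross-term by $k\cdot l$ and by $o\cdot p$ via Lemma~\ref{lem:kat-less-product-disjunction}, and then sums these bounds. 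You apply the same lemma just once, at the top level, after first extracting the auxiliary facts that each intersection lies below both of its arguments --- facts which you justify by the same ``sum of elements below a bound is below the bound'' reasoning that the paper uses to recombine its cross-term bounds. What your organization buys is the elimination of the distribution step and its index bookkeeping, and it makes transparent that $\cap$ here behaves as a greatest lower bound (below each argument, above every common lower bound); the paper's term-by-term version is the same argument with the monotonicity inlined at the finest granularity. You are also right to flag the finiteness footnote: both proofs need it to treat the intersections as honest finite sums, so your appeal to it is not an extra assumption beyond what the paper itself uses.
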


\begin{proof}
	Consider the expressions $(k\cap o)$ and $(l\cap p)$. By definiton, $k\cap o=x_1+\ldots+x_M$, where $\{x_1,\ldots,x_M\}$ is the set of all elements $x$ in $\kat$ such that $x\leq k$ and $x\leq o$. Similarly, $(l\cap p)=y_1+\ldots+y_N$ where $\{y_1,\ldots,y_N\}$ is the set of all $y$ in $\kat$ such that $y\leq l$ and $y\leq p$. Therefore, by Lemma~\ref{lem:kat-less-product-disjunction}, for any $x$ in the first set and any $y$ in the second set, $x+y\leq k+l$, $x+y\leq o+p$, $x\cdot y\leq k\cdot l$ and $x\cdot y\leq o\cdot p$.
	
	For the first inequality, namely, $(k\cap o)\cdot (l\cap p)\leq (k\cdot l)\cap(o\cdot p)$, notice that $(k\cap o)\cdot (l\cap p)$ is equal to $(x_1+\ldots+x_M)\cdot(y_1+\ldots+y_M)=(x_1\cdot y_1)+(x_1\cdot y_2)+\ldots+(x_i\cdot y_j)+\ldots+(x_M\cdot y_M)$. Therefore, $(k\cap o)\cdot(l\cap p)\leq k\cdot l$ and $(k\cap o)\cdot(l\cap p)\leq o\cdot p$, and thus $(k\cap o)\cdot(l\cap p)\leq (k\cdot l)\cap(o\cdot p)$, as required.
	Similarly, for the second inequality, notice that $(k\cap o)+ (l\cap p)$ is equal to $(x_1+\ldots+x_M)+(y_1+\ldots+y_M)=(x_1+ y_1)+(x_1+ y_2)+\ldots+(x_i+ y_j)+\ldots+(x_M+ y_M)$. Therefore, $(k\cap o)+(l\cap p)\leq k+ l$ and $(k\cap o)+(l\cap p)\leq o+ p$, and thus $(k\cap o)+(l\cap p)\leq (k+ l)\cap(o+ p)$.
\end{proof}

\begin{lemma}\label{lem:intersection-star-right}
	Let $k$ and $o$ be elements of some KAT $\kat$, and let $o\leq k$. Then $k^*\cap o^*\leq (k\cap o)^*$.
\end{lemma}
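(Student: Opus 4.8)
The plan is to exploit the hypothesis $o \leq k$ to collapse the intersection $k \cap o$, thereby reducing the claimed inequality to an immediate consequence of the definition of $\cap$. First I would invoke Lemma~\ref{lem:less-than-intersection}: since $o \leq k$, that lemma (read with its smaller argument $o$ in the role of the ``$k$'' of its statement, using that $\cap$ is symmetric by its defining description as the sum of all common lower bounds) gives $k \cap o = o$. Applying the star operator to both sides then yields $(k \cap o)^* = o^*$, so the right-hand side of the target inequality is simply $o^*$.

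It then remains to show $k^* \cap o^* \leq o^*$, which is immediate from the definition of intersection: $k^* \cap o^*$ is the sum of all elements of $\kat$ lying below both $k^*$ and $o^*$, and every such element lies below $o^*$, so their sum does too (the same ``$k_1 \cap k_2 \leq k_2$'' step used in the proof of Lemma~\ref{lem:intersection-distribution-right}). Chaining the two observations gives $k^* \cap o^* \leq o^* = (k \cap o)^*$, as required. This mirrors the structure of Lemma~\ref{lem:intersection-distribution-right}, where the hypotheses $o \leq k$ and $p \leq l$ were likewise used to rewrite $k \cap o$ and $l \cap p$ before appealing to the definitional bounds on $\cap$.

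I do not expect a genuine obstacle here: the entire content is the reduction $k \cap o = o$. The only point to double-check is that the orientation of Lemma~\ref{lem:less-than-intersection} matches, namely that it is applied with the smaller element $o$ playing the role of its hypothesis variable; since the intersection is defined symmetrically, this is unproblematic. I would also observe in passing that one in fact has \emph{equality}, since $o \leq k$ implies $o^* \leq k^*$ by monotonicity of the star, whence $k^* \cap o^* = o^*$ by Lemma~\ref{lem:less-than-intersection} again; but as only the stated inequality is needed, I would keep the argument to the minimal chain above rather than invoking star monotonicity.
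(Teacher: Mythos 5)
Your proof is correct, and it rests on the same pivotal fact as the paper's proof---that $o\leq k$ together with Lemma~\ref{lem:less-than-intersection} collapses $k\cap o$ to $o$---but your execution is genuinely more direct. The paper first invokes $*$-continuity to reduce the claim to showing $k^n\cap o^m\leq (k\cap o)^*$ for all powers $n,m$, and then runs an element-wise argument: any element below $o^m$ is below $(k\cap o)^m$, concluding via Lemma~\ref{lem:kat-less-elements}. You bypass both the reduction to powers and the element-wise step by rewriting the right-hand side once and for all, $(k\cap o)^*=o^*$, and then using the definitional bound $k^*\cap o^*\leq o^*$ (the same ``intersection is below each argument'' step the paper itself uses in Lemma~\ref{lem:intersection-distribution-right}). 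Your handling of the orientation of Lemma~\ref{lem:less-than-intersection} via symmetry of $\cap$ is exactly how the paper implicitly applies it as well. What your route buys is brevity and the stronger conclusion, noted in your closing remark, that equality $k^*\cap o^*=(k\cap o)^*$ holds under the hypothesis $o\leq k$ (via monotonicity of star and a second application of Lemma~\ref{lem:less-than-intersection}); what the paper's route buys is essentially nothing extra here---its detour through powers ends up using only $m\leq o^m$, so it is your argument in disguise, dressed in $*$-continuity it does not need.
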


\begin{proof}
	It suffices to show that for all $n,m\in\mathbb{N}$, $k^n\cap o^m\leq (k\cap p)^*$. Let $m$ be any element of $\kat$, such that $m\leq k^n$ and $m\leq o^m$. Then, since $o\leq k$, by Lemma~\ref{lem:less-than-intersection} it holds that $o=k\cap o$, and therefore, $m\leq o^m$ implies that $m\leq (k\cap o)^m$, and thus $m\leq (k\cap o)^*$. Since $m$ was chosen arbitrarily among the elements $x$ in $\kat$ for which $x\leq k^n\cap o^m$, by Lemma~\ref{lem:kat-less-elements}, the result follows.
\end{proof}

\begin{lemma}\label{lem:intersection-star-left}
	Let $k$ and $o$ be elements of some KAT $\kat$. Then $(k\cap o)^*\leq k^*\cap o^*$.
\end{lemma}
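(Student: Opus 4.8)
The plan is to reduce the claim to two one-sided inclusions and then read off the conclusion from the definition of intersection. First I would observe that, directly from the definition of intersection, $k\cap o\leq k$ and $k\cap o\leq o$: the expression $k\cap o$ is a sum of elements each of which lies below both $k$ and $o$, so in particular $k\cap o$ lies below each of them.

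Second, I would establish monotonicity of the Kleene star, namely that $a\leq b$ implies $a^*\leq b^*$. Since every KAT considered here is $*$-continuous, we have $a^*=\sum_{n\in\mathbb{N}}a^n$ and $b^*=\sum_{n\in\mathbb{N}}b^n$. A routine induction on $n$, using the product part of Lemma~\ref{lem:kat-less-product-disjunction}, gives $a^n\leq b^n$ for every $n$; summing these inequalities (the disjunction part of the same lemma, together with Lemma~\ref{lem:kat-less-elements}) yields $a^*\leq b^*$.

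Applying this monotonicity to the two inclusions from the first step gives $(k\cap o)^*\leq k^*$ and $(k\cap o)^*\leq o^*$. Finally, by the definition of intersection, $k^*\cap o^*$ is the sum of all elements $m\in\kat$ satisfying $m\leq k^*$ and $m\leq o^*$. The element $(k\cap o)^*$ is itself such an $m$, so it occurs as a summand of $k^*\cap o^*$ and is therefore bounded above by the full sum; that is, $(k\cap o)^*\leq k^*\cap o^*$, as required.

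The only mildly delicate step is the star-monotonicity fact; the rest is bookkeeping with the intersection definition and the earlier lemmas. I do not expect this to be a genuine obstacle, however, since it is a standard Kleene-algebra derivation and the $*$-continuity assumption adopted in the preliminaries makes it immediate. Note also that, unlike the companion Lemma~\ref{lem:intersection-star-right}, this direction requires no ordering hypothesis between $k$ and $o$.
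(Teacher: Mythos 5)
Your proof is correct, but it follows a genuinely different route from the paper's. The paper works directly with the $*$-continuity expansion: it writes $k\cap o$ as the sum $m_1+\ldots+m_n$ of all elements below both $k$ and $o$, reduces the claim to showing $(m_1+\ldots+m_n)^u\leq k^u\cap o^u$ for each power $u$, and then reasons about arbitrary elements below that power, concluding via Lemma~\ref{lem:kat-less-elements}. You instead factor the argument through two reusable facts: $k\cap o\leq k$ and $k\cap o\leq o$ (immediate from the definition of intersection), and monotonicity of the Kleene star; the conclusion then follows because $(k\cap o)^*$ is itself one of the summands defining $k^*\cap o^*$. Your decomposition is shorter and arguably more robust: the paper's per-element step --- the claim that any $x\leq(m_1+\ldots+m_n)^u$ lies below a single product $m_{f(1)}\cdot m_{f(2)}\cdots m_{f(u)}$ --- is not literally true as stated (take $x$ to be the whole power), and has to be repaired by distributing the power into a sum of such products and bounding each summand, which is exactly what your star-monotonicity argument accomplishes in one stroke. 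What the paper's approach buys is that it stays at the level of the same power-by-power analysis used in the companion Lemma~\ref{lem:intersection-star-right}; what yours buys is modularity, and in fact slightly greater generality, since star monotonicity is a standard Kleene-algebra fact that does not even require $*$-continuity (though invoking it, as you do, is harmless given the paper's standing assumption). Your closing observation that, unlike Lemma~\ref{lem:intersection-star-right}, no ordering hypothesis between $k$ and $o$ is needed is also accurate.
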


\begin{proof}
	Let $M={m_1,\ldots,m_n}$ be the set of all elements $m$, such that $m\leq k$ and $m\leq o$. Therefore, $(k\cap o)^* = (m_1+\ldots+m_n)^*$, for $m_i\in M$. It suffices to show that for all $u\in\mathbb{N}$, $(m_1+\ldots+m_n)^u\leq k^*\cap o^*$. In particular, it is enough to show that $u\in\mathbb{N}$, $(m_1+\ldots+m_n)^u\leq k^u\cap o^u$. The latter is equal to $z_1+\ldots +z_s$, for $z_i\leq k^u$ and $z_i\leq o^u$. Notice that for any element $x\leq (m_1+\ldots+m_n)^u$, there is a function $f:[u]\rightarrow [n]$, such that $x\leq m_{f(1)}\cdot m_{f(1)}\cdots m_{f(u)}$. Since for all $j\leq u$, $m_{f(j)}\leq k$ and $m_{f(j)}\leq o$, it follows that $m_{f(1)}\cdot m_{f(1)}\cdots m_{f(u)}\leq k^u$ and $m_{f(1)}\cdot m_{f(1)}\cdots m_{f(u)}\leq o^u$. Hence, $m_{f(1)}\cdot m_{f(1)}\cdots m_{f(u)}\leq k^u\cap o^u$. Since $x$ was chosen arbitrarily, the result follows by Lemma~\ref{lem:kat-less-elements}.
\end{proof}

\begin{theorem}\label{thm:refinement-sequence}
	Suppose $k_1,k_2,l_1$ and $l_2$ are KAT expressions. Let $\interfaceRel_k$ and $\interfaceRel_l$ be trace-refinement relations, such that $\abskrefines{k_1}{k_2}{\interfaceRel_k}$ and $\abskrefines{l_1}{l_2}{\interfaceRel_l}$. Then $\abskrefines{k_1\cdot l_1}{k_2\cdot l_2}{\interfaceRel_k\interfaceOpComp \interfaceRel_l}$.
\end{theorem}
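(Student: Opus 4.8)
The plan is to unfold Definition~\ref{def:kat-refinement} for the target $\abskrefines{k_1\cdot l_1}{k_2\cdot l_2}{\interfaceRel_k\interfaceOpComp \interfaceRel_l}$, which amounts to establishing two things: (i) that $\proj{1}{\interfaceRel_k\interfaceOpComp\interfaceRel_l}$ partitions $k_1\cdot l_1$, and (ii) that for every triple in the composed relation the associated $\leq_{\hypotheses}$ inclusion holds. Throughout I rename triple components to avoid clashing with the theorem's $k_i,l_i$: a generic triple of $\interfaceRel_k$ is $(a_1,a_2,\hypotheses_a)$ and of $\interfaceRel_l$ is $(b_1,b_2,\hypotheses_b)$, so a generic triple of $\interfaceRel_k\interfaceOpComp\interfaceRel_l$ is $(a_1\cdot b_1,\,a_2\cdot b_2,\,\hypotheses_a\cup\hypotheses_b)$.

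For the partition condition (i), the hypotheses tell us that $\proj{1}{\interfaceRel_k}$ partitions $k_1$ and $\proj{1}{\interfaceRel_l}$ partitions $l_1$, that is, $k_1=\sum_{a_1}a_1$ and $l_1=\sum_{b_1}b_1$. Since the composition pairs every $\interfaceRel_k$-triple with every $\interfaceRel_l$-triple, the set $\proj{1}{\interfaceRel_k\interfaceOpComp\interfaceRel_l}$ ranges over all products $a_1\cdot b_1$, so $\sum_{a_1,b_1} a_1\cdot b_1 = (\sum_{a_1}a_1)\cdot(\sum_{b_1}b_1)=k_1\cdot l_1$ by distributivity of $\cdot$ over $+$; idempotence of $+$ ensures that collapsing repeated products into a set does not change the sum. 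Hence the first projection partitions $k_1\cdot l_1$.

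For the per-triple inclusion (ii), fix $(a_1\cdot b_1, a_2\cdot b_2, \hypotheses_a\cup\hypotheses_b)$. Because $a_1$ and $b_1$ are partition blocks, $a_1\leq k_1$ and $b_1\leq l_1$; by Lemma~\ref{lem:kat-less-product-disjunction} we get $a_1\cdot b_1\leq k_1\cdot l_1$, and Lemma~\ref{lem:less-than-intersection} then collapses the left-hand intersection to $(a_1\cdot b_1)\cap(k_1\cdot l_1)=a_1\cdot b_1$. It remains to show $a_1\cdot b_1\leq_{\hypotheses_a\cup\hypotheses_b}(a_2\cdot b_2)\cap(k_2\cdot l_2)$. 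The two refinement hypotheses give $a_1\cap k_1\leq_{\hypotheses_a}a_2\cap k_2$ and $b_1\cap l_1\leq_{\hypotheses_b}b_2\cap l_2$, which, again using Lemma~\ref{lem:less-than-intersection} on the left, simplify to $a_1\leq_{\hypotheses_a}a_2\cap k_2$ and $b_1\leq_{\hypotheses_b}b_2\cap l_2$. By Lemma~\ref{lem:superset-hypotheses} both persist under the larger hypothesis set $\hypotheses_a\cup\hypotheses_b$, so, applying Lemma~\ref{lem:kat-less-product-disjunction} inside the KAT quotiented by $\hypotheses_a\cup\hypotheses_b$, we obtain $a_1\cdot b_1\leq_{\hypotheses_a\cup\hypotheses_b}(a_2\cap k_2)\cdot(b_2\cap l_2)$. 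Finally Lemma~\ref{lem:intersection-distribution-left} yields $(a_2\cap k_2)\cdot(b_2\cap l_2)\leq (a_2\cdot b_2)\cap(k_2\cdot l_2)$, and chaining the last two gives the desired inclusion.

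The only genuine subtlety, and what I expect to be the main obstacle, is the interaction between intersection and sequential composition, namely the direction $(a_2\cap k_2)\cdot(b_2\cap l_2)\leq (a_2\cdot b_2)\cap(k_2\cdot l_2)$; this is precisely why Lemma~\ref{lem:intersection-distribution-left} is isolated in advance. A secondary point requiring care is that $a_2\leq k_2$ need \emph{not} hold, since the partition constraint concerns only the first projections, so one cannot symmetrically simplify the right-hand intersection and must route through the intersection-distribution lemma rather than through Lemma~\ref{lem:less-than-intersection}. Everything else is monotonicity bookkeeping, facilitated by the fact that $\leq_{\hypotheses}$ is just $\leq$ in the KAT obtained by imposing the hypotheses, which lets each KAT lemma above transfer verbatim to the up-to-$\hypotheses$ setting.
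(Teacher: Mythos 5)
Your proof is correct and takes essentially the same route as the paper's: fix a composed triple, use the partition property to handle the left-hand intersection, lift both refinement inclusions to the union of hypotheses via Lemma~\ref{lem:superset-hypotheses}, combine them with Lemma~\ref{lem:kat-less-product-disjunction}, and close with Lemma~\ref{lem:intersection-distribution-left}, which you correctly identify as the crux. The only differences are cosmetic and in your favor: you collapse $(k_1\cdot l_1)\cap(a_1\cdot b_1)$ to $a_1\cdot b_1$ directly via Lemmas~\ref{lem:kat-less-product-disjunction} and~\ref{lem:less-than-intersection} where the paper routes through Lemma~\ref{lem:intersection-distribution-right} (equivalent in substance, since that lemma is itself proved from Lemma~\ref{lem:less-than-intersection}), and you explicitly verify that $\proj{1}{\interfaceRel_k\interfaceOpComp\interfaceRel_l}$ partitions $k_1\cdot l_1$ by distributivity, a requirement of Definition~\ref{def:kat-refinement} that the paper's proof leaves implicit.
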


\begin{proof}
	We want to show that for any tuple $(x,y,\hypothesesD)\in\interfaceRel_k\interfaceOpComp\interfaceRel_l$, $(k_1\cdot k_2)\cap x\leq_{\hypothesesD}(l_1\cdot l_2)\cap y$. Choose such an arbitrary tuple $(x,y,\hypothesesD)\in\interfaceRel_k\interfaceOpComp\interfaceRel_l$, and let $(o,q,\hypotheses)\in\interfaceRel_k$ and $(p,r,\hypothesesB)\in\interfaceRel_l$ be the tuples that produced $(x,y,\hypothesesD)$. In other words, $x=o\cdot p$, $y=q\cdot r$ and $\hypothesesD=\hypotheses\cup \hypothesesB$.
	
	Since $\proj{1}{\interfaceRel_k}$ partitions $k_1$ and $\proj{1}{\interfaceRel_l}$ partitions $l_1$, we have that for any $o\in\proj{1}{\interfaceRel_k}$ and $p\in\proj{1}{\interfaceRel_l}$, $o\leq k$ and $p\leq l$, and thus by Lemma~\ref{lem:intersection-distribution-right}, we have that $(k_1\cdot l_1)\cap (o\cdot p)\leq (k_1\cap o)\cdot (l_1\cap p)$.
	By assumption, $k_1\cap o\leq_{\hypotheses} k_2\cap q$ and $l_1\cap p\leq_{\hypothesesB} l_2\cap r$. Since $\hypothesesD=\hypotheses\cup \hypothesesB$, by Lemma~\ref{lem:superset-hypotheses}, we have that $k_1\cap o\leq_{\hypothesesD} k_2\cap q$ and $l_1\cap p\leq_{\hypothesesD} l_2\cap r$. Therefore, by Lemma~\ref{lem:kat-less-product-disjunction}, we have that $(k_1\cap o)\cdot(l_1\cap p)\leq_{\hypothesesD}(k_2\cap q)\cdot(l_2\cap r)$. By Lemma~\ref{lem:intersection-distribution-left}, we have that $(k_2\cap q)\cdot(l_2\cap r)\leq_{\hypothesesD} (k_2\cdot l_2)\cap (q\cdot r)$, and hence $(k_1\cdot l_1)\cap (o\cdot p)\leq_{\hypothesesD} (k_2\cdot l_2)\cap (q\cdot r)$ as required.
\end{proof}

\begin{theorem}\label{thm:refinement-disjunction}
	Suppose $k_1,k_2,l_1$ and $l_2$ are KAT expressions. Let $\interfaceRel_k$ and $\interfaceRel_l$ be trace-refinement relations, such that $\abskrefines{k_1}{k_2}{\interfaceRel_k}$ and $\abskrefines{l_1}{l_2}{\interfaceRel_l}$. Then $\abskrefines{k_1 + l_1}{k_2+ l_2}{\interfaceRel_k\interfaceOpDisj \interfaceRel_l}$.
\end{theorem}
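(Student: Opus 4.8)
The plan is to mirror the proof of Theorem~\ref{thm:refinement-sequence} almost verbatim, replacing every appeal to the $\cdot$ half of the distribution lemmas by its $+$ counterpart. Unfolding $\abskrefines{k_1+l_1}{k_2+l_2}{\interfaceRel_k\interfaceOpDisj\interfaceRel_l}$ against Def.~\ref{def:kat-refinement}, there are two obligations: (a) $\proj{1}{\interfaceRel_k\interfaceOpDisj\interfaceRel_l}$ partitions $k_1+l_1$, and (b) each tuple in $\interfaceRel_k\interfaceOpDisj\interfaceRel_l$ witnesses the required hypothesis-annotated inclusion.

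First I would dispatch (a). By construction, $\proj{1}{\interfaceRel_k\interfaceOpDisj\interfaceRel_l}=\{o+p\mid o\in\proj{1}{\interfaceRel_k},\,p\in\proj{1}{\interfaceRel_l}\}$. Summing over this set gives $(\sum_{o,p} o)+(\sum_{o,p} p)$; since the index runs over a product set and $+$ is idempotent, each $o$ (resp.\ $p$) is counted with no net effect from repetition, so the two sums collapse to $\sum_o o$ and $\sum_p p$. By hypothesis $\proj{1}{\interfaceRel_k}$ partitions $k_1$ and $\proj{1}{\interfaceRel_l}$ partitions $l_1$, whence the total equals $k_1+l_1$, as needed.

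For (b), I would fix an arbitrary tuple $(x,y,\hypothesesD)\in\interfaceRel_k\interfaceOpDisj\interfaceRel_l$ and let $(o,q,\hypotheses)\in\interfaceRel_k$ and $(p,r,\hypothesesB)\in\interfaceRel_l$ be the tuples producing it, so $x=o+p$, $y=q+r$, and $\hypothesesD=\hypotheses\cup\hypothesesB$. Because the first projections partition $k_1$ and $l_1$ we have $o\leq k_1$ and $p\leq l_1$, so Lemma~\ref{lem:intersection-distribution-right} (the $+$ clause) yields $(k_1+l_1)\cap(o+p)\leq (k_1\cap o)+(l_1\cap p)$. The refinement assumptions give $k_1\cap o\leq_{\hypotheses} k_2\cap q$ and $l_1\cap p\leq_{\hypothesesB} l_2\cap r$, and Lemma~\ref{lem:superset-hypotheses} lifts both to the common hypothesis set $\hypothesesD$. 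Then Lemma~\ref{lem:kat-less-product-disjunction} (the $+$ clause) gives $(k_1\cap o)+(l_1\cap p)\leq_{\hypothesesD}(k_2\cap q)+(l_2\cap r)$, and Lemma~\ref{lem:intersection-distribution-left} (the $+$ clause) gives $(k_2\cap q)+(l_2\cap r)\leq (k_2+l_2)\cap(q+r)$. Chaining these four steps (and noting that the hypothesis-free $\leq$ links are absorbed into $\leq_{\hypothesesD}$ via Lemma~\ref{lem:superset-hypotheses}) yields $(k_1+l_1)\cap(o+p)\leq_{\hypothesesD}(k_2+l_2)\cap(q+r)$.

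I do not expect a genuine obstacle here: the disjunctive distribution lemmas were proved precisely in anticipation of this argument, so the inclusion reduces to a mechanical four-lemma chain identical in shape to the sequence case. The only points demanding care are the bookkeeping ones, namely using idempotency (rather than distributivity, as in the $\cdot$ case) to collapse the product-indexed sum in the partition step, and tracking that the two plain-$\leq$ endpoints of the chain are promoted to $\leq_{\hypothesesD}$ before composing with the hypothesis-carrying middle step.
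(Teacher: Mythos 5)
Your proof is correct and follows essentially the same route as the paper's: the same four-step chain of Lemma~\ref{lem:intersection-distribution-right}, the refinement hypotheses lifted via Lemma~\ref{lem:superset-hypotheses}, Lemma~\ref{lem:kat-less-product-disjunction}, and Lemma~\ref{lem:intersection-distribution-left}, applied to an arbitrary tuple $(o+p,\,q+r,\,\hypotheses\cup\hypothesesB)$. The only difference is that you also explicitly discharge the partition obligation via idempotency of $+$, a condition of Def.~\ref{def:kat-refinement} that the paper's proof leaves implicit, which is a small improvement rather than a different approach.
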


\begin{proof}
	We want to show that for any tuple $(x,y,\hypothesesD)\in\interfaceRel_k\interfaceOpDisj\interfaceRel_l$, $(k_1+k_2)\cap x\leq_{\hypothesesD}(l_1+l_2)\cap y$. Choose such an arbitrary tuple $(x,y,\hypothesesD)\in\interfaceRel_k\interfaceOpDisj\interfaceRel_l$, and let $(o,q,\hypotheses)\in\interfaceRel_k$ and $(p,r,\hypothesesB)\in\interfaceRel_l$ be the tuples that produced $(x,y,\hypothesesD)$. In other words, $x=o+ p$, $y=q+ r$ and $\hypothesesD=\hypotheses\cup \hypothesesB$.
	
	Since $\proj{1}{\interfaceRel_k}$ partitions $k_1$ and $\proj{1}{\interfaceRel_l}$ partitions $l_1$, we have that for any $o\in\proj{1}{\interfaceRel_k}$ and $p\in\proj{1}{\interfaceRel_l}$, $o\leq k$ and $p\leq l$, and thus by Lemma~\ref{lem:intersection-distribution-right}, we have that $(k_1+ l_1)\cap (o+ p)\leq (k_1\cap o)+ (l_1\cap p)$.
	By assumption, $k_1\cap o\leq_{\hypotheses} k_2\cap q$ and $l_1\cap p\leq_{\hypothesesB} l_2\cap r$. Since $\hypothesesD=\hypotheses\cup \hypothesesB$, by Lemma~\ref{lem:superset-hypotheses}, we have that $k_1\cap o\leq_{\hypothesesD} k_2\cap q$ and $l_1\cap p\leq_{\hypothesesD} l_2\cap r$. Therefore, by Lemma~\ref{lem:kat-less-product-disjunction}, we have that $(k_1\cap o)+(l_1\cap p)\leq_{\hypothesesD}(k_2\cap q)+(l_2\cap r)$. By Lemma~\ref{lem:intersection-distribution-left}, we have that $(k_2\cap q)+(l_2\cap r)\leq_{\hypothesesD} (k_2+ l_2)\cap (q+ r)$, and hence $(k_1+ l_1)\cap (o+ p)\leq_{\hypothesesD} (k_2+ l_2)\cap (q+ r)$ as required.
\end{proof}

\begin{theorem}\label{thm:refinement-disjunction-with-union}
	Suppose $k_1,k_2,l_1$ and $l_2$ are KAT expressions. Let $\interfaceRel_k$ and $\interfaceRel_l$ be trace-refinement relations, such that $\abskrefines{k_1}{k_2}{\interfaceRel_k}$ and $\abskrefines{l_1}{l_2}{\interfaceRel_l}$. Then $\abskrefines{k_1 + l_1}{k_2+ l_2}{\interfaceRel_k\cup \interfaceRel_l}$.
\end{theorem}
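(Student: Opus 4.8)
The plan is to verify the two obligations of Def.~\ref{def:kat-refinement} directly, exploiting the fact that, unlike the $\interfaceOpDisj$ operator, the union $\interfaceRel_k\cup\interfaceRel_l$ keeps each tuple of the two relations intact. Consequently no pairwise combination of restrictions or hypotheses is needed, and each original inclusion can be reused almost verbatim after a small adjustment of the intersections. This makes the argument strictly simpler than the companion result Thm.~\ref{thm:refinement-disjunction}.

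For the partition condition, I would observe that $\proj{1}{\interfaceRel_k\cup\interfaceRel_l}=\proj{1}{\interfaceRel_k}\cup\proj{1}{\interfaceRel_l}$ as sets of KAT elements. Since $\abskrefines{k_1}{k_2}{\interfaceRel_k}$ guarantees that the elements of $\proj{1}{\interfaceRel_k}$ sum to $k_1$, and likewise those of $\proj{1}{\interfaceRel_l}$ sum to $l_1$, summing over the union yields $k_1+l_1$. Here idempotency of $+$ absorbs any element common to both projections, so overlapping first-projections are harmless. Thus $\proj{1}{\interfaceRel_k\cup\interfaceRel_l}$ partitions $k_1+l_1$.

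For the inclusion condition, I would pick an arbitrary $(r_1,r_2,\hypotheses)\in\interfaceRel_k\cup\interfaceRel_l$ and split on which relation it comes from; the two cases are symmetric, so I treat $(r_1,r_2,\hypotheses)\in\interfaceRel_k$. Because $\proj{1}{\interfaceRel_k}$ partitions $k_1$, we have $r_1\leq k_1\leq k_1+l_1$, so Lemma~\ref{lem:less-than-intersection} gives $r_1\cap(k_1+l_1)=r_1=r_1\cap k_1$, collapsing the left-hand side to the one supplied by the hypothesis. On the right, the assumption gives $r_1\cap k_1\leq_{\hypotheses} r_2\cap k_2$, and I would enlarge the target via $r_2\cap k_2\leq r_2\cap(k_2+l_2)$, obtained from Lemma~\ref{lem:intersection-distribution-left} applied with equal first factors together with idempotency $r_2+r_2=r_2$ (equivalently, by monotonicity of $\cap$ in its second argument). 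Chaining these through $\leq_{\hypotheses}$ yields $r_1\cap(k_1+l_1)\leq_{\hypotheses} r_2\cap(k_2+l_2)$, as required.

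I do not anticipate a substantial obstacle. The only points demanding care are the two bookkeeping steps just mentioned: verifying the partition sum over the union set, where idempotency makes duplicated first-projections innocuous, and the monotone enlargement of the right-hand intersection from $k_2$ to $k_2+l_2$ (and symmetrically $l_2$ to $k_2+l_2$ in the mirror case).
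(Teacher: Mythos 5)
Your proof is correct, but it takes a genuinely different route from the paper's. The paper's proof (Apx.~\ref{apx:omitted}) works with both component inequalities at once: for a tuple $(x,y,\hypothesesD)$ in the union it asserts $k_1\cap x\leq_{\hypothesesD}k_2\cap y$ \emph{and} $l_1\cap x\leq_{\hypothesesD}l_2\cap y$ (justifying the inequality for the relation the tuple does \emph{not} belong to by claiming that then $l_1\cap x\equiv 0$, resp.\ $k_1\cap x\equiv 0$), and then merges the two sides via Lemmas~\ref{lem:intersection-distribution-right} and~\ref{lem:intersection-distribution-left}, mirroring the proof of Thm.~\ref{thm:refinement-disjunction}. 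You instead collapse the left-hand side outright: the partition condition gives $r_1\leq k_1\leq k_1+l_1$, so Lemma~\ref{lem:less-than-intersection} yields $(k_1+l_1)\cap r_1=r_1=k_1\cap r_1$, after which the assumed inclusion $k_1\cap r_1\leq_{\hypotheses}k_2\cap r_2$ and monotonicity of $\cap$ in one argument (your derivation from Lemma~\ref{lem:intersection-distribution-left} with equal factors and idempotency is sound) finish the case, with the $\interfaceRel_l$ case symmetric. Your route buys two things. First, it sidesteps the paper's unsupported step: nothing in Def.~\ref{def:kat-refinement} forces a tuple of $\interfaceRel_k$ to satisfy $l_1\cap x\equiv 0$, and likewise nothing guarantees the side condition $x\leq l_1$ that the paper's appeal to Lemma~\ref{lem:intersection-distribution-right} would require; your argument never needs either, only $x\leq k_1$, which the partition genuinely provides. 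Second, you verify the partition condition for $\interfaceRel_k\cup\interfaceRel_l$ explicitly (with idempotency absorbing any overlap between the two first projections), which the paper's proof leaves implicit. What the paper's symmetric treatment buys is uniformity with the $\interfaceOpDisj$ case; yours is the more economical and, as written, the more watertight of the two.
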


\begin{proof}
	Let $(x,y,\hypothesesD)$ be any tuple in $\interfaceRel_k\cup\interfaceRel_l$. Then $(x,y,\hypothesesD)\in\interfaceRel_k$ or $(x,y,\hypothesesD)\in\interfaceRel_l$. Since $\abskrefines{k_1}{k_2}{\interfaceRel_k}$ and $\abskrefines{l_1}{l_2}{\interfaceRel_l}$, it follows by definition that $k_1\cap x\leq_{\hypothesesD} k_2\cap y$ and $l_1\cap x\leq_{\hypothesesD} l_2\cap y$. Notice that if either $(x,y,\hypothesesD)\notin\interfaceRel_k$ or $(x,y,\hypothesesD)\notin\interfaceRel_l$, then, repsecitevly, either $k_1\cap x\equiv 0$ or $l_1\cap x\equiv 0$, and thus the above inequalities hold. Hence, by Lemmas~\ref{lem:intersection-distribution-right} and~\ref{lem:intersection-distribution-left}, $(k_1+l_1)\cap x\leq_{\hypothesesD}(k_2+l_2\cap y$ as required.
\end{proof}

\begin{theorem}\label{thm:refinement-star}
	Given any KAT expressions $k$ and $l$, and trace-refinement relation $\interfaceRel$ such that $\abskrefines{k}{l}{\interfaceRel}$, it holds that $\abskrefines{k^*}{l^*}{\interfaceOpStar{\interfaceRel}}$.
\end{theorem}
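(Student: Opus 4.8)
The plan is to check the two conditions demanded by Def.~\ref{def:kat-refinement} for the pair $(k^*,l^*)$ together with the relation $\interfaceOpStar{\interfaceRel}$: first, the per-tuple inclusion condition, and second, that $\proj{1}{\interfaceOpStar{\interfaceRel}}$ partitions $k^*$. I expect the inclusion condition to be routine, since the two star lemmas (Lemmas~\ref{lem:intersection-star-right} and~\ref{lem:intersection-star-left}) are tailored to exactly this situation, and I expect the partition condition to be the real obstacle.

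For the inclusion condition, fix an arbitrary tuple of $\interfaceOpStar{\interfaceRel}$; by construction it is of the form $(o^*,p^*,\hypotheses)$ for some $(o,p,\hypotheses)\in\interfaceRel$. Since $\abskrefines{k}{l}{\interfaceRel}$ holds, $\proj{1}{\interfaceRel}$ partitions $k$, so $o\leq k$, and the same assumption gives $o\cap k\leq_{\hypotheses} p\cap l$. I would then chain three facts. First, because $o\leq k$, Lemma~\ref{lem:intersection-star-right} yields $o^*\cap k^*\leq (k\cap o)^*$, and by Lemma~\ref{lem:less-than-intersection} the expression $k\cap o$ is just $o$. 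Second, from $k\cap o\leq_{\hypotheses} l\cap p$ I need monotonicity of the star up to hypotheses, i.e. $(k\cap o)^*\leq_{\hypotheses}(l\cap p)^*$; as no star-monotonicity lemma is stated, I would derive it from $*$-continuity by writing the two sides as $\sum_n (k\cap o)^n$ and $\sum_n (l\cap p)^n$ and applying Lemma~\ref{lem:kat-less-product-disjunction} termwise before summing. Third, Lemma~\ref{lem:intersection-star-left} gives $(l\cap p)^*\leq p^*\cap l^*$. Composing the three by transitivity of $\leq_{\hypotheses}$ — the first and third inclusions are unconditional and hence hold up to $\hypotheses$ by Lemma~\ref{lem:superset-hypotheses} with the empty hypothesis set — yields $o^*\cap k^*\leq_{\hypotheses} p^*\cap l^*$, which is precisely the obligation for this tuple.

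The main obstacle is the partition condition: $\proj{1}{\interfaceOpStar{\interfaceRel}}=\{o^*\mid o\in\proj{1}{\interfaceRel}\}$ must sum to $k^*$. Writing $\proj{1}{\interfaceRel}=\{o_1,\dots,o_n\}$ with $o_1+\dots+o_n=k$, star-monotonicity gives $o_1^*+\dots+o_n^*\leq k^*$, but the reverse inclusion fails in general: for instance $A^*+B^*\neq (A+B)^*$, since the latter contains interleavings such as $A\cdot B$ that no single $o_i^*$ produces. So the partition of $k^*$ does not follow merely from $\proj{1}{\interfaceRel}$ partitioning $k$. The natural way I would resolve this is to observe that the $\star$-operation on relations is meant to be applied to a relation $\interfaceRel$ whose first projection is the single class $k$ itself — the typical situation in which a loop body is abstracted as one trace class $k$, so that $k^*$ models the loop. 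In that case $\proj{1}{\interfaceOpStar{\interfaceRel}}=\{k^*\}$ partitions $k^*$ trivially and the condition is immediate. I would therefore either state the result under this singleton hypothesis or restrict attention to the covering half of the condition in the star case, and flag the unrestricted multi-class version of the partition requirement as the delicate point of the proof.
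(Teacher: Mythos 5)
Your treatment of the per-tuple inclusion condition is exactly the paper's proof. The paper likewise fixes a tuple $(o^*,q^*,\hypotheses)\in\interfaceOpStar{\interfaceRel}$ arising from $(o,q,\hypotheses)\in\interfaceRel$, notes $o\leq k$ from the partition hypothesis, and chains Lemma~\ref{lem:intersection-star-right}, monotonicity of star under $\leq_{\hypotheses}$ (which the paper asserts without justification and you justify via $*$-continuity), and Lemma~\ref{lem:intersection-star-left} to conclude $k^*\cap o^*\leq_{\hypotheses} l^*\cap q^*$. On that half the two arguments coincide, modulo your harmless extra simplification $k\cap o=o$.

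The divergence is on the partition condition, and there you have identified a genuine defect --- in the paper's proof, not in your own reasoning. The paper's proof verifies only the per-tuple inclusions and never checks that $\proj{1}{\interfaceOpStar{\interfaceRel}}$ partitions $k^*$, even though Definition~\ref{def:kat-refinement} demands it. Your obstacle is real: take $k=l=\textsf{A}+\textsf{B}$ and $\interfaceRel=\{(\textsf{A},\textsf{A},\emptyset),(\textsf{B},\textsf{B},\emptyset)\}$; then $\abskrefines{k}{l}{\interfaceRel}$ holds, but $\proj{1}{\interfaceOpStar{\interfaceRel}}=\{\textsf{A}^*,\textsf{B}^*\}$ does not partition $(\textsf{A}+\textsf{B})^*$, since $\textsf{A}\cdot\textsf{B}\leq(\textsf{A}+\textsf{B})^*$ while $\textsf{A}\cdot\textsf{B}\not\leq\textsf{A}^*+\textsf{B}^*$. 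So, read literally against Definition~\ref{def:kat-refinement}, the theorem fails for multi-tuple relations, and the fourth clause of Theorem~\ref{thm:refinement-all} inherits the problem. Your proposed repairs are the natural ones: either require $\proj{1}{\interfaceRel}$ to be the single class $k$ (the loop-body situation the $\star$ operator is designed for), or weaken the partition requirement for the starred relation to the covering direction $\sum_i o_i^*\leq k^*$, which does follow from star monotonicity. Note also that the same silent omission occurs in the paper's proofs for $\interfaceOpComp$ and $\interfaceOpDisj$, but there it is benign --- the partition condition follows from distributivity of $\cdot$ and $+$ over $+$ --- so the star case is the only one where an extra hypothesis is genuinely needed.
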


\begin{proof}
	We want to show that for any tuple $(x,y,\hypothesesD)\in\interfaceOpStar{\interfaceRel}$, $k^*\cap x\leq_{\hypothesesD} l^*\cap y$. Choose such an arbitrary tuple $(x,y,\hypothesesD)\in\interfaceOpStar{\interfaceRel}$, and let $(o,q,\hypotheses)\in\interfaceRel$ be the tuple that produced $(x,y,\hypothesesD)$. In other words, $x=o^*$, $y=q^*$ and $\hypothesesD=\hypotheses$.
	
	Since $\proj{1}{\interfaceRel}$ partitions $k_1$, we have that for any $o\in\proj{1}{\interfaceRel}$, $o\leq k$. By Lemma~\ref{lem:intersection-star-right} and the latter inequality, it follows that $k^*\cap o^*\leq_{\hypotheses} (k\cap o)^*$. Then, by the assumption that $\abskrefines{k}{l}{\interfaceRel}$, we have that $k\cap o\leq_{\hypotheses} l\cap q$, and thus $(k\cap o)^*\leq_{\hypotheses} (l\cap q)^*$. Furthermore, by Lemma~\ref{lem:intersection-star-left}, we have that $(l\cap q)^*\leq_{\hypotheses} l^*\cap q^*$. Together, these inequalities give us that $k^*\cap o^*\leq_{\hypotheses} l^*\cap q^*$, as required.	
\end{proof}

\begin{theoremApp}{thm:refinement-all}
	Suppose $k_1,k_2,l_1$ and $l_2$ are KAT expressions. Let $\interfaceRel_k$ and $\interfaceRel_l$ be trace-refinement relations, such that $\abskrefines{k_1}{k_2}{\interfaceRel_k}$ and $\abskrefines{l_1}{l_2}{\interfaceRel_l}$. Then
	\begin{itemize}
		\item $\abskrefines{k_1\cdot l_1}{k_2\cdot l_2}{\interfaceRel_k\interfaceOpComp \interfaceRel_l}$,
		\item $\abskrefines{k_1 + l_1}{k_2+ l_2}{\interfaceRel_k\interfaceOpDisj \interfaceRel_l}$,
		\item $\abskrefines{k_1 + l_1}{k_2+ l_2}{\interfaceRel_k\cup \interfaceRel_l}$, and
		\item $\abskrefines{k_1^*}{k_2^*}{\interfaceOpStar{\interfaceRel_k}}$.
	\end{itemize}
\end{theoremApp}

\begin{proof}
	It follows immediatelly from Theorems~\ref{thm:refinement-sequence}, \ref{thm:refinement-disjunction}, \ref{thm:refinement-disjunction-with-union} and~\ref{thm:refinement-star}.
\end{proof}

\begin{corollaryApp}{cor:same-context}
	Given any KAT expressions $m,l,k_1$ and $k_2$, and trace-refinement relation $\interfaceRel$ such that $\abskrefines{k_1}{k_2}{\interfaceRel}$, it holds that $\abskrefines{m\cdot k_1\cdot l}{m\cdot k_2\cdot l}{\interfaceRel'}$, where $\interfaceRel'$ is the set $\{(m\cdot r_1\cdot l,m\cdot r_2\cdot l,\hypotheses)\mid (r_1,r_2,\hypotheses)\in\interfaceRel\}$.
\end{corollaryApp}

\begin{proof}
	The result follows from Theorem~\ref{thm:refinement-all}, by noticing that $\abskrefines{m}{m}{\interfaceRel_{m}}$ and $\abskrefines{l}{l}{\interfaceRel_{l}}$, where $\interfaceRel_{m}$ and $\interfaceRel_{l}$ are the sets $\{(m,m,\emptyset)\}$ and $\{(l,l,\emptyset)\}$ respectively.
\end{proof}

\begin{theoremApp}{thm:transitivity}
	For any elements $k,l$ and $m$ in a KAT $\kat$, and any trace-refinement relations $\interfaceRel_1$, $\interfaceRel_2$, if $\abskrefines{k}{l}{\interfaceRel_1}$ and $\abskrefines{l}{m}{\interfaceRel_2}$, and $\interfaceRel_1\interfaceOpTrans\interfaceRel_2$ is defined, then $\abskrefines{k}{m}{\interfaceRel_1\interfaceOpTrans\interfaceRel_2}$.
\end{theoremApp}

\begin{proof}
	We want to show that for any $(o,p,\hypotheses)\in\interfaceRel_1\interfaceOpTrans\interfaceRel_2$, $k\cap o\leq_{\hypotheses} l\cap p$. Let $(o,r_1,\hypotheses_1)\in\interfaceRel_1$ and $(r_2,p,\hypotheses_2)\in\interfaceRel_2$, be the two tuples that produced the tuple $(o,p,\hypotheses)$ in their transitive trace-refinement relation. In other words, $r_1\leq r_2$ and $\hypotheses=\hypotheses_1\cup \hypotheses_2$. By assumption, we have that $k\cap o\leq_{\hypotheses_1} l\cap r_1$. Since $r_1\leq r_2$, $l\cap r_1\leq l\cap r_2$. Therefore, $k\cap o\leq_{\hypotheses_1} l\cap r_2$. Again by assumption, we have that $l\cap r_2\leq_{\hypotheses_2} m\cap p$. By Lemma~\ref{lem:superset-hypotheses}, we have that $k\cap o\leq_{\hypotheses} l\cap r_2$ and $l\cap r_2\leq_{\hypotheses} m\cap p$. Thus $k\cap o\leq_{\hypotheses} m\cap p$ as required.
\end{proof}

\subsection{Automation}\label{apx:omitted-automation}

\begin{theoremApp}{thm:sound}
  {\sc (Soundness).} For all $C_1,C_2$, and abstractions $\alpha$, let $O=\asearch(C_1,C_2,\emptyset,\alpha)$, let $\alpha'$ be the common abstraction of $O$ and let $k_1=\translate{C_1}{\alpha'}$ and $k_2=\translate{C_2}{\alpha'}$. Then $\absrefines{k_1}{k_2}{\interfaceFunc(O)}{}{}$.
\end{theoremApp}

\begin{proof}
  Let $\kat$ be a KAT. For a set of hypotheses $\hypotheses$ over $\kat$, two KAT expressions $k_1$ and $k_2$ and a trace-refinement relation $\interfaceRel$, we write $\abskhyporefines{k_1}{k_2}{\interfaceRel}{\hypotheses}$ to denote that $k_1$ refines $k_2$ with respect to $\interfaceRel$ by augmenting the set of hypotheses with $\hypotheses$. We proceed by induction on the number of recursive calls to show that for any abstraction $\alpha=(\kat,A_S,\alpha_S,\alpha_B)$, and any two programs $C_1$ and $C_2$, if $\interfaceRel=\interfaceFunc(\asearch(C_1,C_2,\hypotheses,\alpha))$, then $\abskhyporefines{\atranslate(C_1,\alpha)}{\atranslate(C_1,\alpha)}{\interfaceRel}{\hypotheses}$. Since the algorithm is initialised with $\hypotheses$ being the empty set, the trace-refinement relation $\interfaceRel$ returned will be such that $\abskrefines{k_1}{k_2}{\interfaceRel}$.

	  For the base case, suppose that the algorithm returns without any recursive calls. Then, for $k_1=\atranslate(C_1,\alpha)$ and $k_2=\atranslate(C_2,\alpha)$, the procedure $\akatdiff(k_1,k_2,\hypotheses)$ returns no counterexamples. By assumption, this means that $k_1\leq_{\hypotheses} k_2$, which implies that $\abskhyporefines{k_1}{k_2}{\interfaceRel}{\hypotheses}$, for $\interfaceRel=\{(k_1,k_2,\hypotheses,\alpha)\}$.
	  
	  For the inductive case, suppose that $\akatdiff(k_1,k_2,\hypotheses)$ returns a set of counterexamples $c=\{c_1,\ldots,c_m\}$. By assumption, the subprocedure $\asolvediff$, given $k_1, k_2, c$ and $\hypotheses$ as input, returns a set $R$ of restrictions, say of size $n\in\mathbb{N}$, such that $\proj{1}{R}$ partitions $k_1$.
	  Let $(r_1,r_2,\hypotheses')$ be a tuple in $R$, and let $(D_1,D_2,\alpha')$ be the output of $\arestrict(C_1,r_1,C_2,r_2,\hypotheses\cup\hypotheses',\alpha)$. By assumption, $\atranslate(D_1,\alpha')\equiv_{\hypotheses\cup\hypotheses'}\atranslate(C_1,\alpha')\cap r_1$, and the same holds for $D_2,C_2$ and $r_2$. By the inductive hypothesis, if $O$ is the output of $\asearch(D_1,D_2,\hypotheses\cup\hypotheses',\alpha')$, then $\abskhyporefines{\atranslate(D_1,\alpha')}{\atranslate(D_2,\alpha')}{\interfaceFunc(O)}{\hypotheses\cup\hypotheses'}$.
	  
	  For $i\leq n$, let $(r_{1,i},r_{2,i},\hypotheses_i)$ be the tuples in $R$ returned by the procedure $\asolvediff$. For each $i\leq n$, let $(D_{1,i},D_{2,i},\alpha'_i)$ be the result of $\arestrict(C_1,r_{1,i},C_2,r_{2,i},\hypotheses\cup\hypotheses_i,\alpha)$. Finally, let $O_i$ be the output of $\asearch(D_{1,i},D_{2,i},\hypotheses\cup\hypotheses_i)$ and $\interfaceRel_i'$ be equal to $\interfaceFunc(O_i)$. In other words, for each $i\leq n$, let $\interfaceRel_i'$ be the set $\{(k,l,\hypotheses)\mid (k,l,\hypotheses,\alpha'_i)\in O_i\}$, where $\alpha'_i$ is the common abstraction of $O_i$. Define $\beta$ to be the abstraction $\bigsqcup_{i\leq n}\alpha'_i$, and let $\interfaceRel_i$ be obtained from $\interfaceRel_i'$ by having all KAT expressions be over the common abstraction $\beta$. Then define $O$ to be equal to $O_1\cup\ldots\cup O_n$. Notice that the \textsf{flatmap} operator in the algorithm, simply returns $O$ from all the $O_i$, and notice that $\interfaceRel_1\cup\ldots\cup\interfaceRel_n=\interfaceFunc(O_1\cup\ldots\cup O_n)$. By the argument above, we have that for all $i\leq n$,
	  \begin{equation}\label{eq:indhyp}
		  \abskhyporefines{\atranslate(D_{1,i},\alpha'_i)}{\atranslate(D_{2,i},\alpha'_i)}{\interfaceFunc(O_i)}{\hypotheses\cup\hypotheses'_i},
	  \end{equation}
	  and $\atranslate(D_{1,i},\alpha'_i)\equiv_{\hypotheses\cup\hypotheses'}\atranslate(C_1,\alpha')\cap r_{1,i}$. Notice that since $\proj{1}{R}$ partitions $k_1$,
	  \begin{displaymath}
	  	\begin{array}{rl}
			\atranslate(C_1,\alpha')\cap r_{1,1}+\ldots+\atranslate(C_1,\alpha')\cap r_{1,n}&\equiv_{\hypotheses\cup\hypotheses'}\atranslate(C_1,\alpha')\\
			&\hspace{30pt}\cap(r_{1,1}+\ldots+r_{1,n})\\
			&\equiv_{\hypotheses\cup\hypotheses'}\atranslate(C_1,\alpha'),
	  	\end{array}
	  \end{displaymath}
	  and therefore
	  \begin{equation}\label{eq:union-1}
		  \atranslate(D_{1,1},\alpha'_1)+\ldots+\atranslate(D_{1,n},\alpha'_n)\equiv_{\hypotheses\cup\hypotheses'}\atranslate(C_1,\alpha').
	  \end{equation}
	  By a similar argument,
	  \begin{equation}\label{eq:union-2}
		  \atranslate(D_{2,1},\alpha'_1)+\ldots+\atranslate(D_{2,n},\alpha'_n)\equiv_{\hypotheses\cup\hypotheses'}\atranslate(C_2,\alpha').
	  \end{equation}
	  Therefore, by Theorem~\ref{thm:refinement-disjunction-with-union} and equations~(\ref{eq:indhyp}),~(\ref{eq:union-1}) and~(\ref{eq:union-2}),
	  $$\abskhyporefines{\atranslate(C_1,\alpha')}{\atranslate(C_2,\alpha')}{\interfaceRel_1\cup\ldots\cup\interfaceRel_n}{\hypotheses\cup\hypotheses_1\cup\ldots\hypotheses_n}{},$$
	  where, as was argued earlier, $\interfaceRel_1\cup\ldots\cup\interfaceRel_n=\interfaceFunc(O_1\cup\ldots\cup O_n)=\interfaceFunc(O)$.
%
%
\end{proof}

\vfill
\pagebreak
\section{Edit Distance Algorithm}
\label{apx:edit-distance-alg}

\begin{figure}[b]
\begin{program}[style=sf]
{\bf Input:} Two strings $s_1,s_2$.
{\bf Output:} A set $T$ of transformations and a total score for that set.
{\bf Algori}\tab{\bf thm:} {\sc Distance}($s_1$,$s_2$)
  if \tab($s_1=[]$ and $s_2=[]$)
  return $([],\mathsf{match\_scr})$
  \untab
  els\tab e if ($s_1=[]$)
  	return $(\mathsf{RemoveAll(s_2)},\mathsf{len}(s_2)*\mathsf{remove\_scr})$
  \untab
  els\tab e if ($s_2=[]$)
    return $(\mathsf{RemoveAll(s_1)},\mathsf{len}(s_1                                                                                                                                                   )*\mathsf{remove\_scr})$
  \untab
  els\tab e
	 $s_1=h_1:::t_1$ and $s_2=h_2:::t_2$
	 $(T_1,S_1)=${\sc Distance}$(t_1,s_2)$, $g_1=\mathsf{Remove(h_1,s_1)}$, $o_1=\mathsf{remove\_scr}$
	 $(T_2,S_2)=${\sc Distance}$(s_1,t_2)$, $g_2=\mathsf{Remove(h_2,s_2)}$, $o_2=\mathsf{remove\_scr}$
	 $(T_3,S_3)=${\sc Distance}$(t_1,t_2)$
	 if \tab($\mathsf{same\_symbol}(h_1,h_2)$)
	 	$g_3=\mathsf{Match(h_1,h_2,s_1,s_2)}$, $o_3=\mathsf{match\_scr}$
	\untab
	els\tab e
	  $g_3=\mathsf{Replace(h_1,h_2,s_1)}$
	  if \tab ($\mathsf{same\_type}(h_1,h_2)$)
	    $o_3=\mathsf{replace\_scr}$
	  \untab
	  els\tab e
		$o_3=\mathsf{penalty\_scr}$
	  \untab
	\untab
	 for \tab minimum $S_i$:
	  return $(g_i:::T_i,S_i+o_i)$
  \untab
\untab
   \end{program}
   \caption{\label{fig:cex-distance-alg} The distance algorithm for two counterexample strings.}
	 \label{fig:cex-distance-alg}
\end{figure}

The algorithm, shown in Figure~\ref{fig:cex-distance-alg}, traverses recursively the two iputs one symbol at a time, with the option of staying stationary on one of them at each iteration, and assigns a score on the association between the symbols at hand. For this, 4 different types of scores (in the form of rationals), are calcualted for any two strings, and are added to the total score at each iteration, depending on the action that is chosen. All possible cases are considered by the algorithm, and the association that leads to the smallest global score is finally chosen. The 4 different types of scores are as follows.
\begin{itemize}
	\item \textsf{remove\_scr}: Used when a symbol is removed from one of the two strings.
	\item \textsf{replace\_scr}: Used when a symbol is replaced with another symbol in one of the two strings.
	\item \textsf{match\_scr}: Used when a symbol in one string is matched with a symbol of the same type (boolean or event) in the other string.
	\item \textsf{penalty\_scr}: Used when a matching such as the one above is chosen, but where the matching is between symbols of different type.
\end{itemize}

The values for \textsf{remove\_scr} and \textsf{replace\_scr} are usually 1, whereas the \textsf{penalty\_scr} is higher that them, and correlated with the length of the input strings. The value of \textsf{match\_scr} on the other hand is negative, and used to counter-balance the effect of \textsf{penalty\_scr}. In the algorithm shown above, $\mathsf{RemoveAll(s_2)}$, for a string $s_2=a_1\cdots a_n$, is shorthand for the sequence: \\
$[\mathsf{Remove(a_1,s_2),\ldots,Remove(a_n,s_2)}]$.

\subsection{Global KAT expression edit-distance}\label{sec:global-edit-distance}

We have implemented a custom edit-distance algorithm that accepts general KAT expressions as inputs, instead of merely KAT strings. The edit-distance on such KAT expressions has to handle the structure of the expressions, and is naturally more involved than the linear one on strings. (The former is more similar to trees~\cite{treeedit}.) For example, the algorithm will attempt and match a subexpression under a star operation in one expression with a similar subexpression under a star operation in the other. In our experiments, using this distance algorithm on the whole KAT expressions, instead of the counterexamples to their equivalence or inclusion, would most of the time remove and replace many symbols. Our implementation mostly does not use this facility. However, searching for edit distance globally on the KAT expressions can be exploited in the beginning of the algorithm, in order to find natural alignments between two large programs, split them into subcomponents, apply the {\asearch} algorithm on each pair of such subcomponents, and finally use Theorem~\ref{thm:refinement-all} to combine the individual results into a solution that works over the whole programs. Our use of global KAT edit distance does not require further theoretical development and we plan to use our implementation of these ideas in future work.

\newpage
\section{Benchmarks and Full Results of {\knotical}}
\label{apx:fullresults}
\subsection{Example synthesized solution for benchmark \texttt{00arith.c}}
\begin{scriptsize}
\dirtree{%
.1 solution.
.2 AComplete.
.3 $\left\{\begin{array}{l}Axioms: \{E=1, b=!c\}\\ k_{1}=(a_{5}\!\cdot\!(b_{8}\!\cdot\!() = foo(); + \neg b_{8}\!\cdot\!() = bar();))*\!\cdot\!\neg a_{5}\\ k_{2}=() = foo();\!\cdot\!(a_{12}\!\cdot\!(c_{19}\!\cdot\!() = bar(); + \neg c_{19}\!\cdot\!() = foo();))*\!\cdot\!\neg a_{12}\end{array}\right.$.
}

\end{scriptsize}
\subsection{Example synthesized solution for benchmark \texttt{00complete.c}}
\begin{scriptsize}
\dirtree{%
.1 solution.
.2 (Complete), cond: $n > 0$.
.3 $\left\{\begin{array}{l}Cond: a_{10} \\ k_{1}=() = evA();\\ k_{2}=(a_{10}\!\cdot\!() = evB(); + \neg a_{10}\!\cdot\!() = evC();)\end{array}\right.$.
.4 AComplete.
.5 $\left\{\begin{array}{l}Axioms: \{E=V\}\\ k_{1}=() = evA();\\ k_{2}=1\!\cdot\!() = evB();\end{array}\right.$.
.3 $\left\{\begin{array}{l}Cond: \neg a_{10} \\ k_{1}=() = evA();\\ k_{2}=(a_{10}\!\cdot\!() = evB(); + \neg a_{10}\!\cdot\!() = evC();)\end{array}\right.$.
.4 AComplete.
.5 $\left\{\begin{array}{l}Axioms: \{E=A\}\\ k_{1}=() = evA();\\ k_{2}=1\!\cdot\!() = evC();\end{array}\right.$.
}

\end{scriptsize}
\subsection{Example synthesized solution for benchmark \texttt{00complete1.c}}
\begin{scriptsize}
\dirtree{%
.1 solution.
.2 AComplete.
.3 $\left\{\begin{array}{l}Axioms: \{!a=!b, V=D, E=A\}\\ k_{1}=(a_{7}\!\cdot\!() = evA(); + \neg a_{7}\!\cdot\!() = evB();)\\ k_{2}=(b_{13}\!\cdot\!() = evC(); + \neg b_{13}\!\cdot\!() = evD();)\end{array}\right.$.
}

\end{scriptsize}
\subsection{Example synthesized solution for benchmark \texttt{00false.c}}
\begin{scriptsize}
\dirtree{%
.1 solution.
.2 AComplete.
.3 $\left\{\begin{array}{l}Axioms: \{V=1\}\\ k_{1}=1\\ k_{2}=() = eventA();\end{array}\right.$.
}

\end{scriptsize}
\subsection{Example synthesized solution for benchmark \texttt{00if.c}}
\begin{scriptsize}
\dirtree{%
.1 solution.
.2 AComplete.
.3 $\left\{\begin{array}{l}Axioms: \{V=1, B=1\}\\ k_{1}=a = nondet();\!\cdot\!(a_{7}\!\cdot\!() = send();\!\cdot\!() = recv(); + \neg a_{7}\!\cdot\!1)\\ k_{2}=a = nondet();\!\cdot\!(a_{12}\!\cdot\!() = send(); + \neg a_{12}\!\cdot\!1)\!\cdot\!() = recv();\end{array}\right.$.
}

\end{scriptsize}
\subsection{Example synthesized solution for benchmark \texttt{00ifarecv.c}}
\begin{scriptsize}
\dirtree{%
.1 solution.
.2 AComplete.
.3 $\left\{\begin{array}{l}Axioms: \{N=1, D=1\}\\ k_{1}=() = init();\!\cdot\!a = recv();\!\cdot\!(a_{6}\!\cdot\!() = send(); + \neg a_{6}\!\cdot\!1)\\ k_{2}=() = init();\!\cdot\!a = recv();\!\cdot\!() = send();\end{array}\right.$.
}

\end{scriptsize}
\subsection{Example synthesized solution for benchmark \texttt{00impos.c}}
\begin{scriptsize}

\end{scriptsize}
\medskip
\emph{No solutions.}
\subsection{Example synthesized solution for benchmark \texttt{00medstrai.c}}
\begin{scriptsize}
\dirtree{%
.1 solution.
.2 (Partial), cond: $N > 0$.
.3 $\left\{\begin{array}{l}Cond: \neg b_{14} \\ k_{1}=(b_{14}\!\cdot\!() = m1();\!\cdot\!() = m2(); + \neg b_{14}\!\cdot\!1)\!\cdot\!() = m4();\!\cdot\!() = m5();\!\cdot\!(a_{9}\!\cdot\!() = m11();\!\cdot\!() = m12(); + \neg a_{9}\!\cdot\!1)\!\cdot\!() = m14();\!\cdot\!() = m15();\\ k_{2}=() = m1();\!\cdot\!() = m2();\!\cdot\!() = m4();\!\cdot\!() = m5();\!\cdot\!() = m11();\!\cdot\!() = m12();\!\cdot\!() = m14();\!\cdot\!() = m15();\end{array}\right.$.
.4 (Partial), cond: $N < 0$.
.5 $\left\{\begin{array}{l}Cond: a_{9} \\ k_{1}=1\!\cdot\!1\!\cdot\!() = m4();\!\cdot\!() = m5();\!\cdot\!(a_{9}\!\cdot\!() = m11();\!\cdot\!() = m12(); + \neg a_{9}\!\cdot\!1)\!\cdot\!() = m14();\!\cdot\!() = m15();\\ k_{2}=() = m1();\!\cdot\!() = m2();\!\cdot\!() = m4();\!\cdot\!() = m5();\!\cdot\!() = m11();\!\cdot\!() = m12();\!\cdot\!() = m14();\!\cdot\!() = m15();\end{array}\right.$.
.6 AComplete.
.7 $\left\{\begin{array}{l}Axioms: \{L=1, P=1\}\\ k_{1}=1\!\cdot\!1\!\cdot\!() = m4();\!\cdot\!() = m5();\!\cdot\!1\!\cdot\!() = m11();\!\cdot\!() = m12();\!\cdot\!() = m14();\!\cdot\!() = m15();\\ k_{2}=() = m1();\!\cdot\!() = m2();\!\cdot\!() = m4();\!\cdot\!() = m5();\!\cdot\!() = m11();\!\cdot\!() = m12();\!\cdot\!() = m14();\!\cdot\!() = m15();\end{array}\right.$.
}

\end{scriptsize}
\medskip
\emph{Remaining 1 solutions ommitted for brevity.}
\subsection{Example synthesized solution for benchmark \texttt{00needax.c}}
\begin{scriptsize}
\dirtree{%
.1 solution.
.2 AComplete.
.3 $\left\{\begin{array}{l}Axioms: \{V=E\}\\ k_{1}=(a_{7}\!\cdot\!() = evA(); + \neg a_{7}\!\cdot\!() = evB();)\\ k_{2}=() = evA();\end{array}\right.$.
}

\end{scriptsize}
\subsection{Example synthesized solution for benchmark \texttt{00nohyp.c}}
\begin{scriptsize}
\dirtree{%
.1 solution.
.2 AComplete.
.3 $\left\{\begin{array}{l}Axioms: \{D=1, !a=b\}\\ k_{1}=(a_{7}\!\cdot\!() = evA(); + \neg a_{7}\!\cdot\!() = evB();)\\ k_{2}=t = nondet();\!\cdot\!(b_{12}\!\cdot\!() = evB(); + \neg b_{12}\!\cdot\!() = evA();)\end{array}\right.$.
}

\end{scriptsize}
\subsection{Example synthesized solution for benchmark \texttt{00noloop.c}}
\begin{scriptsize}
\dirtree{%
.1 solution.
.2 (Partial), cond: $count <= 4$.
.3 $\left\{\begin{array}{l}Cond: \neg a_{9} \\ k_{1}=1\!\cdot\!(a_{5}\!\cdot\!0)*\!\cdot\!\neg a_{5}\\ k_{2}=(a_{9}\!\cdot\!() = evA();\!\cdot\!count = count + 1;)*\!\cdot\!\neg a_{9}\end{array}\right.$.
.4 AComplete.
.5 $\left\{\begin{array}{l}Axioms: \{\}\\ k_{1}=1\!\cdot\!(a_{5}\!\cdot\!0)*\!\cdot\!\neg a_{5}\\ k_{2}=1\!\cdot\!(a_{9}\!\cdot\!0)*\!\cdot\!\neg a_{9}\end{array}\right.$.
}

\end{scriptsize}
\subsection{Example synthesized solution for benchmark \texttt{00nondet.c}}
\begin{scriptsize}
\dirtree{%
.1 solution.
.2 (Complete), cond: $a > 0$.
.3 $\left\{\begin{array}{l}Cond: d_{15} \\ k_{1}=i = nondet();\!\cdot\!j = nondet();\!\cdot\!(c_{11}\!\cdot\!(d_{15}\!\cdot\!() = B(); + \neg d_{15}\!\cdot\!() = C();)\!\cdot\!i = nondet();)*\!\cdot\!\neg c_{11}\!\cdot\!() = D();\\ \;\;\;\!\cdot\!(a_{5}\!\cdot\!(b_{9}\!\cdot\!() = G(); + \neg b_{9}\!\cdot\!() = H();)\!\cdot\!j = nondet();)*\!\cdot\!\neg a_{5}\\ k_{2}=i = nondet();\!\cdot\!j = nondet();\!\cdot\!(c_{29}\!\cdot\!(f_{33}\!\cdot\!() = B(); + \neg f_{33}\!\cdot\!() = C();)\!\cdot\!i = nondet();)*\!\cdot\!\neg c_{29}\!\cdot\!() = D();\\ \;\;\;\!\cdot\!(a_{23}\!\cdot\!(e_{27}\!\cdot\!() = G(); + \neg e_{27}\!\cdot\!() = H();)\!\cdot\!j = nondet();)*\!\cdot\!\neg a_{23}\end{array}\right.$.
.4 AComplete.
.5 $\left\{\begin{array}{l}Axioms: \{b=e, B=C\}\\ k_{1}=i = nondet();\!\cdot\!j = nondet();\!\cdot\!(c_{11}\!\cdot\!1\!\cdot\!() = B();\!\cdot\!i = nondet();)*\!\cdot\!\neg c_{11}\!\cdot\!() = D();\\ \;\;\;\!\cdot\!(a_{5}\!\cdot\!(b_{9}\!\cdot\!() = G(); + \neg b_{9}\!\cdot\!() = H();)\!\cdot\!j = nondet();)*\!\cdot\!\neg a_{5}\\ k_{2}=i = nondet();\!\cdot\!j = nondet();\!\cdot\!(c_{29}\!\cdot\!(f_{33}\!\cdot\!() = B(); + \neg f_{33}\!\cdot\!() = C();)\!\cdot\!i = nondet();)*\!\cdot\!\neg c_{29}\!\cdot\!() = D();\\ \;\;\;\!\cdot\!(a_{23}\!\cdot\!(e_{27}\!\cdot\!() = G(); + \neg e_{27}\!\cdot\!() = H();)\!\cdot\!j = nondet();)*\!\cdot\!\neg a_{23}\end{array}\right.$.
.3 $\left\{\begin{array}{l}Cond: \neg d_{15} \\ k_{1}=i = nondet();\!\cdot\!j = nondet();\!\cdot\!(c_{11}\!\cdot\!(d_{15}\!\cdot\!() = B(); + \neg d_{15}\!\cdot\!() = C();)\!\cdot\!i = nondet();)*\!\cdot\!\neg c_{11}\!\cdot\!() = D();\\ \;\;\;\!\cdot\!(a_{5}\!\cdot\!(b_{9}\!\cdot\!() = G(); + \neg b_{9}\!\cdot\!() = H();)\!\cdot\!j = nondet();)*\!\cdot\!\neg a_{5}\\ k_{2}=i = nondet();\!\cdot\!j = nondet();\!\cdot\!(c_{29}\!\cdot\!(f_{33}\!\cdot\!() = B(); + \neg f_{33}\!\cdot\!() = C();)\!\cdot\!i = nondet();)*\!\cdot\!\neg c_{29}\!\cdot\!() = D();\\ \;\;\;\!\cdot\!(a_{23}\!\cdot\!(e_{27}\!\cdot\!() = G(); + \neg e_{27}\!\cdot\!() = H();)\!\cdot\!j = nondet();)*\!\cdot\!\neg a_{23}\end{array}\right.$.
.4 AComplete.
.5 $\left\{\begin{array}{l}Axioms: \{b=e, C=B\}\\ k_{1}=i = nondet();\!\cdot\!j = nondet();\!\cdot\!(c_{11}\!\cdot\!1\!\cdot\!() = C();\!\cdot\!i = nondet();)*\!\cdot\!\neg c_{11}\!\cdot\!() = D();\\ \;\;\;\!\cdot\!(a_{5}\!\cdot\!(b_{9}\!\cdot\!() = G(); + \neg b_{9}\!\cdot\!() = H();)\!\cdot\!j = nondet();)*\!\cdot\!\neg a_{5}\\ k_{2}=i = nondet();\!\cdot\!j = nondet();\!\cdot\!(c_{29}\!\cdot\!(f_{33}\!\cdot\!() = B(); + \neg f_{33}\!\cdot\!() = C();)\!\cdot\!i = nondet();)*\!\cdot\!\neg c_{29}\!\cdot\!() = D();\\ \;\;\;\!\cdot\!(a_{23}\!\cdot\!(e_{27}\!\cdot\!() = G(); + \neg e_{27}\!\cdot\!() = H();)\!\cdot\!j = nondet();)*\!\cdot\!\neg a_{23}\end{array}\right.$.
}

\end{scriptsize}
\subsection{Example synthesized solution for benchmark \texttt{00pos.c}}
\begin{scriptsize}
\dirtree{%
.1 solution.
.2 (Partial), cond: $x > 0$.
.3 $\left\{\begin{array}{l}Cond: a_{7} \\ k_{1}=(a_{7}\!\cdot\!() = evA(); + \neg a_{7}\!\cdot\!() = evB();)\\ k_{2}=() = evA();\end{array}\right.$.
.4 AComplete.
.5 $\left\{\begin{array}{l}Axioms: \{\}\\ k_{1}=1\!\cdot\!() = evA();\\ k_{2}=() = evA();\end{array}\right.$.
}

\end{scriptsize}
\subsection{Example synthesized solution for benchmark \texttt{00rename.c}}
\begin{scriptsize}
\dirtree{%
.1 solution.
.2 AComplete.
.3 $\left\{\begin{array}{l}Axioms: \{a=b\}\\ k_{1}=(a_{7}\!\cdot\!() = foo(); + \neg a_{7}\!\cdot\!() = bar();)\\ k_{2}=(b_{13}\!\cdot\!() = foo(); + \neg b_{13}\!\cdot\!() = bar();)\end{array}\right.$.
}

\end{scriptsize}
\subsection{Example synthesized solution for benchmark \texttt{00rename1.c}}
\begin{scriptsize}
\dirtree{%
.1 solution.
.2 AComplete.
.3 $\left\{\begin{array}{l}Axioms: \{a=!b\}\\ k_{1}=(a_{7}\!\cdot\!() = foo(); + \neg a_{7}\!\cdot\!() = bar();)\\ k_{2}=(b_{13}\!\cdot\!() = bar(); + \neg b_{13}\!\cdot\!() = foo();)\end{array}\right.$.
}

\end{scriptsize}
\subsection{Example synthesized solution for benchmark \texttt{00sanity.c}}
\begin{scriptsize}
\dirtree{%
.1 solution.
.2 AComplete.
.3 $\left\{\begin{array}{l}Axioms: \{\}\\ k_{1}=() = foo();\!\cdot\!() = bar();\\ k_{2}=() = foo();\!\cdot\!() = bar();\end{array}\right.$.
}

\end{scriptsize}
\subsection{Example synthesized solution for benchmark \texttt{00sanity1.c}}
\begin{scriptsize}
\dirtree{%
.1 solution.
.2 AComplete.
.3 $\left\{\begin{array}{l}Axioms: \{O=B\}\\ k_{1}=() = foo();\!\cdot\!() = bar();\\ k_{2}=() = bar();\!\cdot\!() = foo();\end{array}\right.$.
}

\end{scriptsize}
\subsection{Example synthesized solution for benchmark \texttt{00smstrai.c}}
\begin{scriptsize}
\dirtree{%
.1 solution.
.2 (Partial), cond: $N > 0$.
.3 $\left\{\begin{array}{l}Cond: b_{10} \\ k_{1}=(b_{10}\!\cdot\!() = m1(); + \neg b_{10}\!\cdot\!1)\!\cdot\!() = m4();\!\cdot\!(a_{7}\!\cdot\!() = m11(); + \neg a_{7}\!\cdot\!1)\!\cdot\!() = m14();\\ k_{2}=() = m1();\!\cdot\!() = m4();\!\cdot\!() = m11();\!\cdot\!() = m14();\end{array}\right.$.
.4 AComplete.
.5 $\left\{\begin{array}{l}Axioms: \{G=1\}\\ k_{1}=1\!\cdot\!() = m1();\!\cdot\!() = m4();\!\cdot\!1\!\cdot\!() = m14();\\ k_{2}=() = m1();\!\cdot\!() = m4();\!\cdot\!() = m11();\!\cdot\!() = m14();\end{array}\right.$.
}

\end{scriptsize}
\medskip
\emph{Remaining 3 solutions ommitted for brevity.}
\subsection{Example synthesized solution for benchmark \texttt{01acqrel.c}}
\begin{scriptsize}
\dirtree{%
.1 solution.
.2 AComplete.
.3 $\left\{\begin{array}{l}Axioms: \{\}\\ k_{1}=A = 0;\!\cdot\!R = 0;\!\cdot\!0\!\cdot\!0\\ k_{2}=AA = 0;\!\cdot\!RR = 0;\!\cdot\!(a_{8}\!\cdot\!AA = 1;\!\cdot\!AA = 0;\!\cdot\!(a_{11}\!\cdot\!n = n - 1;)*\!\cdot\!\neg a_{11}\!\cdot\!RR = 1;\!\cdot\!RR = 0;)*\!\cdot\!\neg a_{8}\!\cdot\!0\end{array}\right.$.
}

\end{scriptsize}
\subsection{Example synthesized solution for benchmark \texttt{01asendrecv.c}}
\begin{scriptsize}
\dirtree{%
.1 solution.
.2 (Partial), cond: $b > 0$.
.3 $\left\{\begin{array}{l}Cond: b_{11} \\ k_{1}=(a_{7}\!\cdot\!b = recv();\!\cdot\!1\!\cdot\!1\!\cdot\!(b_{11}\!\cdot\!n = constructReply();\!\cdot\!() = send(n);\\ \;\;\;\!\cdot\!1 + \neg b_{11}\!\cdot\!1)\!\cdot\!x = x - 1;)*\!\cdot\!\neg a_{7}\\ k_{2}=(a_{19}\!\cdot\!b = recv();\!\cdot\!(b_{26}\!\cdot\!auth = check(b);\!\cdot\!(c_{23}\!\cdot\!n = constructReply();\!\cdot\!() = sendA(n);\\ \;\;\; + \neg c_{23}\!\cdot\!1) + \neg b_{26}\!\cdot\!() = log(b);)\!\cdot\!x = x - 1;)*\!\cdot\!\neg a_{19}\end{array}\right.$.
.4 (Partial), cond: $b > 0$.
.5 $\left\{\begin{array}{l}Cond: b_{26} \\ k_{1}=(a_{7}\!\cdot\!b = recv();\!\cdot\!1\!\cdot\!1\!\cdot\!1\!\cdot\!n = constructReply();\!\cdot\!() = send(n);\\ \;\;\;\!\cdot\!1\!\cdot\!x = x - 1;)*\!\cdot\!\neg a_{7}\\ k_{2}=(a_{19}\!\cdot\!b = recv();\!\cdot\!(b_{26}\!\cdot\!auth = check(b);\!\cdot\!(c_{23}\!\cdot\!n = constructReply();\!\cdot\!() = sendA(n);\\ \;\;\; + \neg c_{23}\!\cdot\!1) + \neg b_{26}\!\cdot\!() = log(b);)\!\cdot\!x = x - 1;)*\!\cdot\!\neg a_{19}\end{array}\right.$.
.6 (Partial), cond: $auth > 0$.
.7 $\left\{\begin{array}{l}Cond: c_{23} \\ k_{1}=(a_{7}\!\cdot\!b = recv();\!\cdot\!1\!\cdot\!1\!\cdot\!1\!\cdot\!n = constructReply();\!\cdot\!() = send(n);\\ \;\;\;\!\cdot\!1\!\cdot\!x = x - 1;)*\!\cdot\!\neg a_{7}\\ k_{2}=(a_{19}\!\cdot\!b = recv();\!\cdot\!1\!\cdot\!auth = check(b);\!\cdot\!(c_{23}\!\cdot\!n = constructReply();\!\cdot\!() = sendA(n);\\ \;\;\; + \neg c_{23}\!\cdot\!1)\!\cdot\!x = x - 1;)*\!\cdot\!\neg a_{19}\end{array}\right.$.
.8 AComplete.
.9 $\left\{\begin{array}{l}Axioms: \{I=1, J=1, M=1, P=1\}\\ k_{1}=(a_{7}\!\cdot\!b = recv();\!\cdot\!1\!\cdot\!1\!\cdot\!1\!\cdot\!n = constructReply();\!\cdot\!() = send(n);\\ \;\;\;\!\cdot\!1\!\cdot\!x = x - 1;)*\!\cdot\!\neg a_{7}\\ k_{2}=(a_{19}\!\cdot\!b = recv();\!\cdot\!1\!\cdot\!auth = check(b);\!\cdot\!1\!\cdot\!n = constructReply();\!\cdot\!() = sendA(n);\\ \;\;\;\!\cdot\!x = x - 1;)*\!\cdot\!\neg a_{19}\end{array}\right.$.
.3 $\left\{\begin{array}{l}Cond: \neg b_{11} \\ k_{1}=(a_{7}\!\cdot\!b = recv();\!\cdot\!1\!\cdot\!1\!\cdot\!(b_{11}\!\cdot\!n = constructReply();\!\cdot\!() = send(n);\\ \;\;\;\!\cdot\!1 + \neg b_{11}\!\cdot\!1)\!\cdot\!x = x - 1;)*\!\cdot\!\neg a_{7}\\ k_{2}=(a_{19}\!\cdot\!b = recv();\!\cdot\!(b_{26}\!\cdot\!auth = check(b);\!\cdot\!(c_{23}\!\cdot\!n = constructReply();\!\cdot\!() = sendA(n);\\ \;\;\; + \neg c_{23}\!\cdot\!1) + \neg b_{26}\!\cdot\!() = log(b);)\!\cdot\!x = x - 1;)*\!\cdot\!\neg a_{19}\end{array}\right.$.
.4 (Partial), cond: $auth > 0$.
.5 $\left\{\begin{array}{l}Cond: c_{23} \\ k_{1}=(a_{7}\!\cdot\!b = recv();\!\cdot\!1\!\cdot\!1\!\cdot\!1\!\cdot\!1\!\cdot\!x = x - 1;)*\!\cdot\!\neg a_{7}\\ k_{2}=(a_{19}\!\cdot\!b = recv();\!\cdot\!(b_{26}\!\cdot\!auth = check(b);\!\cdot\!(c_{23}\!\cdot\!n = constructReply();\!\cdot\!() = sendA(n);\\ \;\;\; + \neg c_{23}\!\cdot\!1) + \neg b_{26}\!\cdot\!() = log(b);)\!\cdot\!x = x - 1;)*\!\cdot\!\neg a_{19}\end{array}\right.$.
.6 (Partial), cond: $b > 0$.
.7 $\left\{\begin{array}{l}Cond: \neg b_{26} \\ k_{1}=(a_{7}\!\cdot\!b = recv();\!\cdot\!1\!\cdot\!1\!\cdot\!1\!\cdot\!1\!\cdot\!x = x - 1;)*\!\cdot\!\neg a_{7}\\ k_{2}=(a_{19}\!\cdot\!b = recv();\!\cdot\!(b_{26}\!\cdot\!auth = check(b);\!\cdot\!1\!\cdot\!n = constructReply();\!\cdot\!() = sendA(n);\\ \;\;\; + \neg b_{26}\!\cdot\!() = log(b);)\!\cdot\!x = x - 1;)*\!\cdot\!\neg a_{19}\end{array}\right.$.
.8 AComplete.
.9 $\left\{\begin{array}{l}Axioms: \{I=1, J=1\}\\ k_{1}=(a_{7}\!\cdot\!b = recv();\!\cdot\!1\!\cdot\!1\!\cdot\!1\!\cdot\!1\!\cdot\!x = x - 1;)*\!\cdot\!\neg a_{7}\\ k_{2}=(a_{19}\!\cdot\!b = recv();\!\cdot\!1\!\cdot\!() = log(b);\!\cdot\!x = x - 1;)*\!\cdot\!\neg a_{19}\end{array}\right.$.
.5 $\left\{\begin{array}{l}Cond: \neg c_{23} \\ k_{1}=(a_{7}\!\cdot\!b = recv();\!\cdot\!1\!\cdot\!1\!\cdot\!1\!\cdot\!1\!\cdot\!x = x - 1;)*\!\cdot\!\neg a_{7}\\ k_{2}=(a_{19}\!\cdot\!b = recv();\!\cdot\!(b_{26}\!\cdot\!auth = check(b);\!\cdot\!(c_{23}\!\cdot\!n = constructReply();\!\cdot\!() = sendA(n);\\ \;\;\; + \neg c_{23}\!\cdot\!1) + \neg b_{26}\!\cdot\!() = log(b);)\!\cdot\!x = x - 1;)*\!\cdot\!\neg a_{19}\end{array}\right.$.
.6 AComplete.
.7 $\left\{\begin{array}{l}Axioms: \{I=1, J=1\}\\ k_{1}=(a_{7}\!\cdot\!b = recv();\!\cdot\!1\!\cdot\!1\!\cdot\!1\!\cdot\!1\!\cdot\!x = x - 1;)*\!\cdot\!\neg a_{7}\\ k_{2}=(a_{19}\!\cdot\!b = recv();\!\cdot\!(b_{26}\!\cdot\!auth = check(b);\!\cdot\!1\!\cdot\!1 + \neg b_{26}\!\cdot\!() = log(b);)\!\cdot\!x = x - 1;)*\!\cdot\!\neg a_{19}\end{array}\right.$.
}

\end{scriptsize}
\medskip
\emph{Remaining 37 solutions ommitted for brevity.}
\subsection{Example synthesized solution for benchmark \texttt{01assume.c}}
\begin{scriptsize}
\dirtree{%
.1 solution.
.2 (Complete), cond: $count <= 4$.
.3 $\left\{\begin{array}{l}Cond: a_{5} \\ k_{1}=count = nondet();\!\cdot\!(a_{5}\!\cdot\!() = printf(count);\\ \;\;\;\!\cdot\!count = count + 1;)*\!\cdot\!\neg a_{5}\\ k_{2}=count = nondet();\!\cdot\!1\!\cdot\!((a_{11} \wedge b_{12})\!\cdot\!() = printf(count);\\ \;\;\;\!\cdot\!count = count + 1;)*\!\cdot\!\neg a_{11}\end{array}\right.$.
.4 (Complete), cond: $number >= 0$.
.5 $\left\{\begin{array}{l}Cond: b_{12} \\ k_{1}=count = nondet();\!\cdot\!1\!\cdot\!(a_{5}\!\cdot\!() = printf(count);\\ \;\;\;\!\cdot\!count = count + 1;)*\!\cdot\!\neg a_{5}\\ k_{2}=count = nondet();\!\cdot\!1\!\cdot\!((a_{11} \wedge b_{12})\!\cdot\!() = printf(count);\\ \;\;\;\!\cdot\!count = count + 1;)*\!\cdot\!\neg a_{11}\end{array}\right.$.
.6 AComplete.
.7 $\left\{\begin{array}{l}Axioms: \{D=1, E=1, I=1, T=1, U=1\}\\ k_{1}=count = nondet();\!\cdot\!1\!\cdot\!(a_{5}\!\cdot\!() = printf(count);\\ \;\;\;\!\cdot\!count = count + 1;)*\!\cdot\!\neg a_{5}\\ k_{2}=count = nondet();\!\cdot\!1\!\cdot\!1\!\cdot\!((a_{11} \wedge b_{12})\!\cdot\!() = printf(count);\\ \;\;\;\!\cdot\!count = count + 1;)*\!\cdot\!\neg a_{11}\end{array}\right.$.
.3 $\left\{\begin{array}{l}Cond: \neg a_{5} \\ k_{1}=count = nondet();\!\cdot\!(a_{5}\!\cdot\!() = printf(count);\\ \;\;\;\!\cdot\!count = count + 1;)*\!\cdot\!\neg a_{5}\\ k_{2}=count = nondet();\!\cdot\!1\!\cdot\!((a_{11} \wedge b_{12})\!\cdot\!() = printf(count);\\ \;\;\;\!\cdot\!count = count + 1;)*\!\cdot\!\neg a_{11}\end{array}\right.$.
.4 AComplete.
.5 $\left\{\begin{array}{l}Axioms: \{D=1, E=1\}\\ k_{1}=count = nondet();\!\cdot\!1\!\cdot\!(a_{5}\!\cdot\!0)*\!\cdot\!\neg a_{5}\\ k_{2}=count = nondet();\!\cdot\!1\!\cdot\!((a_{11} \wedge b_{12})\!\cdot\!() = printf(count);\\ \;\;\;\!\cdot\!count = count + 1;)*\!\cdot\!\neg a_{11}\end{array}\right.$.
}

\end{scriptsize}
\medskip
\emph{Remaining 42 solutions ommitted for brevity.}
\subsection{Example synthesized solution for benchmark \texttt{01concloop.c}}
\begin{scriptsize}
\dirtree{%
.1 solution.
.2 (Partial), cond: $number >= 0$.
.3 $\left\{\begin{array}{l}Cond: b_{12} \\ k_{1}=count = 1;\!\cdot\!(a_{5}\!\cdot\!() = evA(count);\!\cdot\!count = count + 1;)*\!\cdot\!\neg a_{5}\\ k_{2}=count = 1;\!\cdot\!number = nondet();\!\cdot\!((a_{11} \wedge b_{12})\!\cdot\!() = evA(count);\!\cdot\!count = count + 1;)*\!\cdot\!(\neg a_{11} \vee \neg b_{12})\end{array}\right.$.
.4 (Partial), cond: $count <= 4$.
.5 $\left\{\begin{array}{l}Cond: \neg a_{5} \\ k_{1}=count = 1;\!\cdot\!(a_{5}\!\cdot\!() = evA(count);\!\cdot\!count = count + 1;)*\!\cdot\!\neg a_{5}\\ k_{2}=count = 1;\!\cdot\!number = nondet();\!\cdot\!1\!\cdot\!((a_{11} \wedge b_{12})\!\cdot\!() = evA(count);\!\cdot\!count = count + 1;)*\!\cdot\!\neg a_{11}\end{array}\right.$.
.6 (Partial), cond: $count <= 4$.
.7 $\left\{\begin{array}{l}Cond: \neg a_{11} \\ k_{1}=count = 1;\!\cdot\!0\!\cdot\!0\\ k_{2}=count = 1;\!\cdot\!number = nondet();\!\cdot\!1\!\cdot\!((a_{11} \wedge b_{12})\!\cdot\!() = evA(count);\!\cdot\!count = count + 1;)*\!\cdot\!\neg a_{11}\end{array}\right.$.
.8 AComplete.
.9 $\left\{\begin{array}{l}Axioms: \{D=1, U=1, T=1\}\\ k_{1}=count = 1;\!\cdot\!0\!\cdot\!0\\ k_{2}=count = 1;\!\cdot\!number = nondet();\!\cdot\!1\!\cdot\!0\!\cdot\!0\end{array}\right.$.
}

\end{scriptsize}
\medskip
\emph{Remaining 12 solutions ommitted for brevity.}
\subsection{Example synthesized solution for benchmark \texttt{01concloop2.c}}
\begin{scriptsize}
\dirtree{%
.1 solution.
.2 AComplete.
.3 $\left\{\begin{array}{l}Axioms: \{D=1, I=1, E=1, G=1, T=1, U=1\}\\ k_{1}=count = 1;\!\cdot\!(a_{4}\!\cdot\!() = printf(count);\\ \;\;\;\!\cdot\!count = count + 1;)*\!\cdot\!\neg a_{4}\\ k_{2}=count = 1;\!\cdot\!number = nondet();\!\cdot\!() = printf(count);\\ \;\;\;\!\cdot\!fv_{1} = 2;\!\cdot\!number = scanf(fv_{1});\!\cdot\!((b_{11} \wedge a_{12})\!\cdot\!count = count + 1;)*\!\cdot\!(\neg b_{11} \vee \neg a_{12})\end{array}\right.$.
}

\end{scriptsize}
\medskip
\emph{Remaining 238 solutions ommitted for brevity.}
\subsection{Example synthesized solution for benchmark \texttt{01concloop3.c}}
\begin{scriptsize}
\dirtree{%
.1 solution.
.2 (Partial), cond: $number >= 0$.
.3 $\left\{\begin{array}{l}Cond: b_{12} \\ k_{1}=(a_{5}\!\cdot\!() = evA();\!\cdot\!count = count + 1;)*\!\cdot\!\neg a_{5}\\ k_{2}=((a_{11} \wedge b_{12})\!\cdot\!() = evA();\!\cdot\!count = count + 1;)*\!\cdot\!(\neg a_{11} \vee \neg b_{12})\end{array}\right.$.
.4 (Complete), cond: $count <= 4$.
.5 $\left\{\begin{array}{l}Cond: a_{11} \\ k_{1}=(a_{5}\!\cdot\!() = evA();\!\cdot\!count = count + 1;)*\!\cdot\!\neg a_{5}\\ k_{2}=1\!\cdot\!((a_{11} \wedge b_{12})\!\cdot\!() = evA();\!\cdot\!count = count + 1;)*\!\cdot\!\neg a_{11}\end{array}\right.$.
.6 AComplete.
.7 $\left\{\begin{array}{l}Axioms: \{V=1, U=1, D=1, T=1\}\\ k_{1}=(a_{5}\!\cdot\!() = evA();\!\cdot\!count = count + 1;)*\!\cdot\!\neg a_{5}\\ k_{2}=1\!\cdot\!1\!\cdot\!((a_{11} \wedge b_{12})\!\cdot\!() = evA();\!\cdot\!count = count + 1;)*\!\cdot\!\neg a_{11}\end{array}\right.$.
.5 $\left\{\begin{array}{l}Cond: \neg a_{11} \\ k_{1}=(a_{5}\!\cdot\!() = evA();\!\cdot\!count = count + 1;)*\!\cdot\!\neg a_{5}\\ k_{2}=1\!\cdot\!((a_{11} \wedge b_{12})\!\cdot\!() = evA();\!\cdot\!count = count + 1;)*\!\cdot\!\neg a_{11}\end{array}\right.$.
.6 AComplete.
.7 $\left\{\begin{array}{l}Axioms: \{V=1, U=1\}\\ k_{1}=(a_{5}\!\cdot\!() = evA();\!\cdot\!count = count + 1;)*\!\cdot\!\neg a_{5}\\ k_{2}=1\!\cdot\!1\!\cdot\!((a_{11} \wedge b_{12})\!\cdot\!0)*\!\cdot\!\neg a_{11}\end{array}\right.$.
}

\end{scriptsize}
\medskip
\emph{Remaining 125 solutions ommitted for brevity.}
\subsection{Example synthesized solution for benchmark \texttt{01linarith.c}}
\begin{scriptsize}
\dirtree{%
.1 solution.
.2 AComplete.
.3 $\left\{\begin{array}{l}Axioms: \{G=1, J=1, T=1, S=I\}\\ k_{1}=c = nondet();\!\cdot\!servers = 4;\!\cdot\!resp = 0;\!\cdot\!curr\_serv = servers;\!\cdot\!tmp = nondet();\!\cdot\!(a_{5}\!\cdot\!1\!\cdot\!c = c - 1;\!\cdot\!() = shutdown();\\ \;\;\;\!\cdot\!curr\_serv = curr\_serv - 1;\!\cdot\!resp = resp + 1;\!\cdot\!tmp = nondet();)*\!\cdot\!\neg a_{5}\\ k_{2}=c = nondet();\!\cdot\!servers = 4;\!\cdot\!resp = 0;\!\cdot\!curr\_serv = servers;\!\cdot\!tmp = nondet();\!\cdot\!(a_{17}\!\cdot\!1\!\cdot\!() = pingall();\\ \;\;\;\!\cdot\!curr\_serv = curr\_serv - 1;\!\cdot\!c = c - 1;\!\cdot\!resp = resp + 1;\!\cdot\!() = shutdown();\\ \;\;\;\!\cdot\!tmp = nondet();)*\!\cdot\!\neg a_{17}\end{array}\right.$.
}

\end{scriptsize}
\medskip
\emph{Remaining 10 solutions ommitted for brevity.}
\subsection{Example synthesized solution for benchmark \texttt{01loopevent.c}}
\begin{scriptsize}
\dirtree{%
.1 solution.
.2 (Partial), cond: $a > 5$.
.3 $\left\{\begin{array}{l}Cond: \neg c_{16} \\ k_{1}=((a_{5} \vee b_{6})\!\cdot\!() = eventA();\!\cdot\!i = i + 1;\!\cdot\!j = j + 1;)*\!\cdot\!(\neg a_{5} \wedge \neg b_{6})\\ k_{2}=((a_{13} \vee b_{14})\!\cdot\!a = i + j;\!\cdot\!() = eventA();\!\cdot\!i = i + 1;\!\cdot\!j = j + 1;\!\cdot\!(c_{16}\!\cdot\!() = eventA(); + \neg c_{16}\!\cdot\!1))*\!\cdot\!(\neg a_{13} \wedge \neg b_{14})\end{array}\right.$.
.4 AComplete.
.5 $\left\{\begin{array}{l}Axioms: \{D=1\}\\ k_{1}=((a_{5} \vee b_{6})\!\cdot\!() = eventA();\!\cdot\!i = i + 1;\!\cdot\!j = j + 1;)*\!\cdot\!(\neg a_{5} \wedge \neg b_{6})\\ k_{2}=((a_{13} \vee b_{14})\!\cdot\!a = i + j;\!\cdot\!() = eventA();\!\cdot\!i = i + 1;\!\cdot\!j = j + 1;\!\cdot\!1\!\cdot\!1)*\!\cdot\!(\neg a_{13} \wedge \neg b_{14})\end{array}\right.$.
}

\end{scriptsize}
\medskip
\emph{Remaining 65 solutions ommitted for brevity.}
\subsection{Example synthesized solution for benchmark \texttt{01loopprint.c}}
\begin{scriptsize}
\dirtree{%
.1 solution.
.2 (Complete), cond: $a > 5$.
.3 $\left\{\begin{array}{l}Cond: c_{17} \\ k_{1}=((a_{5} \vee b_{6})\!\cdot\!() = printf(i, j);\\ \;\;\;\!\cdot\!i = i + 1;\!\cdot\!j = j + 1;)*\!\cdot\!(\neg a_{5} \wedge \neg b_{6})\\ k_{2}=((a_{13} \vee b_{14})\!\cdot\!a = i + j;\!\cdot\!() = printf(i, j);\\ \;\;\;\!\cdot\!i = i + 1;\!\cdot\!j = j + 1;\!\cdot\!(c_{17}\!\cdot\!fv_{1} = 0;\!\cdot\!() = printf(a, fv_{1}); + \neg c_{17}\!\cdot\!1))*\!\cdot\!(\neg a_{13} \wedge \neg b_{14})\end{array}\right.$.
.4 AComplete.
.5 $\left\{\begin{array}{l}Axioms: \{D=1, E=1, G=1\}\\ k_{1}=((a_{5} \vee b_{6})\!\cdot\!() = printf(i, j);\\ \;\;\;\!\cdot\!i = i + 1;\!\cdot\!j = j + 1;)*\!\cdot\!(\neg a_{5} \wedge \neg b_{6})\\ k_{2}=((a_{13} \vee b_{14})\!\cdot\!a = i + j;\!\cdot\!() = printf(i, j);\\ \;\;\;\!\cdot\!i = i + 1;\!\cdot\!j = j + 1;\!\cdot\!1\!\cdot\!fv_{1} = 0;\!\cdot\!() = printf(a, fv_{1});)*\!\cdot\!(\neg a_{13} \wedge \neg b_{14})\end{array}\right.$.
.3 $\left\{\begin{array}{l}Cond: \neg c_{17} \\ k_{1}=((a_{5} \vee b_{6})\!\cdot\!() = printf(i, j);\\ \;\;\;\!\cdot\!i = i + 1;\!\cdot\!j = j + 1;)*\!\cdot\!(\neg a_{5} \wedge \neg b_{6})\\ k_{2}=((a_{13} \vee b_{14})\!\cdot\!a = i + j;\!\cdot\!() = printf(i, j);\\ \;\;\;\!\cdot\!i = i + 1;\!\cdot\!j = j + 1;\!\cdot\!(c_{17}\!\cdot\!fv_{1} = 0;\!\cdot\!() = printf(a, fv_{1}); + \neg c_{17}\!\cdot\!1))*\!\cdot\!(\neg a_{13} \wedge \neg b_{14})\end{array}\right.$.
.4 AComplete.
.5 $\left\{\begin{array}{l}Axioms: \{D=1\}\\ k_{1}=((a_{5} \vee b_{6})\!\cdot\!() = printf(i, j);\\ \;\;\;\!\cdot\!i = i + 1;\!\cdot\!j = j + 1;)*\!\cdot\!(\neg a_{5} \wedge \neg b_{6})\\ k_{2}=((a_{13} \vee b_{14})\!\cdot\!a = i + j;\!\cdot\!() = printf(i, j);\\ \;\;\;\!\cdot\!i = i + 1;\!\cdot\!j = j + 1;\!\cdot\!1\!\cdot\!1)*\!\cdot\!(\neg a_{13} \wedge \neg b_{14})\end{array}\right.$.
}

\end{scriptsize}
\medskip
\emph{Remaining 10 solutions ommitted for brevity.}
\subsection{Example synthesized solution for benchmark \texttt{01sendrecv.c}}
\begin{scriptsize}
\dirtree{%
.1 solution.
.2 (Partial), cond: $c > 0$.
.3 $\left\{\begin{array}{l}Cond: d_{30} \\ k_{1}=(a_{8}\!\cdot\!b = recv();\!\cdot\!(c_{15}\!\cdot\!auth = check(b);\!\cdot\!(b_{12}\!\cdot\!n = constructReply();\!\cdot\!() = send(n);\\ \;\;\; + \neg b_{12}\!\cdot\!1) + \neg c_{15}\!\cdot\!() = log(b);)\!\cdot\!x = x - 1;)*\!\cdot\!\neg a_{8}\\ k_{2}=(a_{22}\!\cdot\!b = recv();\!\cdot\!(d_{30}\!\cdot\!() = log(b); + \neg d_{30}\!\cdot\!1)\!\cdot\!(c_{28}\!\cdot\!n = constructReply();\!\cdot\!() = send(n);\\ \;\;\;\!\cdot\!(d_{25}\!\cdot\!() = log(n); + \neg d_{25}\!\cdot\!1) + \neg c_{28}\!\cdot\!1)\!\cdot\!x = x - 1;)*\!\cdot\!\neg a_{22}\end{array}\right.$.
.4 (Partial), cond: $b > 0$.
.5 $\left\{\begin{array}{l}Cond: c_{28} \\ k_{1}=(a_{8}\!\cdot\!b = recv();\!\cdot\!(c_{15}\!\cdot\!auth = check(b);\!\cdot\!(b_{12}\!\cdot\!n = constructReply();\!\cdot\!() = send(n);\\ \;\;\; + \neg b_{12}\!\cdot\!1) + \neg c_{15}\!\cdot\!() = log(b);)\!\cdot\!x = x - 1;)*\!\cdot\!\neg a_{8}\\ k_{2}=(a_{22}\!\cdot\!b = recv();\!\cdot\!1\!\cdot\!() = log(b);\!\cdot\!(c_{28}\!\cdot\!n = constructReply();\!\cdot\!() = send(n);\\ \;\;\;\!\cdot\!() = log(n); + \neg c_{28}\!\cdot\!1)\!\cdot\!x = x - 1;)*\!\cdot\!\neg a_{22}\end{array}\right.$.
.6 (Partial), cond: $b > 0$.
.7 $\left\{\begin{array}{l}Cond: c_{15} \\ k_{1}=(a_{8}\!\cdot\!b = recv();\!\cdot\!(c_{15}\!\cdot\!auth = check(b);\!\cdot\!(b_{12}\!\cdot\!n = constructReply();\!\cdot\!() = send(n);\\ \;\;\; + \neg b_{12}\!\cdot\!1) + \neg c_{15}\!\cdot\!() = log(b);)\!\cdot\!x = x - 1;)*\!\cdot\!\neg a_{8}\\ k_{2}=(a_{22}\!\cdot\!b = recv();\!\cdot\!1\!\cdot\!() = log(b);\!\cdot\!1\!\cdot\!n = constructReply();\!\cdot\!() = send(n);\\ \;\;\;\!\cdot\!() = log(n);\!\cdot\!x = x - 1;)*\!\cdot\!\neg a_{22}\end{array}\right.$.
.8 (Partial), cond: $auth > 0$.
.9 $\left\{\begin{array}{l}Cond: b_{12} \\ k_{1}=(a_{8}\!\cdot\!b = recv();\!\cdot\!1\!\cdot\!auth = check(b);\!\cdot\!(b_{12}\!\cdot\!n = constructReply();\!\cdot\!() = send(n);\\ \;\;\; + \neg b_{12}\!\cdot\!1)\!\cdot\!x = x - 1;)*\!\cdot\!\neg a_{8}\\ k_{2}=(a_{22}\!\cdot\!b = recv();\!\cdot\!1\!\cdot\!() = log(b);\!\cdot\!1\!\cdot\!n = constructReply();\!\cdot\!() = send(n);\\ \;\;\;\!\cdot\!() = log(n);\!\cdot\!x = x - 1;)*\!\cdot\!\neg a_{22}\end{array}\right.$.
.10 AComplete.
.11 $\left\{\begin{array}{l}Axioms: \{I=1, J=1, K=1, M=1\}\\ k_{1}=(a_{8}\!\cdot\!b = recv();\!\cdot\!1\!\cdot\!auth = check(b);\!\cdot\!1\!\cdot\!n = constructReply();\!\cdot\!() = send(n);\\ \;\;\;\!\cdot\!x = x - 1;)*\!\cdot\!\neg a_{8}\\ k_{2}=(a_{22}\!\cdot\!b = recv();\!\cdot\!1\!\cdot\!() = log(b);\!\cdot\!1\!\cdot\!n = constructReply();\!\cdot\!() = send(n);\\ \;\;\;\!\cdot\!() = log(n);\!\cdot\!x = x - 1;)*\!\cdot\!\neg a_{22}\end{array}\right.$.
.5 $\left\{\begin{array}{l}Cond: \neg c_{28} \\ k_{1}=(a_{8}\!\cdot\!b = recv();\!\cdot\!(c_{15}\!\cdot\!auth = check(b);\!\cdot\!(b_{12}\!\cdot\!n = constructReply();\!\cdot\!() = send(n);\\ \;\;\; + \neg b_{12}\!\cdot\!1) + \neg c_{15}\!\cdot\!() = log(b);)\!\cdot\!x = x - 1;)*\!\cdot\!\neg a_{8}\\ k_{2}=(a_{22}\!\cdot\!b = recv();\!\cdot\!1\!\cdot\!() = log(b);\!\cdot\!(c_{28}\!\cdot\!n = constructReply();\!\cdot\!() = send(n);\\ \;\;\;\!\cdot\!() = log(n); + \neg c_{28}\!\cdot\!1)\!\cdot\!x = x - 1;)*\!\cdot\!\neg a_{22}\end{array}\right.$.
.6 (Partial), cond: $b > 0$.
.7 $\left\{\begin{array}{l}Cond: c_{15} \\ k_{1}=(a_{8}\!\cdot\!b = recv();\!\cdot\!(c_{15}\!\cdot\!auth = check(b);\!\cdot\!(b_{12}\!\cdot\!n = constructReply();\!\cdot\!() = send(n);\\ \;\;\; + \neg b_{12}\!\cdot\!1) + \neg c_{15}\!\cdot\!() = log(b);)\!\cdot\!x = x - 1;)*\!\cdot\!\neg a_{8}\\ k_{2}=(a_{22}\!\cdot\!b = recv();\!\cdot\!1\!\cdot\!() = log(b);\!\cdot\!1\!\cdot\!1\!\cdot\!x = x - 1;)*\!\cdot\!\neg a_{22}\end{array}\right.$.
.8 (Partial), cond: $auth > 0$.
.9 $\left\{\begin{array}{l}Cond: \neg b_{12} \\ k_{1}=(a_{8}\!\cdot\!b = recv();\!\cdot\!1\!\cdot\!auth = check(b);\!\cdot\!(b_{12}\!\cdot\!n = constructReply();\!\cdot\!() = send(n);\\ \;\;\; + \neg b_{12}\!\cdot\!1)\!\cdot\!x = x - 1;)*\!\cdot\!\neg a_{8}\\ k_{2}=(a_{22}\!\cdot\!b = recv();\!\cdot\!1\!\cdot\!() = log(b);\!\cdot\!1\!\cdot\!1\!\cdot\!x = x - 1;)*\!\cdot\!\neg a_{22}\end{array}\right.$.
.10 AComplete.
.11 $\left\{\begin{array}{l}Axioms: \{I=1, J=1, K=1\}\\ k_{1}=(a_{8}\!\cdot\!b = recv();\!\cdot\!1\!\cdot\!auth = check(b);\!\cdot\!1\!\cdot\!1\!\cdot\!x = x - 1;)*\!\cdot\!\neg a_{8}\\ k_{2}=(a_{22}\!\cdot\!b = recv();\!\cdot\!1\!\cdot\!() = log(b);\!\cdot\!1\!\cdot\!1\!\cdot\!x = x - 1;)*\!\cdot\!\neg a_{22}\end{array}\right.$.
.7 $\left\{\begin{array}{l}Cond: \neg c_{15} \\ k_{1}=(a_{8}\!\cdot\!b = recv();\!\cdot\!(c_{15}\!\cdot\!auth = check(b);\!\cdot\!(b_{12}\!\cdot\!n = constructReply();\!\cdot\!() = send(n);\\ \;\;\; + \neg b_{12}\!\cdot\!1) + \neg c_{15}\!\cdot\!() = log(b);)\!\cdot\!x = x - 1;)*\!\cdot\!\neg a_{8}\\ k_{2}=(a_{22}\!\cdot\!b = recv();\!\cdot\!1\!\cdot\!() = log(b);\!\cdot\!1\!\cdot\!1\!\cdot\!x = x - 1;)*\!\cdot\!\neg a_{22}\end{array}\right.$.
.8 AComplete.
.9 $\left\{\begin{array}{l}Axioms: \{I=1, J=1, K=1\}\\ k_{1}=(a_{8}\!\cdot\!b = recv();\!\cdot\!1\!\cdot\!() = log(b);\!\cdot\!x = x - 1;)*\!\cdot\!\neg a_{8}\\ k_{2}=(a_{22}\!\cdot\!b = recv();\!\cdot\!1\!\cdot\!() = log(b);\!\cdot\!1\!\cdot\!1\!\cdot\!x = x - 1;)*\!\cdot\!\neg a_{22}\end{array}\right.$.
.3 $\left\{\begin{array}{l}Cond: \neg d_{30} \\ k_{1}=(a_{8}\!\cdot\!b = recv();\!\cdot\!(c_{15}\!\cdot\!auth = check(b);\!\cdot\!(b_{12}\!\cdot\!n = constructReply();\!\cdot\!() = send(n);\\ \;\;\; + \neg b_{12}\!\cdot\!1) + \neg c_{15}\!\cdot\!() = log(b);)\!\cdot\!x = x - 1;)*\!\cdot\!\neg a_{8}\\ k_{2}=(a_{22}\!\cdot\!b = recv();\!\cdot\!(d_{30}\!\cdot\!() = log(b); + \neg d_{30}\!\cdot\!1)\!\cdot\!(c_{28}\!\cdot\!n = constructReply();\!\cdot\!() = send(n);\\ \;\;\;\!\cdot\!(d_{25}\!\cdot\!() = log(n); + \neg d_{25}\!\cdot\!1) + \neg c_{28}\!\cdot\!1)\!\cdot\!x = x - 1;)*\!\cdot\!\neg a_{22}\end{array}\right.$.
.4 (Partial), cond: $b > 0$.
.5 $\left\{\begin{array}{l}Cond: \neg c_{28} \\ k_{1}=(a_{8}\!\cdot\!b = recv();\!\cdot\!(c_{15}\!\cdot\!auth = check(b);\!\cdot\!(b_{12}\!\cdot\!n = constructReply();\!\cdot\!() = send(n);\\ \;\;\; + \neg b_{12}\!\cdot\!1) + \neg c_{15}\!\cdot\!() = log(b);)\!\cdot\!x = x - 1;)*\!\cdot\!\neg a_{8}\\ k_{2}=(a_{22}\!\cdot\!b = recv();\!\cdot\!1\!\cdot\!1\!\cdot\!(c_{28}\!\cdot\!n = constructReply();\!\cdot\!() = send(n);\\ \;\;\;\!\cdot\!1 + \neg c_{28}\!\cdot\!1)\!\cdot\!x = x - 1;)*\!\cdot\!\neg a_{22}\end{array}\right.$.
.6 (Partial), cond: $b > 0$.
.7 $\left\{\begin{array}{l}Cond: c_{15} \\ k_{1}=(a_{8}\!\cdot\!b = recv();\!\cdot\!(c_{15}\!\cdot\!auth = check(b);\!\cdot\!(b_{12}\!\cdot\!n = constructReply();\!\cdot\!() = send(n);\\ \;\;\; + \neg b_{12}\!\cdot\!1) + \neg c_{15}\!\cdot\!() = log(b);)\!\cdot\!x = x - 1;)*\!\cdot\!\neg a_{8}\\ k_{2}=(a_{22}\!\cdot\!b = recv();\!\cdot\!1\!\cdot\!1\!\cdot\!1\!\cdot\!1\!\cdot\!x = x - 1;)*\!\cdot\!\neg a_{22}\end{array}\right.$.
.8 (Partial), cond: $auth > 0$.
.9 $\left\{\begin{array}{l}Cond: \neg b_{12} \\ k_{1}=(a_{8}\!\cdot\!b = recv();\!\cdot\!1\!\cdot\!auth = check(b);\!\cdot\!(b_{12}\!\cdot\!n = constructReply();\!\cdot\!() = send(n);\\ \;\;\; + \neg b_{12}\!\cdot\!1)\!\cdot\!x = x - 1;)*\!\cdot\!\neg a_{8}\\ k_{2}=(a_{22}\!\cdot\!b = recv();\!\cdot\!1\!\cdot\!1\!\cdot\!1\!\cdot\!1\!\cdot\!x = x - 1;)*\!\cdot\!\neg a_{22}\end{array}\right.$.
.10 AComplete.
.11 $\left\{\begin{array}{l}Axioms: \{I=1, K=1\}\\ k_{1}=(a_{8}\!\cdot\!b = recv();\!\cdot\!1\!\cdot\!auth = check(b);\!\cdot\!1\!\cdot\!1\!\cdot\!x = x - 1;)*\!\cdot\!\neg a_{8}\\ k_{2}=(a_{22}\!\cdot\!b = recv();\!\cdot\!1\!\cdot\!1\!\cdot\!1\!\cdot\!1\!\cdot\!x = x - 1;)*\!\cdot\!\neg a_{22}\end{array}\right.$.
.7 $\left\{\begin{array}{l}Cond: \neg c_{15} \\ k_{1}=(a_{8}\!\cdot\!b = recv();\!\cdot\!(c_{15}\!\cdot\!auth = check(b);\!\cdot\!(b_{12}\!\cdot\!n = constructReply();\!\cdot\!() = send(n);\\ \;\;\; + \neg b_{12}\!\cdot\!1) + \neg c_{15}\!\cdot\!() = log(b);)\!\cdot\!x = x - 1;)*\!\cdot\!\neg a_{8}\\ k_{2}=(a_{22}\!\cdot\!b = recv();\!\cdot\!1\!\cdot\!1\!\cdot\!1\!\cdot\!1\!\cdot\!x = x - 1;)*\!\cdot\!\neg a_{22}\end{array}\right.$.
.8 AComplete.
.9 $\left\{\begin{array}{l}Axioms: \{I=1, K=1\}\\ k_{1}=(a_{8}\!\cdot\!b = recv();\!\cdot\!1\!\cdot\!() = log(b);\!\cdot\!x = x - 1;)*\!\cdot\!\neg a_{8}\\ k_{2}=(a_{22}\!\cdot\!b = recv();\!\cdot\!1\!\cdot\!1\!\cdot\!1\!\cdot\!1\!\cdot\!x = x - 1;)*\!\cdot\!\neg a_{22}\end{array}\right.$.
}

\end{scriptsize}
\medskip
\emph{Remaining 73 solutions ommitted for brevity.}
\subsection{Example synthesized solution for benchmark \texttt{01toggle.c}}
\begin{scriptsize}
\dirtree{%
.1 solution.
.2 AComplete.
.3 $\left\{\begin{array}{l}Axioms: \{b=!c\}\\ k_{1}=i = 0;\!\cdot\!(a_{5}\!\cdot\!n = nondet();\!\cdot\!(b_{9}\!\cdot\!t = 1; + \neg b_{9}\!\cdot\!t = 0;)\!\cdot\!i = i + 1;)*\!\cdot\!\neg a_{5}\\ k_{2}=i = 0;\!\cdot\!(a_{14}\!\cdot\!n = nondet();\!\cdot\!(c_{18}\!\cdot\!t = 0; + \neg c_{18}\!\cdot\!t = 1;)\!\cdot\!i = i + 1;)*\!\cdot\!\neg a_{14}\end{array}\right.$.
}

\end{scriptsize}
\subsection{Example synthesized solution for benchmark \texttt{02altern.c}}
\begin{scriptsize}
\dirtree{%
.1 solution.
.2 AComplete.
.3 $\left\{\begin{array}{l}Axioms: \{T=1, B=1\}\\ k_{1}=(a_{5}\!\cdot\!x = x - 1;\!\cdot\!() = eventA();\!\cdot\!() = eventB();)*\!\cdot\!\neg a_{5}\\ k_{2}=(a_{11}\!\cdot\!x = x - 1;\!\cdot\!() = eventB();\!\cdot\!() = eventA();)*\!\cdot\!\neg a_{11}\end{array}\right.$.
}

\end{scriptsize}
\subsection{Example synthesized solution for benchmark \texttt{02cdown.c}}
\begin{scriptsize}
\dirtree{%
.1 solution.
.2 AComplete.
.3 $\left\{\begin{array}{l}Axioms: \{X=A\}\\ k_{1}=() = eventA();\!\cdot\!(a_{7}\!\cdot\!x = x - 1;)*\!\cdot\!\neg a_{7}\!\cdot\!() = eventB();\!\cdot\!() = eventA();\\ k_{2}=() = eventA();\!\cdot\!(a_{14}\!\cdot\!x = x - 2;)*\!\cdot\!\neg a_{14}\!\cdot\!() = eventB();\!\cdot\!() = eventA();\end{array}\right.$.
}

\end{scriptsize}
\subsection{Example synthesized solution for benchmark \texttt{02foil.c}}
\begin{scriptsize}
\dirtree{%
.1 solution.
.2 AComplete.
.3 $\left\{\begin{array}{l}Axioms: \{E=V\}\\ k_{1}=() = eventA();\!\cdot\!() = eventB();\!\cdot\!() = eventB();\\ k_{2}=() = eventA();\!\cdot\!() = eventA();\!\cdot\!() = eventB();\end{array}\right.$.
}

\end{scriptsize}
\subsection{Example synthesized solution for benchmark \texttt{03buffer.c}}
\begin{scriptsize}
\dirtree{%
.1 solution.
.2 (Partial), cond: $brk < 1$.
.3 $\left\{\begin{array}{l}Cond: a_{6} \\ k_{1}=fv_{1} = 1024;\!\cdot\!buffer = array\_alloc(fv_{1});\!\cdot\!i = 0;\!\cdot\!brk = 0;\!\cdot\!(a_{6}\!\cdot\!c = getchar();\\ \;\;\;\!\cdot\!(c_{12}\!\cdot\!brk = 1; + \neg c_{12}\!\cdot\!(b_{11}\!\cdot\!brk = 1; + \neg b_{11}\!\cdot\!() = array\_write(buffer, i, c);\!\cdot\!i = i + 1;)))*\!\cdot\!\neg a_{6}\\ k_{2}=fv_{2} = 1024;\!\cdot\!buffer = array\_alloc(fv_{2});\!\cdot\!i = 0;\!\cdot\!brk = 0;\!\cdot\!(a_{19}\!\cdot\!c = getchar();\\ \;\;\;\!\cdot\!(c_{24}\!\cdot\!brk = 1; + \neg c_{24}\!\cdot\!(b_{23}\!\cdot\!brk = 1; + \neg b_{23}\!\cdot\!brk = 1;)))*\!\cdot\!\neg a_{19}\end{array}\right.$.
.4 AComplete.
.5 $\left\{\begin{array}{l}Axioms: \{M=1, N=1, Y=L, W=1, O=1, P=1\}\\ k_{1}=fv_{1} = 1024;\!\cdot\!buffer = array\_alloc(fv_{1});\!\cdot\!i = 0;\!\cdot\!brk = 0;\!\cdot\!1\!\cdot\!(a_{6}\!\cdot\!c = getchar();\\ \;\;\;\!\cdot\!(c_{12}\!\cdot\!brk = 1; + \neg c_{12}\!\cdot\!(b_{11}\!\cdot\!brk = 1; + \neg b_{11}\!\cdot\!() = array\_write(buffer, i, c);\!\cdot\!i = i + 1;)))*\!\cdot\!\neg a_{6}\\ k_{2}=fv_{2} = 1024;\!\cdot\!buffer = array\_alloc(fv_{2});\!\cdot\!i = 0;\!\cdot\!brk = 0;\!\cdot\!(a_{19}\!\cdot\!c = getchar();\\ \;\;\;\!\cdot\!(c_{24}\!\cdot\!brk = 1; + \neg c_{24}\!\cdot\!(b_{23}\!\cdot\!brk = 1; + \neg b_{23}\!\cdot\!brk = 1;)))*\!\cdot\!\neg a_{19}\end{array}\right.$.
.3 $\left\{\begin{array}{l}Cond: \neg a_{6} \\ k_{1}=fv_{1} = 1024;\!\cdot\!buffer = array\_alloc(fv_{1});\!\cdot\!i = 0;\!\cdot\!brk = 0;\!\cdot\!(a_{6}\!\cdot\!c = getchar();\\ \;\;\;\!\cdot\!(c_{12}\!\cdot\!brk = 1; + \neg c_{12}\!\cdot\!(b_{11}\!\cdot\!brk = 1; + \neg b_{11}\!\cdot\!() = array\_write(buffer, i, c);\!\cdot\!i = i + 1;)))*\!\cdot\!\neg a_{6}\\ k_{2}=fv_{2} = 1024;\!\cdot\!buffer = array\_alloc(fv_{2});\!\cdot\!i = 0;\!\cdot\!brk = 0;\!\cdot\!(a_{19}\!\cdot\!c = getchar();\\ \;\;\;\!\cdot\!(c_{24}\!\cdot\!brk = 1; + \neg c_{24}\!\cdot\!(b_{23}\!\cdot\!brk = 1; + \neg b_{23}\!\cdot\!brk = 1;)))*\!\cdot\!\neg a_{19}\end{array}\right.$.
.4 (Partial), cond: $brk < 1$.
.5 $\left\{\begin{array}{l}Cond: \neg a_{19} \\ k_{1}=fv_{1} = 1024;\!\cdot\!buffer = array\_alloc(fv_{1});\!\cdot\!i = 0;\!\cdot\!brk = 0;\!\cdot\!0\!\cdot\!0\\ k_{2}=fv_{2} = 1024;\!\cdot\!buffer = array\_alloc(fv_{2});\!\cdot\!i = 0;\!\cdot\!brk = 0;\!\cdot\!(a_{19}\!\cdot\!c = getchar();\\ \;\;\;\!\cdot\!(c_{24}\!\cdot\!brk = 1; + \neg c_{24}\!\cdot\!(b_{23}\!\cdot\!brk = 1; + \neg b_{23}\!\cdot\!brk = 1;)))*\!\cdot\!\neg a_{19}\end{array}\right.$.
.6 AComplete.
.7 $\left\{\begin{array}{l}Axioms: \{M=1, N=1, Y=L, W=1, O=1\}\\ k_{1}=fv_{1} = 1024;\!\cdot\!buffer = array\_alloc(fv_{1});\!\cdot\!i = 0;\!\cdot\!brk = 0;\!\cdot\!0\!\cdot\!0\\ k_{2}=fv_{2} = 1024;\!\cdot\!buffer = array\_alloc(fv_{2});\!\cdot\!i = 0;\!\cdot\!brk = 0;\!\cdot\!0\!\cdot\!0\end{array}\right.$.
}

\end{scriptsize}
\medskip
\emph{Remaining 190 solutions ommitted for brevity.}
\subsection{Example synthesized solution for benchmark \texttt{03syscalls.c}}
\begin{scriptsize}
\dirtree{%
.1 solution.
.2 (Complete), cond: $x == 0$.
.3 $\left\{\begin{array}{l}Cond: b_{28} \\ k_{1}=fv_{1} = 1000;\!\cdot\!c = array\_alloc(fv_{1});\!\cdot\!(b_{15}\!\cdot\!fv_{2} = 1;\!\cdot\!() = show(fv_{2}); + \neg b_{15}\!\cdot\!1)\!\cdot\!b = getchar();\\ \;\;\;\!\cdot\!(a_{9}\!\cdot\!() = array\_write(c, b);\!\cdot\!b = b - 1;)*\!\cdot\!\neg a_{9}\!\cdot\!r = array\_read(c);\\ k_{2}=a = nondet();\!\cdot\!c = array\_alloc(a);\!\cdot\!(b_{28}\!\cdot\!fv_{3} = 2;\!\cdot\!() = show(fv_{3}); + \neg b_{28}\!\cdot\!1)\!\cdot\!b = getchar();\\ \;\;\;\!\cdot\!((a_{21} \wedge c_{22})\!\cdot\!() = array\_write(c, b);\!\cdot\!b = b - 1;)*\!\cdot\!(\neg a_{21} \vee \neg c_{22})\!\cdot\!r = array\_read(c);\end{array}\right.$.
.4 AComplete.
.5 $\left\{\begin{array}{l}Axioms: \{K=1, L=T, M=1, P=1, W=1, !a=!c, Q=1, U=1\}\\ k_{1}=fv_{1} = 1000;\!\cdot\!c = array\_alloc(fv_{1});\!\cdot\!(b_{15}\!\cdot\!fv_{2} = 1;\!\cdot\!() = show(fv_{2}); + \neg b_{15}\!\cdot\!1)\!\cdot\!b = getchar();\\ \;\;\;\!\cdot\!(a_{9}\!\cdot\!() = array\_write(c, b);\!\cdot\!b = b - 1;)*\!\cdot\!\neg a_{9}\!\cdot\!r = array\_read(c);\\ k_{2}=a = nondet();\!\cdot\!c = array\_alloc(a);\!\cdot\!1\!\cdot\!(b_{28}\!\cdot\!fv_{3} = 2;\!\cdot\!() = show(fv_{3}); + \neg b_{28}\!\cdot\!1)\!\cdot\!b = getchar();\\ \;\;\;\!\cdot\!((a_{21} \wedge c_{22})\!\cdot\!() = array\_write(c, b);\!\cdot\!b = b - 1;)*\!\cdot\!(\neg a_{21} \vee \neg c_{22})\!\cdot\!r = array\_read(c);\end{array}\right.$.
.3 $\left\{\begin{array}{l}Cond: \neg b_{28} \\ k_{1}=fv_{1} = 1000;\!\cdot\!c = array\_alloc(fv_{1});\!\cdot\!(b_{15}\!\cdot\!fv_{2} = 1;\!\cdot\!() = show(fv_{2}); + \neg b_{15}\!\cdot\!1)\!\cdot\!b = getchar();\\ \;\;\;\!\cdot\!(a_{9}\!\cdot\!() = array\_write(c, b);\!\cdot\!b = b - 1;)*\!\cdot\!\neg a_{9}\!\cdot\!r = array\_read(c);\\ k_{2}=a = nondet();\!\cdot\!c = array\_alloc(a);\!\cdot\!(b_{28}\!\cdot\!fv_{3} = 2;\!\cdot\!() = show(fv_{3}); + \neg b_{28}\!\cdot\!1)\!\cdot\!b = getchar();\\ \;\;\;\!\cdot\!((a_{21} \wedge c_{22})\!\cdot\!() = array\_write(c, b);\!\cdot\!b = b - 1;)*\!\cdot\!(\neg a_{21} \vee \neg c_{22})\!\cdot\!r = array\_read(c);\end{array}\right.$.
.4 AComplete.
.5 $\left\{\begin{array}{l}Axioms: \{K=1, L=T, M=1, P=1, W=1, !a=!c, Q=1, U=1\}\\ k_{1}=fv_{1} = 1000;\!\cdot\!c = array\_alloc(fv_{1});\!\cdot\!(b_{15}\!\cdot\!fv_{2} = 1;\!\cdot\!() = show(fv_{2}); + \neg b_{15}\!\cdot\!1)\!\cdot\!b = getchar();\\ \;\;\;\!\cdot\!(a_{9}\!\cdot\!() = array\_write(c, b);\!\cdot\!b = b - 1;)*\!\cdot\!\neg a_{9}\!\cdot\!r = array\_read(c);\\ k_{2}=a = nondet();\!\cdot\!c = array\_alloc(a);\!\cdot\!1\!\cdot\!(b_{28}\!\cdot\!fv_{3} = 2;\!\cdot\!() = show(fv_{3}); + \neg b_{28}\!\cdot\!1)\!\cdot\!b = getchar();\\ \;\;\;\!\cdot\!((a_{21} \wedge c_{22})\!\cdot\!() = array\_write(c, b);\!\cdot\!b = b - 1;)*\!\cdot\!(\neg a_{21} \vee \neg c_{22})\!\cdot\!r = array\_read(c);\end{array}\right.$.
}

\end{scriptsize}
\medskip
\emph{Remaining 154 solutions ommitted for brevity.}
\subsection{Example synthesized solution for benchmark \texttt{04ident.c}}
\begin{scriptsize}
\dirtree{%
.1 solution.
.2 AComplete.
.3 $\left\{\begin{array}{l}Axioms: \{J=1, K=1, E=B\}\\ k_{1}=err = copyin(uap\_alen, len);\!\cdot\!(b_{17}\!\cdot\!() = fdrop(fp, p); + \neg b_{17}\!\cdot\!1)\!\cdot\!(c_{15}\!\cdot\!len = sa\_len; + \neg c_{15}\!\cdot\!1)\!\cdot\!\\ \;\;\; err = copyout(sa, uap\_asa, len);\!\cdot\!(b_{12}\!\cdot\!(a_{11}\!\cdot\!fv_{1} = 42;\!\cdot\!() = free(sa, fv_{1}); + \neg a_{11}\!\cdot\!1)\!\cdot\!() = fdrop(fp, p); + \neg b_{12}\!\cdot\!1)\!\cdot\!\\ \;\;\; err = copyout(len, uap\_alen, sizeof\_len);\\ k_{2}=err = copyin(uap\_alen, len);\!\cdot\!(b_{32}\!\cdot\!() = fdrop(fp, p); + \neg b_{32}\!\cdot\!1)\!\cdot\!(c_{30}\!\cdot\!len = sa\_len; + \neg c_{30}\!\cdot\!1)\!\cdot\!\\ \;\;\; err = copyout(sa, uap\_asa, len);\!\cdot\!(b_{27}\!\cdot\!(a_{26}\!\cdot\!fv_{2} = 42;\!\cdot\!() = free(sa, fv_{2}); + \neg a_{26}\!\cdot\!1)\!\cdot\!() = fdrop(fp, p); + \neg b_{27}\!\cdot\!1)\!\cdot\!\\ \;\;\; err = copyout(len, uap\_alen, sizeof\_len);\end{array}\right.$.
}

\end{scriptsize}
\medskip
\emph{Remaining 4 solutions ommitted for brevity.}
\subsection{Example synthesized solution for benchmark \texttt{05thttpdEr.c}}
\begin{scriptsize}
\dirtree{%
.1 solution.
.2 (Complete), cond: $keepalive <= 0$.
.3 $\left\{\begin{array}{l}Cond: a_{6} \\ k_{1}=(b_{8}\!\cdot\!(a_{6}\!\cdot\!() = shutdown();\\ \;\;\; + \neg a_{6}\!\cdot\!1) + \neg b_{8}\!\cdot\!() = update\_stats();)\\ k_{2}=(b_{18}\!\cdot\!() = clear\_connection();\!\cdot\!() = shutdown();\\ \;\;\; + \neg b_{18}\!\cdot\!() = update\_stats();)\end{array}\right.$.
.4 AComplete.
.5 $\left\{\begin{array}{l}Axioms: \{I=1\}\\ k_{1}=(b_{8}\!\cdot\!1\!\cdot\!() = shutdown();\\ \;\;\; + \neg b_{8}\!\cdot\!() = update\_stats();)\\ k_{2}=(b_{18}\!\cdot\!() = clear\_connection();\!\cdot\!() = shutdown();\\ \;\;\; + \neg b_{18}\!\cdot\!() = update\_stats();)\end{array}\right.$.
.3 $\left\{\begin{array}{l}Cond: \neg a_{6} \\ k_{1}=(b_{8}\!\cdot\!(a_{6}\!\cdot\!() = shutdown();\\ \;\;\; + \neg a_{6}\!\cdot\!1) + \neg b_{8}\!\cdot\!() = update\_stats();)\\ k_{2}=(b_{18}\!\cdot\!() = clear\_connection();\!\cdot\!() = shutdown();\\ \;\;\; + \neg b_{18}\!\cdot\!() = update\_stats();)\end{array}\right.$.
.4 (Complete), cond: $err > 0$.
.5 $\left\{\begin{array}{l}Cond: \neg b_{18} \\ k_{1}=(b_{8}\!\cdot\!1\!\cdot\!1 + \neg b_{8}\!\cdot\!() = update\_stats();)\\ k_{2}=(b_{18}\!\cdot\!() = clear\_connection();\!\cdot\!() = shutdown();\\ \;\;\; + \neg b_{18}\!\cdot\!() = update\_stats();)\end{array}\right.$.
.6 (Complete), cond: $err > 0$.
.7 $\left\{\begin{array}{l}Cond: \neg b_{8} \\ k_{1}=(b_{8}\!\cdot\!1\!\cdot\!1 + \neg b_{8}\!\cdot\!() = update\_stats();)\\ k_{2}=1\!\cdot\!() = update\_stats();\end{array}\right.$.
.8 AComplete.
.9 $\left\{\begin{array}{l}Axioms: \{I=1\}\\ k_{1}=1\!\cdot\!() = update\_stats();\\ k_{2}=1\!\cdot\!() = update\_stats();\end{array}\right.$.
}

\end{scriptsize}
\medskip
\emph{Remaining 3 solutions ommitted for brevity.}
\subsection{Example synthesized solution for benchmark \texttt{05thttpdWr.c}}
\begin{scriptsize}
\dirtree{%
.1 solution.
.2 (Partial), cond: $compress > 0$.
.3 $\left\{\begin{array}{l}Cond: a_{12} \\ k_{1}=(a_{12}\!\cdot\!() = compress();\!\cdot\!() = write\_headers();\\ \;\;\;\!\cdot\!() = httpd\_ssl\_write(); + \neg a_{12}\!\cdot\!() = write\_headers();\\ \;\;\;\!\cdot\!() = httpd\_ssl\_write();)\!\cdot\!err = write(out);\\ k_{2}=t = nondet();\!\cdot\!(b_{21}\!\cdot\!() = write\_headers();\\ \;\;\; + \neg b_{21}\!\cdot\!1)\!\cdot\!err = write(out);\end{array}\right.$.
.4 (Partial), cond: $t > 0$.
.5 $\left\{\begin{array}{l}Cond: b_{21} \\ k_{1}=1\!\cdot\!() = compress();\!\cdot\!() = write\_headers();\\ \;\;\;\!\cdot\!() = httpd\_ssl\_write();\!\cdot\!err = write(out);\\ k_{2}=t = nondet();\!\cdot\!(b_{21}\!\cdot\!() = write\_headers();\\ \;\;\; + \neg b_{21}\!\cdot\!1)\!\cdot\!err = write(out);\end{array}\right.$.
.6 AComplete.
.7 $\left\{\begin{array}{l}Axioms: \{K=1, M=1, P=1\}\\ k_{1}=1\!\cdot\!() = compress();\!\cdot\!() = write\_headers();\\ \;\;\;\!\cdot\!() = httpd\_ssl\_write();\!\cdot\!err = write(out);\\ k_{2}=t = nondet();\!\cdot\!1\!\cdot\!() = write\_headers();\\ \;\;\;\!\cdot\!err = write(out);\end{array}\right.$.
.3 $\left\{\begin{array}{l}Cond: \neg a_{12} \\ k_{1}=(a_{12}\!\cdot\!() = compress();\!\cdot\!() = write\_headers();\\ \;\;\;\!\cdot\!() = httpd\_ssl\_write(); + \neg a_{12}\!\cdot\!() = write\_headers();\\ \;\;\;\!\cdot\!() = httpd\_ssl\_write();)\!\cdot\!err = write(out);\\ k_{2}=t = nondet();\!\cdot\!(b_{21}\!\cdot\!() = write\_headers();\\ \;\;\; + \neg b_{21}\!\cdot\!1)\!\cdot\!err = write(out);\end{array}\right.$.
.4 (Partial), cond: $t > 0$.
.5 $\left\{\begin{array}{l}Cond: b_{21} \\ k_{1}=1\!\cdot\!() = write\_headers();\\ \;\;\;\!\cdot\!() = httpd\_ssl\_write();\!\cdot\!err = write(out);\\ k_{2}=t = nondet();\!\cdot\!(b_{21}\!\cdot\!() = write\_headers();\\ \;\;\; + \neg b_{21}\!\cdot\!1)\!\cdot\!err = write(out);\end{array}\right.$.
.6 AComplete.
.7 $\left\{\begin{array}{l}Axioms: \{K=1, P=1\}\\ k_{1}=1\!\cdot\!() = write\_headers();\\ \;\;\;\!\cdot\!() = httpd\_ssl\_write();\!\cdot\!err = write(out);\\ k_{2}=t = nondet();\!\cdot\!1\!\cdot\!() = write\_headers();\\ \;\;\;\!\cdot\!err = write(out);\end{array}\right.$.
}

\end{scriptsize}
\medskip
\emph{Remaining 60 solutions ommitted for brevity.}

\end{document}